\title{Uniform Generation of Temporal Graphs with Given Degrees}
\author{Daniel Allendorf}{Goethe University Frankfurt, Germany}{daniel@ae.cs.uni-frankfurt.de}{https://orcid.org/0000-0002-0549-7576}{}
\authorrunning{D.~Allendorf}
\keywords{Random Graph, Temporal Graph, Uniform Sampling, Degree Sequence, Switching Algorithm}
\def\switch#1{\mathsf{#1}}
\begin{document}

\maketitle

\begin{abstract}
	Uniform sampling from the set $\mathcal{G}(\mathbf{d})$ of graphs with a given degree-sequence $\mathbf{d} = (d_1, \dots, d_n) \in \mathbb N^n$ is a classical problem in the study of random graphs.
	We consider an analogue for temporal graphs in which the edges are labeled with integer timestamps.
	The input to this generation problem is a tuple $\mathbf{D} = (\mathbf{d}, T) \in \mathbb N^n \times \mathbb N_{>0}$ and the task is to output a uniform random sample from the set $\mathcal{G}(\mathbf{D})$ of temporal graphs with degree-sequence $\mathbf{d}$ and timestamps in the interval $[1, T]$.
	By allowing repeated edges with distinct timestamps, $\mathcal{G}(\mathbf{D})$ can be non-empty even if $\mathcal{G}(\mathbf{d})$ is, and as a consequence, existing algorithms are difficult to apply.
	
	We describe an algorithm for this generation problem which runs in expected linear time $O(M)$ if $\Delta^{2+\epsilon} = O(M)$ for some constant $\epsilon > 0$ and $T - \Delta = \Omega(T)$ where $M = \sum_i d_i$ and $\Delta = \max_i d_i$.
	Our algorithm applies the switching method of McKay and Wormald \cite{DBLP:journals/jal/McKayW90} to temporal graphs: we first generate a random temporal \emph{multigraph} and then remove self-loops and duplicated edges with switching operations which rewire the edges in a degree-preserving manner.
\end{abstract}



\section{Introduction}
\label{sec:introduction}
A common problem in network science is the sampling of a graph matching a given degree-sequence.
Formally, given a sequence of integers $\mathbf{d} = (d_1, \dots, d_n)$, we say that a graph $G = (V, E)$ with nodes $V = \{v_1, \dots, v_n\}$ matches $\mathbf{d}$ if the number of incident edges at node $v_i$ equals $d_i$ for each $1 \leq i \leq n$.
We then define $\mathcal{G}(\mathbf{d})$ as the set of all simple graphs (e.g. without loops or multi-edges) matching $\mathbf{d}$ and ask for a uniform random sample $G \in \mathcal{G}(\mathbf{d})$.
Such a sample is useful as it allows us to construct null models for testing the influence of the degrees on other graph properties of interest~\cite{gotelli1996null,DBLP:journals/im/BlitzsteinD11}.
In addition, this sampling problem is tightly related to the task of estimating $|\mathcal{G}(\mathbf{d})|$~\cite{DBLP:journals/iandc/SinclairJ89,DBLP:journals/jal/McKayW90}.

\emph{Temporal graphs} are capable of modeling not only the topology but also the time structure of networks (see \cite{holme2012temporal} or \cite{doi:10.1080/17445760.2012.668546} for an overview).
Possibly the most common type of temporal graph augments each edge of a classical graph with an integer timestamp.
Here, we work by the following definition.
\begin{definition}[Temporal Graph]
	\label{def:temporal-graph}
	A temporal (multi-)graph $G = (V, E)$ consists of a set of nodes $V = \{v_1, \dots, v_n\}$ and a (multi-)set of edges $E = \{e_1, \dots, e_m\}$ where each edge is a tuple $(\{u, v\}, t) \in \{ \{u, v\} : u, v \in V \} \times \mathbb N_{>0}$.
\end{definition}
In terms of semantics, the presence of an edge $(\{u, v\}, t)$ indicates that the nodes $u$ and $v$ are connected at time $t$.
For the purpose of modeling networks, it additionally makes sense to restrict ourselves to \emph{simple} temporal graphs which exclude certain types of edges.
To this end, we call a temporal graph $G = (V, E)$ \emph{simple} if the edge set $E$ contains no loops and no edges between the same nodes with the same timestamp, i.e. iff $u \neq v$ for all $(\{u, v\}, t) \in E$ and $E$ is a set.

Given its prominence in classical graph theory, it is reasonable to assume that parameterizing temporal graphs by the degrees can yield a similarly useful model.
For instance, the distribution of active times of nodes and edges in real-world temporal graphs has been observed to follow a power-law \cite{barabasi2005origin,HARDER2006329,holme2012temporal}, which can be reproduced by sampling a uniform random temporal graph with power-law degrees.
Still, to the best of our knowledge, the problem of generating such graphs has not been considered so far.
In this paper, we study algorithmic techniques of sampling simple temporal graphs uniformly at random, and in particular, focus on the task of providing an exact uniform sample.

\subsection{Related Work}

Conditions for the realizability of a sequence of integers $\mathbf{d} = (d_1, \dots, d_n)$ as a graph were given by Havel~\cite{havel1955remark} and Hakimi~\cite{hakimi1962realizability}.
While the provided proofs can be used to construct a realization $G \in \mathcal{G}(\mathbf{d})$, the resulting graph is deterministic.

A simple way to obtain a uniform random sample $G \in \mathcal{G}(\mathbf{d})$ is to use the \emph{configuration model} of Bender and Canfield~\cite{DBLP:journals/jct/BenderC78} or Bollob{\'a}s~\cite{DBLP:journals/ejc/Bollobas80} to sample random \emph{multigraphs} with sequence $\mathbf{d}$ until a simple graph is found.
Unfortunately, this simple rejection scheme is not efficient as its run time is exponential in the largest degree $\Delta = \max_i d_i$~\cite{DBLP:journals/jal/Wormald84}.

Efficient algorithms have been obtained by using the \emph{switching} method of McKay and Wormald~\cite{DBLP:journals/jal/McKayW90} which arose from the use of switchings for enumeration in McKay~\cite{mckay1985asymptotics}.
The approach is to again start from a random multigraph but instead of rejecting non-simple graphs outright, loops and multi-edges are removed with switching operations which rewire the edges while preserving the degrees of the nodes.
In addition to an algorithm with expected runtime $O(m)$ for generating graphs with $m$ edges and bounded degrees $\Delta^4 = O(m)$~\cite{DBLP:journals/jal/McKayW90, DBLP:journals/rsa/ArmanGW21}, efficient algorithms which use this method have been given for generating $d$-regular graphs in expected time $O(n d + d^4)$ if $d = o(\sqrt{n})$~\cite{DBLP:journals/siamcomp/GaoW17, DBLP:journals/rsa/ArmanGW21}, and graphs with power-law degrees in expected time $O(n)$ if the exponent satisfies $\gamma > (21 + \sqrt{61}) / 10$~\cite{DBLP:conf/soda/GaoW18, DBLP:journals/rsa/ArmanGW21}.

Finally, there exist efficient solutions to various relaxations of the problem.
For instance, we may allow the graph to match the sequence only in expectation~\cite{chung2002connected, DBLP:conf/waw/MillerH11}, or use a Markov chain to approximate the uniform distribution~\cite{DBLP:journals/tcs/JerrumS90, rao1996, DBLP:journals/corr/Zhao13b, DBLP:journals/ejc/ErdosGMMSS22}.
See also~\cite{greenhillsurvey} for a survey of the most relevant techniques and results.

\subsection{Our Contribution}

We give results on sampling temporal graphs with given degrees and lifetime.
Formally, given a tuple $\mathbf{D} = (\mathbf{d}, T)$, we say that a temporal multigraph $G$ with nodes $V = \{v_1, \dots, v_n\}$ matches $\mathbf{D}$ if the sum of the numbers of incident edges at node $v_i$ over all $T$ timestamps equals $d_i$ for each $1 \leq i \leq n$.
If at least one simple graph matches $\mathbf{D}$, we call $\mathbf{D}$ \emph{realizable}.
The temporal graph generation problem now asks to output a sample $G \in \mathcal{G}(\mathbf{D})$ uniformly at random from the set $\mathcal{G}(\mathbf{D})$ of matching simple temporal graphs.
Note that by allowing repeated connections, a given tuple $\mathbf{D} = (\mathbf{d}, T)$ can be realizable as a simple temporal graph even if $\mathbf{d}$ is not realizable as a classical simple graph.
More severely, consider the sequence  $\mathbf{d}_t, 1 \leq t \leq T$ of degree-sequences induced by the individual timestamps, then there exist sequences which satisfy $\sum_t \mathbf{d}_t = \mathbf{d}$ but are not realizable as a sequence of simple graphs due to loops or multi-edges which cannot be rewired with switchings which preserve $\mathbf{d}_t, 1 \leq t \leq T$.
In other words, even distributing the degrees among the timestamps is not trivial, and existing algorithms are difficult to apply to sampling temporal graphs.
Instead, switchings are required which operate on the temporal graph as a whole and re-assign timestamps where necessary, and this is the key feature of the algorithm which we describe here.
This algorithm, called \textsc{T-Gen}, generates simple temporal graphs with bounded degrees.
Our main result is as follows (see \autoref{subsec:thm2-proof} for proof details).
\begin{theorem}
	\label{thm:t-gen}
	Given a realizable tuple $\mathbf{D} = (\mathbf{d}, T)$ which satisfies $\Delta^{2+\epsilon} = O(M)$ for a constant $\epsilon > 0$ and $T - \Delta = \Omega(T)$, \textsc{T-Gen} outputs a uniform random sample $G \in \mathcal{G}(\mathbf{D})$ in expected time $O(M)$.
\end{theorem}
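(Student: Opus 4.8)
The plan is to adapt the switching method of McKay and Wormald to the temporal setting in three phases, and then establish uniformity and the running-time bound. First I would specify a \emph{temporal configuration model} that produces a random temporal multigraph matching $\mathbf{D}$: assign to each node $v_i$ a set of $d_i$ half-edges (stubs), draw a uniform random perfect matching on the stubs to form $M/2$ edges, and then assign each edge a timestamp from $[1,T]$. The natural way to make this a uniform generator over temporal multigraphs—and to make the later switchings tractable—is to let the timestamp assignment itself be part of the combinatorial object, so that a "defect" is either a loop (an edge $(\{v,v\},t)$) or a temporal multi-edge (two edges $(\{u,v\},t),(\{u,v\},t)$ sharing both endpoints and the same timestamp). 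Crucially, an edge between $u$ and $v$ repeated at \emph{distinct} timestamps is \emph{not} a defect, which is exactly the feature that distinguishes this problem from the classical one and is what allows $\mathcal{G}(\mathbf{D})$ to be nonempty when $\mathcal{G}(\mathbf{d})$ is empty.

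Next I would define the switching operations. For a classical loop switching one picks a loop together with two other edges and rewires the six endpoints to destroy the loop while preserving all degrees; the temporal analogue must additionally re-assign timestamps so that no new same-time multi-edge is created and the rewiring respects the per-timestamp structure. I would define a loop-removal switching and a multi-edge-removal switching, each of which selects a constant number of auxiliary edges uniformly at random, checks a short list of validity conditions (the chosen stubs are distinct, the rewiring introduces no new loop or same-time multi-edge), and either performs the rewiring or rejects. The heart of the argument is a counting lemma: for a multigraph with a given defect profile, I would count the number $f(G)$ of forward switchings that reduce a specified defect and the number $b(G')$ of backward switchings that could have produced the resulting $G'$, and show these counts are tightly concentrated around their ideal values as functions of $M$, $\Delta$, and $T$ only. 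The extra timestamp degree of freedom enters these counts as factors of roughly $T$ (or $T-\Delta$) per re-assigned edge, which is precisely where the hypothesis $T-\Delta=\Omega(T)$ is needed to guarantee enough valid timestamp choices and to keep the forward/backward ratios bounded.

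For uniformity I would use the standard rejection-and-restart scheme of McKay--Wormald: whenever a switching is applied, accept it only with a probability chosen so that every simple graph in $\mathcal{G}(\mathbf{D})$ is produced with \emph{exactly} equal probability, using the computed forward/backward counts to set the acceptance probabilities; on rejection the whole sample is discarded and regenerated. Correctness then reduces to verifying that, conditioned on successful termination, the induced distribution is flat over $\mathcal{G}(\mathbf{D})$, which follows from the ratio bounds established in the counting lemma. For the running time I would bound the expected number of initial defects (loops and same-time multi-edges) in the configuration model: the expected number of loops is $O(\Delta^2/M)$-type and the expected number of same-time multi-edges is reduced by a factor $1/T$ relative to the classical bound, so under $\Delta^{2+\epsilon}=O(M)$ the total expected number of defects is $o(M)$ and concentrates. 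Each switching costs $O(1)$ amortized using appropriate data structures for sampling random edges and stubs, the number of switchings needed is at most the number of defects, and the acceptance probability of a full run is bounded below by a constant, so the expected number of restarts is $O(1)$ and the total expected cost is $O(M)$.

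The main obstacle I expect is the counting lemma for the switchings—specifically, designing the timestamp re-assignment so that a single clean class of forward switchings suffices to remove each defect type while keeping the forward and backward counts simultaneously controllable. In the classical setting one sometimes needs several switching types (and a hierarchy of "boosting" switchings) to handle high-multiplicity defects; here the same issue arises, compounded by the requirement that re-assigned timestamps avoid collisions, so the careful part will be verifying that the validity conditions fail only on an $O(\Delta/M)$ or $O(\Delta/T)$ fraction of candidate switchings and that these error terms do not accumulate across the $o(M)$ switchings performed.
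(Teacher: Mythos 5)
Your overall architecture is the same as the paper's: a temporal configuration model whose output is uniform within each defect class, switchings that remove loops and temporal multi-edges while reassigning timestamps, and McKay--Wormald f-/b-rejections to flatten the distribution. However, there is a genuine gap at exactly the point you defer to your ``counting lemma,'' and it is not a technicality---it is the central difficulty the paper is organized around. In any b-rejection for a switching that creates an edge $(\{u,v\},t)$, the number of valid timestamps is $T - c_{u,v}$, where $c_{u,v}$ is the number of timestamps already occupied between $u$ and $v$; the only lower bound valid over the whole class is $T-(\Delta-1)$, because a simple temporal graph may legitimately contain an ordinary multi-edge of multiplicity up to $\Delta-1$. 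With that bound, each switching survives its rejection steps with probability $1-O(\Delta/T)$, and since loop removal alone runs for up to $B_L=\Theta(\Delta)$ iterations, the probability of finishing without a restart is $\exp(-O(\Delta^2/T))$. The theorem's hypotheses permit $\Delta^2/T\to\infty$ (e.g.\ $T=2\Delta$, $\Delta\to\infty$ satisfies $T-\Delta=\Omega(T)$), so your claim that ``the acceptance probability of a full run is bounded below by a constant'' fails, and the expected number of restarts is super-constant, indeed exponential. Note also that concentration of $f(G)$ and $b(G')$ around typical values cannot substitute here: exact uniformity forces the rejection probabilities to be set against \emph{worst-case} bounds $\overline{f}\geq f(G)$ and $\underline{b}\leq b(G')$ over the whole class, not against typical values, so ``tightly concentrated counts'' is not the right deliverable.

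The paper closes this gap with two ingredients your proposal lacks. First, it proves (\autoref{lem:tcm-kappa-lambda-eta} and \autoref{lem:edge-bound-probability}) that with high probability every ordinary multiplicity encountered during the entire run stays below a constant $\mu=\lfloor 3+2/\epsilon\rfloor$, so the timestamp lower bound can be taken as $T-(\mu-1)$, giving per-iteration acceptance $1-O(1/T)-O(\Delta/M)$ and overall acceptance $\exp(-O(\Delta^2/M)-O(\Delta/T))=\Omega(1)$. Second---and this is the part that cannot be omitted without breaking uniformity---one cannot simply reject runs that produce multiplicities $\geq\mu$, because simple temporal graphs containing such ordinary multi-edges are legitimate members of $\mathcal{G}(\mathbf{D})$ and must be output with exactly the same probability as every other member. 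The paper therefore introduces auxiliary switchings $\mathsf{A}_{m,n}$, $\mathsf{B}_{m,n}$, $\mathsf{C}_{m,n,o,p}$, performed after each main switching against a calibrated identity switching, which map graphs whose specified edges have multiplicity zero onto the graphs where those edges have multiplicity $\geq\mu$; this re-creates the high-multiplicity graphs with the correct probability while allowing the main switchings to use the benign bound $T-(\mu-1)$. You explicitly flag this as the anticipated obstacle (``boosting switchings \dots the same issue arises''), but the proposal neither constructs these switchings nor proves the constant-multiplicity lemma, and without both, the argument does not establish the theorem under the stated hypotheses. (A smaller inaccuracy: the per-iteration cost in the paper is $O(\Delta)$, not $O(1)$ amortized, because the b-quantities require scanning incidence lists; the totals still come to $O(\Delta^2)+O(\Delta^3/T)=O(M)$.)
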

As customary, we assume an underlying sequence of tuples $(\mathbf{D}_M)_{M \in \mathbb N}$ and give the asymptotic runtime as $M \to \infty$.
In particular, the conditions on the input tuple can be understood as being imposed on functions $\Delta(M)$, $T(M)$.

\subsection{Overview}
\label{subsec:techniques}

The general idea of \textsc{T-Gen} is to apply the switching method of~\cite{DBLP:journals/jal/McKayW90} to temporal graphs.
To this end, we define a \emph{temporal configuration model} (see \autoref{sec:temporal-cm}) which samples a random temporal multigraph with the property that the probability of a given graph only depends on the contained loops and temporal multi-edges, i.e. multiple edges between the same nodes and with the same timestamp.
In particular, a simple temporal graph output by this model has the uniform distribution.
Usually, the obtained graph is not simple, but if the input tuple $\mathbf{D}$ satisfies the conditions imposed in \autoref{thm:t-gen}, then the number of non-simple edges is sufficiently small to allow for efficient removal via switchings.
An interesting property of the random model in this context is that the expected number of loops is not affected by the desired lifetime $T$, whereas the expected number of temporal double-edges scales with $O(1 / T)$ (see \autoref{lem:tcm-temporal-loops-multi-edges}).
This allows for a looser bound on the degrees in terms of the number of edges (e.g. $\Delta^{2+\epsilon} = O(M)$ as opposed to $\Delta^{4} = O(m)$ for classical graphs with general bounded degrees).
As discussed, a challenging aspect of generating simple temporal graphs is that the degree-sequences $\mathbf{d}_t, 1 \leq t \leq T$ implied by a random temporal multigraph may not be realizable as a sequence of classical simple graphs.
This necessitates switchings which rewire edges across different time slices of the graph and re-assign timestamps (see \autoref{def:tl-switching} and \autoref{fig:switching-TL} for an example).
As a consequence, the number of timestamps we can assign to an edge without creating a temporal multi-edge affects the distribution of the graphs, and to preserve uniformity, it becomes necessary to account for the available timestamps.
To discuss this matter, we briefly describe the technique used in~\cite{DBLP:journals/jal/McKayW90} to correct the distribution.

Generally speaking, when analyzing a switching operation $\theta$ we fix subsets $\mathcal{S}, \mathcal{S'} \subseteq \mathcal{M}(\mathbf{d})$ of the set $\mathcal{M}(\mathbf{d})$ of multigraphs matching the sequence $\mathbf{d}$.
Considering the edges rewired by $\theta$ then associates each graph $G \in \mathcal{S}$ with a subset $\mathcal{F}(G) \subseteq \mathcal{S'}$ of graphs in $\mathcal{S'}$ which can be produced by performing a type $\theta$ switching on $G$, and each graph $G' \in \mathcal{S'}$ with a subset $\mathcal{B}(G') \subseteq \mathcal{S}$ of graphs on which we can perform a type $\theta$ switching which produces $G'$.
Given this setup, the goal is to start from a uniform random graph $G \in \mathcal{S}$ and perform a type $\theta$ switching to obtain a uniform random graph $G' \in \mathcal{S'}$.
To this end, let $f(G) = |\mathcal{F}(G)|$, $b(G') = |\mathcal{B}(G')|$, and assume that $\mathcal{F}(G)\neq \emptyset$ for every $G \in \mathcal{S}$ and $\mathcal{B}(G') \neq \emptyset$ for every $G' \in \mathcal{S'}$.
Then, if we start from a graph $G$ uniformly distributed in $\mathcal{S}$, and perform a uniform random type $\theta$ switching on $G$, the probability of producing a given graph $G' \in \mathcal{S'}$ is
\begin{equation}
	\nonumber
	\hfill \sum_{G \in \mathcal{B}(G')} \frac{1}{|\mathcal{S}| f(G)} \hfill
\end{equation}
which depends on $G'$ if $\mathcal{F}(G)$ and $\mathcal{B}(G')$ vary over different choices of $G$ and $G'$.
To correct this, \emph{rejection} steps can be used which restart the algorithm with a certain probability.
Before performing the switching, we \emph{f-reject} (forward reject) with probability $1 - f(G) / \overline{f}(\mathcal{S})$ where $\overline{f}(\mathcal{S})$ is an upper bound on $f(G)$ over all graphs $G \in \mathcal{S}$, and after performing the switching, we \emph{b-reject} (backward reject) with probability $1 - \underline{b}(\mathcal{S'}) / b(G')$ where $\underline{b}(\mathcal{S'})$ is a lower bound on $b(G')$ over all graphs $G' \in \mathcal{S'}$.
The probability of producing $G'$ is now
\begin{equation}
	\nonumber
	\hfill \sum_{G \in \mathcal{B}(G')} \frac{1}{|\mathcal{S}| f(G)} \frac{f(G)}{\overline{f}(\mathcal{S})} \frac{\underline{b}(\mathcal{S'})}{b(G')} = \frac{b(G')}{|\mathcal{S}| \overline{f}(\mathcal{S})} \frac{\underline{b}(\mathcal{S'})}{b(G')} = \frac{\underline{b}(\mathcal{S'})}{|\mathcal{S}| \overline{f}(\mathcal{S})} \hfill
\end{equation}
which only depends on $\mathcal{S}$ and $\mathcal{S'}$, implying that $G'$ has the uniform distribution if $G$ does.

Still, correcting the distribution in this way is efficient only if the typical values of $f(G)$ and $b(G')$ do not deviate too much from $\overline{f}(\mathcal{S})$ and $\underline{b}(\mathcal{S'})$.
To avoid a high probability of restarting in cases where this does not hold, Gao and Wormald~\cite{DBLP:journals/siamcomp/GaoW17} first used the idea of using additional switchings which partially equalize the probabilities by mapping graphs of higher probability to graphs of lower probability. 
This is implemented via a Markov chain which either chooses a main switching to remove a non-simple edge or an additional switching to equalize the probabilities.

\textsc{T-Gen} similarly uses additional switchings but without the use of a Markov chain.
Instead, we always perform the main kind of switching first and then an additional switching which targets a specified set of edges involved in the main kind of switching performed.
The issue we address in this way is that the typical number of available timestamps for an edge is $\Omega(T)$, whereas the corresponding lower bound must be set to $T - (\Delta - 1)$ due to the possibility of a graph in which the edge has multiplicity $\Delta - 1$.
This would imply a high probability of restarting unless we place a harsher restriction on $\Delta$.
Fortunately, the conditions imposed in \autoref{thm:t-gen} suffice to ensure that the highest multiplicity of any edge in the initial graph is bound by a constant $\eta = O(1)$ with high probability (see \autoref{lem:tcm-kappa-lambda-eta}, \autoref{sec:temporal-cm}), and we can also show that any edges created by the algorithm itself only increase the multiplicity up to a constant $\mu = O(1)$ (\autoref{lem:edge-bound-probability}, \autoref{sec:t-gen}).
Now, after performing a main kind of switching, we partition the subset $\mathcal{S'}$ which contains the obtained graph into the subsets $\mathcal{S'}_{\mathbf{m}<\mu}$ and $\mathcal{S'} \setminus \mathcal{S'}_{\mathbf{m} < \mu}$ where $\mathcal{S'}_{\mathbf{m}<\mu}$ contains all graphs in which all specified edges have multiplicity less than $\mu$.
We then equalize the probabilities of producing the graphs in $\mathcal{S'}_{\mathbf{m}<\mu}$ via switchings which involve the specified edges with the standard rejection step (which is efficient by $\mu = O(1)$), and reset the probability of graphs in $\mathcal{S'} \setminus \mathcal{S'}_{\mathbf{m} < \mu}$ to zero by rejecting these graphs.
To equalize the probabilities between the graphs in $\mathcal{S'}_{\mathbf{m}<\mu}$ and $\mathcal{S'} \setminus \mathcal{S'}_{\mathbf{m} < \mu}$, we define auxiliary switching operations which map the graphs in $\mathcal{S'}_{\mathbf{m}<\mu}$ to graphs in $\mathcal{S'} \setminus \mathcal{S'}_{\mathbf{m} < \mu}$ and an identity switching which maps any graph in $\mathcal{S'}_{\mathbf{m}<\mu}$ to itself, and specify a probability distribution over these two kinds of switchings which ensures that all graphs in $\mathcal{S'}$ are produced with the same probability via switchings which involve the specified edges.
It then only remains to equalize the probabilities over different choices of the specified edges via standard rejection steps which finally results in the obtained graph being uniformly distributed in $\mathcal{S'}$.

\section{Temporal Configuration Model}
\label{sec:temporal-cm}
The temporal configuration model samples a random temporal multigraph matching a given tuple $\textbf{D} = (\mathbf{d}, T) \in \mathbb N^n \times \mathbb N_{>0}$ (provided that $M = \sum_i d_i$ is even).
It can be implemented as follows.
First, for each node index $i \in \{1, \dots, n\}$, put $d_i$ marbles labeled $i$ into an urn.
Then, starting from the empty graph $G = (V, \emptyset)$ on the node set $V = \{v_1, \dots, v_n\}$, add edges by iteratively performing the following steps until the urn is empty:
\begin{enumerate}
	\item Draw two marbles from the urn uniformly at random (without replacement), and let $i, j$ denote the labels of those marbles.
	\item Draw a timestamp $t$ uniformly at random from the set of timestamps $[1, T]$.
	\item Add the temporal edge $(\{v_i, v_j\}, t)$ to the graph $G$.
\end{enumerate}

In the following, we analyze the output distribution of this random model.
To this end, we first give some definitions to characterize the edges in a temporal multigraph.
Given two nodes $v_i, v_j \in V$ and a timestamp $t \in [1, T]$, we define $w_{i,j,t}$ as the number of edges between $v_i$ and $v_j$ with timestamp $t$ in the graph, and call $w_{i,j,t}$ the \emph{temporal multiplicity} of the edge $(\{v_i, v_j\}, t)$.
Then, if $w_{i,j,t} \geq 2$, we say that the edge is contained in a \emph{temporal multi-edge}, and in the special cases $w_{i,j,t} = 2$ and $w_{i,j,t} = 3$, refer to the multi-edge as a \emph{double-edge} and \emph{triple-edge}, respectively.
In addition, we define $m_{i,j} = \sum_t w_{i,j,t}$ as the total number of edges between $v_i$ and $v_j$ over all timestamps and call $m_{i,j}$ the \emph{multiplicity} of $\{v_i, v_j\}$.
Finally, we call an edge $(\{v_i\}, t)$ which connects a node $v_i$ to itself a \emph{loop} at $v_i$, and in the cases where $w_{i,t} = 1$ and $w_{i,t} = 2$, refer to the edge as a \emph{temporal single-loop} and \emph{temporal double-loop}, respectively.

Now, for a given temporal multigraph $G$, let $\mathbf{W}(G)$ denote the $n \times n \times T$ tensor such that the entries $\mathbf{W}_{i,j,t}(G)$ where $i \neq j$ contain $w_{i,j,t}$ if $w_{i,j,t} \geq 2$ and $0$ if otherwise, and the entries $\mathbf{W}_{i,i,t}(G)$ contain $w_{i,t}$.
In other words, $\mathbf{W}(G)$ specifies the temporal multiplicities of all temporal multi-edges and loops in $G$.
In addition, let $\mathcal{M}(\mathbf{D})$ denote the set of temporal multigraphs matching a given tuple $\mathbf{D}$, and for a given tensor $\mathbf{W}$, let $\mathcal{S}(\mathbf{W})$ denote the subset of temporal multigraphs $G \in \mathcal{M}(\mathbf{D})$ such that $\mathbf{W}(G) = \mathbf{W}$.
Then, the following holds.
\begin{theorem}
	\label{thm:tcm-output-distribution}
	Let $G$ be a temporal multigraph output by the temporal configuration model on an input tuple $\mathbf{D}$.
	Then, $G$ is uniformly distributed in the set $\mathcal{S}(\mathbf{W}(G)) \subseteq \mathcal{M}(\mathbf{D})$.
\end{theorem}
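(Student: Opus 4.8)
The plan is to reduce the process to a two-stage generation of a combinatorial \emph{configuration} and then count how many configurations project onto a fixed multigraph $G$. First I would observe that the sequential drawing of marble pairs produces a \emph{uniform} perfect matching of the $M$ labeled marbles: at the $k$-th step exactly $\binom{M-2(k-1)}{2}$ unordered pairs are available, and summing over the $(M/2)!$ orderings of the pairs shows that every matching arises with the same probability $2^{M/2}(M/2)!/M!$. Since the timestamp of each of the $M/2$ edges is drawn independently and uniformly from $[1,T]$ and the interleaving with the marble draws is irrelevant, every \emph{configuration} — a perfect matching together with a timestamp assignment — occurs with the same probability $\bigl(2^{M/2}(M/2)!/M!\bigr)\cdot T^{-M/2}$. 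Consequently $\Pr[G]$ is proportional to the number $N(G)$ of configurations whose projection (forgetting marble labels) equals $G$, and it suffices to show that $N(G)$ depends only on $\mathbf{W}(G)$.

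To count $N(G)$ I would partition and match the marbles node by node. At each node $v_i$ the $d_i$ labeled marbles are split into groups indexed by the incident edge-types of $G$: a group of size $w_{i,j,t}$ for every neighbour $v_j$ ($j\neq i$) and timestamp $t$, and a group of size $2w_{i,t}$ for the loops at time $t$; this contributes $d_i!\big/\bigl(\prod_{j\neq i,t} w_{i,j,t}!\,\prod_t (2w_{i,t})!\bigr)$ choices. Matching the $(i,j,t)$-group of $v_i$ to the corresponding group of $v_j$ then adds a factor $w_{i,j,t}!$ per non-loop edge-type (and forces their common timestamp to be $t$), while pairing the $2w_{i,t}$ loop-marbles adds $(2w_{i,t}-1)!!=(2w_{i,t})!/(2^{w_{i,t}}w_{i,t}!)$. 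Collecting the factors — the two $w_{i,j,t}!$ contributions from the endpoints cancel against the matching factor, and the loop bookkeeping collapses to $1/(2^{w_{i,t}}w_{i,t}!)$ — yields
\begin{equation}
\nonumber
N(G) = \frac{\prod_i d_i!}{\prod_{i<j,\,t} w_{i,j,t}!\;\prod_{i,t} 2^{w_{i,t}}\,w_{i,t}!}.
\end{equation}

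Finally I would read off the dependence. The prefactor $\prod_i d_i!$ and the per-configuration probability depend only on the fixed tuple $\mathbf{D}$, while in the denominator every \emph{single} edge contributes $1!=1$ and so is invisible; the only surviving factors come from the temporal multi-edges ($w_{i,j,t}\geq 2$) and the loops, which are exactly the entries recorded by the tensor $\mathbf{W}(G)$. Hence $\Pr[G]$ is a function of $\mathbf{W}(G)$ alone, so any two graphs in $\mathcal{S}(\mathbf{W}(G))$ are equiprobable and $G$ is uniform on that set.

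The step I expect to be the main obstacle is the configuration count: one must set up the node-wise grouping so that the timestamps assigned on the two endpoints of an edge are forced to agree, and then verify that the $(w_{i,j,t}!)^2$ introduced by grouping at the two endpoints cancels exactly against the $w_{i,j,t}!$ matchings while the loop factor reduces cleanly to $2^{w_{i,t}}w_{i,t}!$. Getting this bookkeeping right — and in particular noticing that single edges drop out, which is precisely what makes the probability depend only on the recorded multi-edge and loop multiplicities rather than on the arrangement of the simple part — is the crux of the argument.
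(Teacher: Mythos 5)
Your proof is correct, and while it follows the same high-level strategy as the paper --- reduce the urn process to a uniform distribution over temporal configurations, then show that the number of configurations projecting onto a fixed multigraph $G$ is a function of $\mathbf{W}(G)$ alone --- the way you execute the count is genuinely different. The paper never computes the count explicitly: it splits $G$ into the subgraph $H_1$ of non-simple edges and the subgraph $H_2$ of simple edges, factors the labeling count as $C_P(G) = C_{P_1}(H_1)\, C_{P_2}(H_2) \prod_i \binom{d_i}{k_i}$, observes that $C_{P_2}(H_2) = \prod_i k_i!$ because all edges of the simple part are distinct, and concludes since $H_1$ and $\mathbf{k}$ are determined by $\mathbf{W}(G)$; the value of $C_{P_1}(H_1)$ itself is never needed. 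You instead derive the closed form $N(G) = \prod_i d_i! \big/ \bigl(\prod_{i<j,t} w_{i,j,t}!\, \prod_{i,t} 2^{w_{i,t}} w_{i,t}!\bigr)$ by node-wise grouping and matching, and read off the $\mathbf{W}$-dependence from the fact that single edges contribute $1!=1$. Your route costs more bookkeeping (the cancellation of the two endpoint factors $(w_{i,j,t}!)^2$ against the $w_{i,j,t}!$ matchings, and the loop factor $(2w_{i,t}-1)!!$), but it buys an exact expression for the output probabilities rather than just their $\mathbf{W}$-measurability, and it additionally proves a step the paper only asserts, namely that the sequential marble-drawing process induces the uniform distribution on temporal configurations. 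Both arguments are complete and sound; yours is the more explicit and self-contained, the paper's the more economical.
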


\begin{proof}
	For a given tuple $\textbf{D} = (\mathbf{d}, T)$, let $P_i = \{i_1, \dots, i_{d_i}\}$ where $1 \leq i \leq n$, $P = \bigcup_{1\leq i \leq n} P_i$, and define a \emph{temporal configuration} of $\textbf{D}$ as a partition of the set $P$ into $M / 2$ subsets of size two in which each subset is assigned an integer $t \in [1, T]$.
	Then, the temporal configuration model samples a temporal configuration uniformly at random and outputs the corresponding temporal multigraph (by identifying the nodes corresponding to the labels and replacing the subsets together with the assigned integers by temporal edges).
	Thus, the probability of a given graph $G$ is proportional to the number of temporal configurations corresponding to $G$, which equals the number of ways to label the edges in $G$ with the labels in $P$ which give a distinct temporal configuration.
	Denote this number by $C_P(G)$, and observe that if $H_1 = (V, E_1)$, $H_2 = (V, E_2)$ are the subgraphs of the non-simple edges and simple edges in $G = (V, E)$, respectively, then as $E_1 \cap E_2 = \emptyset$ and $E_1 \cup E_2 = E$, we have $C_P(G) = C_{P_1}(H_1) C_{P_2}(H_2) \prod_i \binom{d_i}{k_i}$ where $\mathbf{k}$ denotes the degree-sequence of $H_2$ (arbitrarily), $P_1 = \bigcup_i \{i_1, \dots, i_{d_i - k_i}\}$ and $P_2 = \bigcup_i \{i_1, \dots, i_{k_i}\}$.
	In addition, we have $C_{P_2}(H_2) = \prod_i k_i !$ as all edges incident at the nodes in the simple temporal graph $H_2$ are distinct, which implies that all possible labelings result in a distinct temporal configuration. 
	Finally, observe that $\mathbf{W}(G)$ determines both $H_1$ and $\mathbf{k}$, and thus $C_P(G)$ only depends on $\mathbf{W}(G)$.
\end{proof}

\noindent
Note that the set of \emph{simple} temporal graphs matching $\mathbf{D}$ corresponds to the special case $\mathcal{G}(\mathbf{D}) = \mathcal{S}(\mathbf{0}^{n \times n \times T}) \subseteq \mathcal{M}(\mathbf{D})$.
Thus, \autoref{thm:tcm-output-distribution} implies that a simple temporal graph output by the random model is uniformly distributed in the set $\mathcal{G}(\mathbf{D})$.
In general, the probability of obtaining a simple temporal graph is small (see \autoref{lem:tcm-temporal-loops-multi-edges} below).
Still, there are conditions under which the numbers and multiplicities of non-simple edges are manageable.
We state these conditions in terms of the following properties of a degree-sequence $\mathbf{d}$:
\begin{equation}
	\nonumber
	\hfill \Delta = \max_{1 \leq i \leq n} d_i, \hfill M = \sum_{1 \leq i \leq n} d_i, \hfill M_2 = \sum_{\substack{1 \leq i \leq n \\ d_i > 0}} d_i (d_i - 1). \hfill
\end{equation}

\begin{lemma}
	\label{lem:tcm-temporal-loops-multi-edges}
	Let $\mathbf{D} = (\mathbf{d}, T)$ be a tuple which satisfies $\Delta^{2} = o(M)$ and $\Delta = O(T)$, and $G$ a graph output by the temporal configuration model when given $\mathbf{D}$ as input.
	Then, the expected number of temporal double-edges in $G$ is $O(M_2^2 / M^2 T)$, the expected number of temporal single-loops is $O(M_2 / M)$, and with high probability there are no temporal double-loops or temporal triple-edges.
\end{lemma}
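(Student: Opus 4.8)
The plan is to bound all four quantities by the first moment method, exploiting two elementary facts about the temporal configuration model. Working with the points $P = \bigcup_i P_i$ from the proof of \autoref{thm:tcm-output-distribution}, I first recall that for any fixed set of $r$ disjoint pairs of points, the probability that the uniform random configuration realizes all $r$ pairings simultaneously is $\prod_{\ell=0}^{r-1}(M-1-2\ell)^{-1}$, which equals $(1+o(1))M^{-r}$ for constant $r$. Second, since each edge receives an independent uniform timestamp in $[1,T]$, the probability that $r$ prescribed edges all receive a common timestamp is $T^{-(r-1)}$. Every object counted in the lemma is a constant-size pattern (two or three parallel edges, or one or two loops), so I would introduce an indicator for each choice of the participating points, multiply the two probabilities above, and sum over all choices.

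For the double-edges I would sum over unordered node pairs $\{v_i,v_j\}$ the number $2\binom{d_i}{2}\binom{d_j}{2}$ of ways to pick two $i$-points and two $j$-points and match them into two $(i,j)$-edges, weighted by the pairing probability $(1+o(1))M^{-2}$ and the coincidence probability $T^{-1}$. Using $\sum_{i\ne j} d_i(d_i-1)\,d_j(d_j-1) \le M_2^2$ gives an expectation of order $M_2^2/(M^2 T)$; this also upper-bounds the count of temporal double-edges, since an $(i,j,t)$ with $w_{i,j,t}=2$ contributes exactly one such pair. The single-loops are even simpler: each loop at $v_i$ is a self-pairing of two $i$-points and carries no timestamp constraint, so their expected number is $\sum_i \binom{d_i}{2}(M-1)^{-1} = M_2/(2(M-1)) = O(M_2/M)$.

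The two high-probability statements are where the $1/T$ factors do the real work, and verifying the crude degree estimates is the main technical point. For double-loops I would count, per node, the $3\binom{d_i}{4}$ unordered pairs of loops, weighted by $M^{-2}T^{-1}$; bounding $\binom{d_i}{4} \le \tfrac{1}{12}\Delta^2\binom{d_i}{2}$ yields $\sum_i \binom{d_i}{4} = O(\Delta^2 M_2)$ and hence expectation $O(\Delta^2 M_2/(M^2 T))$. For triple-edges I would count the $6\binom{d_i}{3}\binom{d_j}{3}$ matchings of three $i$-points with three $j$-points, weighted by $M^{-3}T^{-2}$, giving expectation $O\bigl((\sum_i d_i^3)^2/(M^3 T^2)\bigr)$. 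The obstacle is to force both expectations to $o(1)$ under the hypotheses $\Delta^2 = o(M)$ and $\Delta = O(T)$; the inequalities $M_2 \le \Delta M$ and $\sum_i d_i^3 \le \Delta^2 M$ collapse the two bounds to $O(\Delta^3/(MT))$ and $O(\Delta^4/(MT^2))$, and since $\Delta = O(T)$ implies $T = \Omega(\Delta)$, both reduce to $O(\Delta^2/M) = o(1)$. A final application of Markov's inequality then shows that with high probability no temporal double-loop or triple-edge occurs.
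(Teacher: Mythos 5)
Your proposal is correct and follows essentially the same route as the paper: a first-moment computation over constant-size point configurations, with pairing probability $O(M^{-r})$ and timestamp-coincidence probability $T^{-(r-1)}$, followed by the reductions $M_2 \le \Delta M$ (resp. $\sum_i d_i^3 \le \Delta^2 M$) and $\Delta = O(T)$ to get $o(1)$ expectations and Markov's inequality for the high-probability claims. The only difference is presentational — you justify the edge-probability estimate explicitly via the configuration-model product formula and integrate out the common timestamp, where the paper fixes $t$ and sums over $[1,T]$ — which yields identical bounds.
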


\begin{proof}
	It is straightforward to check that the probability that $G$ contains $m$ given temporal edges is $O(M^{-m} T^{-m})$.
	Thus, the expected number of temporal double-edges in a graph output by the temporal configuration model is
	\begin{equation}
		\nonumber
		\hfill O\left(\sum_{1 \leq i, j \leq n}  \sum_{1 \leq t \leq T} \frac{4 \binom{d_i}{2} \binom{d_j}{2}}{M^{2} T^{2}}\right) = O\left( \frac{M_2^2}{M^2 T} \right), \hfill
	\end{equation}
	the expected number of temporal single-loops is
	\begin{equation}
		\nonumber
		\hfill O\left(\sum_{1 \leq i \leq n} \sum_{1 \leq t \leq T} \frac{\binom{d_i}{2}}{M T}\right) = O\left( \frac{M_2}{M} \right), \hfill
	\end{equation}
	the expected number of temporal triple-edges is
	\begin{equation}
		\nonumber
		\hfill O\left(\sum_{1 \leq i, j \leq n} \sum_{1 \leq t \leq T} \frac{6 \binom{d_i}{3} \binom{d_j}{3}}{M^3 T^3}\right) = O\left( \frac{\Delta^2 M_2^2}{M^3 T^2} \right) = O\left(\frac{\Delta^2}{M}\right) = o(1), \hfill
	\end{equation}
	and the expected number of temporal double-loops is
	\begin{equation}
		\nonumber
		\hfill O\left(\sum_{1 \leq i \leq n} \sum_{1 \leq t \leq T} \frac{3 \binom{d_i}{4}}{M^2 T^2}\right) = O\left( \frac{\Delta^2 M_2}{M^2 T} \right) = O\left(\frac{\Delta^2}{M}\right) = o(1). \hfill \qedhere \qed
	\end{equation}
\end{proof}

\begin{lemma}
	\label{lem:tcm-kappa-lambda-eta}
	Let $\mathbf{D} = (\mathbf{d}, T)$ be a tuple which satisfies $\Delta^{2+\epsilon} = O(M)$ for a constant $\epsilon > 0$ and $\Delta = O(T)$, and $G$ a graph output by the temporal configuration model when given $\mathbf{D}$ as input.
	Then, with high probability, the number of incident temporal double-edges at any node in $G$ is at most $\kappa = \lfloor 1 + 1 / \epsilon \rfloor$, the number of incident temporal single-loops at any node is at most $\lambda = \lfloor 1 + 1 / \epsilon \rfloor$, and the highest multiplicity of any edge is at most $\eta = \lfloor 2 + 2 / \epsilon \rfloor$.
\end{lemma}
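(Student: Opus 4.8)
The plan is to bound all three quantities by the first-moment method, using the estimate established in the proof of \autoref{lem:tcm-temporal-loops-multi-edges} that a fixed set of $m$ temporal edges appears in $G$ with probability $O(M^{-m}T^{-m})$. In particular, a fixed loop appears with probability $O(M^{-1}T^{-1})$, and a fixed double-edge (two parallel edges with a shared but unspecified timestamp) appears with probability $O(M^{-2}T^{-1})$ after summing over the common timestamp. For each structure I would estimate the expected number of forbidden configurations and apply Markov's inequality, so it suffices to show each expectation is $o(1)$; the three statements then hold simultaneously whp by a union bound.

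For the incident double-edges I would count $k$-tuples of incident double-edges at a fixed node $v_i$: choosing and pairing the $2k$ half-edges of $v_i$ gives a factor $O(d_i^{2k})$, the partners contribute $\big(\sum_j d_j^2\big)^k = (M_2+M)^k$, and each double-edge carries the probability factor $O(M^{-2}T^{-1})$, for an expected count $O\big(d_i^{2k}(M_2+M)^k M^{-2k}T^{-k}\big)$ per node. The crucial step is to sum over nodes before bounding degrees: using $\sum_i d_i^{2k} \le \Delta^{2k-1}M$, together with $M_2+M \le \Delta M$ and $T^{-1}=O(\Delta^{-1})$ (from $\Delta=O(T)$), the total collapses to $O\big(\Delta^{2k-1}M^{1-k}\big)$. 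Under $\Delta^{2+\epsilon}=O(M)$ this is $O\big(M^{(2k-1)/(2+\epsilon)-(k-1)}\big)$, whose exponent is negative exactly when $k > 1 + 1/\epsilon$. Thus whp no node has $\lfloor 1+1/\epsilon\rfloor + 1$ incident double-edges, giving $\kappa = \lfloor 1+1/\epsilon\rfloor$.

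The single-loop bound is analogous but simpler: a loop uses two half-edges matched with probability $O(M^{-1})$ and carries no timestamp constraint, so the expected number of $k$-tuples of incident loops summed over nodes is $M^{-k}\sum_i d_i^{2k} \le \Delta^{2k-1}M^{1-k}$, giving the same threshold and $\lambda=\lfloor 1+1/\epsilon\rfloor$; since \autoref{lem:tcm-temporal-loops-multi-edges} rules out double-loops whp, these are all single-loops. For the maximum multiplicity I would count $m$ parallel edges between a fixed pair $\{v_i,v_j\}$ over arbitrary timestamps, contributing $O(d_i^m d_j^m M^{-m})$ with no timestamp factor; summing over pairs via $\sum_i d_i^m \le \Delta^{m-1}M$ yields $O\big(\Delta^{2m-2}M^{2-m}\big)$, which is $o(1)$ exactly when $m > 2 + 2/\epsilon$, so whp the maximum multiplicity is at most $\eta=\lfloor 2+2/\epsilon\rfloor$.

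I expect the main obstacle to be arranging the constants to come out sharp. Crudely replacing every degree by $\Delta$ and multiplying by $n \le M$ loses a factor and produces only $\approx 2/\epsilon$ in the loop and double-edge bounds; the fix is to retain $\sum_i d_i^{2k}$ and use the sharper $\sum_i d_i^{p}\le \Delta^{p-1}M$, which exploits that the degrees sum to $M$. Secondary care is needed in the timestamp bookkeeping (each double-edge contributes $T^{-1}$, not $T^{-2}$, whereas multiplicities carry no timestamp factor) and in the integer boundary cases where $1+1/\epsilon$ or $2+2/\epsilon$ is itself an integer, which must be checked separately to confirm that the floors are attained.
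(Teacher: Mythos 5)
Your proposal is correct and follows essentially the same route as the paper: a first-moment bound on the expected number of $k$-tuples of incident double-edges, incident loops, and parallel edges, using $\sum_i d_i^{p} \le \Delta^{p-1}M$ (the paper's $M_{k+1} < \Delta M_k$ in falling-factorial form), the absorption of one timestamp factor per double-edge together with $\Delta = O(T)$, and no timestamp factor for loops and multiplicities, yielding the identical bounds $O(\Delta^{2k-1}M^{1-k})$ and $O(\Delta^{2m-2}M^{2-m})$ and thresholds. The only cosmetic difference is that you count tuples at the single value $k=\kappa+1$ (resp.\ $m=\eta+1$) where the paper sums expected counts over all larger $m$, and your worry about integer boundary cases is vacuous since $\lfloor x\rfloor + 1 > x$ holds strictly for every $x$, so the exponent is always strictly negative at the threshold.
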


\begin{proof}
	Define $\delta = \frac{\epsilon}{2+\epsilon}$ and note that $\Delta^{2+\epsilon} = O(M) \implies \Delta = O(M^{\delta / \epsilon})$, and if $\epsilon > 0$, then $\Delta^2 / M = O(M^{-\delta}) = o(1)$.
	Let $K_m$ denote the number of nodes incident with $m$ temporal double-edges in a graph output by the temporal configuration model.
	Then
	\begin{align}
		\nonumber
		\quad \quad \quad \quad \mathbb{E}[K_m] &= O\left( \sum_{1 \leq i \leq n} \sum_{1 \leq j_1, \dots, j_m \leq n} \sum_{1 \leq t_1, \dots, t_m \leq T} \frac{\binom{2m}{2, \dots, 2} \binom{d_i}{2m} \prod_{k=1}^m \binom{d_{j_k}}{2}}{M^{2m} T^{2m}} \right)
		\\
		\nonumber
		&= O\left(\frac{M_{2m} M_2^{m}}{M^{2m} T^m} \right) = O\left(\frac{\Delta^{2m - 1}}{M^{m-1}} \right).
	\end{align}
	where $M_k = \sum_{1 \leq i \leq n} \prod_{1 \leq j \leq k} (d_i - j + 1)$ and the last equality follows by $M_{k+1} < \Delta M_k$.
	Thus, if $\Delta^{2+\epsilon} = O(M)$ for some constant $\epsilon > 0$, the probability of at least one node incident with more than $\kappa = \lfloor 1 + 1 / \epsilon \rfloor$ temporal double-edges is at most
	\begin{align}
		\nonumber
		\quad \quad \sum_{m= \kappa+1 }^{\lfloor \Delta / 2 \rfloor} \mathbb{E}[K_m] &= O\left( \sum_{m= \kappa+1 }^{\lfloor \Delta / 2 \rfloor} \frac{\Delta^{2m - 1}}{M^{m-1}} \right) 
		\\
		\nonumber
		&= O\left( \sum_{m= \kappa+1 }^{\lfloor \Delta / 2 \rfloor} M^{- \delta (m - 1 - 1 / \epsilon)} \right) 
		\\
		\nonumber
		&= O\left( \sum_{m= \kappa+2 }^{\lfloor \Delta / 2 \rfloor} M^{- \delta (m - 1 - 1 / \epsilon)} \right)  + O\left(M^{- \delta (\underbrace{\kappa + 1}_{=\lfloor 1 + 1 / \epsilon \rfloor + 1 > 1 + 1 / \epsilon} - 1 - 1 / \epsilon)}\right)
		\\
		\nonumber
		&= O\left( \sum_{i=1}^{\lfloor \Delta / 2 \rfloor -  (\kappa + 1)} M^{- \delta i} \right) + o(1)
		\\
		\nonumber
		&= O\left( M^{-\delta} \right) + o(1)
		\\
		\nonumber
		&= o(1).
	\end{align}
	Similarly, let $L_m$ denote the number of nodes incident with $m$ temporal single-loops, then
	\begin{equation}
		\nonumber
		\hfill \mathbb{E}[L_m] = O\left( \sum_{1 \leq i \leq n} \sum_{1 \leq t_1, \dots, t_m \leq T} \frac{\binom{2m}{2, \dots, 2} \binom{d_i}{2m}}{M^{m} T^{m}} \right) = O\left(\frac{M_{2m}}{M^{m}} \right) = O\left(\frac{\Delta^{2m-1}}{M^{m-1}} \right) \hfill
	\end{equation}
	which if $\Delta^{2+\epsilon} = O(M)$ implies that the probability of at least one node incident with more than $\lambda = \lfloor 1 + 1 / \epsilon \rfloor$ temporal single-loops is $o(1)$ by the same argument as above.
	Finally, let $H_m$ denote the number of ordinary multi-edges of multiplicity $m$, then
	\begin{equation}
		\nonumber
		\hfill \mathbb{E}[H_m] = O\left( \sum_{1 \leq i, j \leq n} \frac{m! \binom{d_i}{m} \binom{d_j}{m}}{M^m} \right) = O\left(\frac{M_m^2}{M^{m}} \right) = O\left(\frac{\Delta^{2 m - 2}}{M^{m-2}} \right) \hfill
	\end{equation}
	which if $\Delta^{2+\epsilon} = O(M)$ implies that the probability of at least one ordinary multi-edge of multiplicity higher than $\eta = \lfloor 2 + 2 / \epsilon \rfloor$ is $o(1)$ by the same argument as above.
\end{proof}

\section{Algorithm T-Gen}
\label{sec:t-gen}
\textsc{T-Gen} takes a realizable tuple $\mathbf{D}$ as input and outputs a uniform random sample $G \in \mathcal{G}(\mathbf{D})$ from the set of matching simple temporal graphs. 
The algorithm starts by sampling a random temporal multigraph $G \in \mathcal{M}(\mathbf{D})$ via the temporal configuration model (see \autoref{sec:temporal-cm}).
It then checks if $G$ satisfies initial conditions on the numbers and multiplicities of non-simple edges (see \autoref{subsec:initial-rejection}).
In particular, the initial graph $G$ is not allowed to contain temporal triple-edges, or temporal double-loops (or any higher multiplicities).
If $G$ satisfies these conditions, then the algorithm proceeds to removing all temporal single-loops and temporal double-edges during two stages.
Stage 1 (\autoref{subsec:stage1}) removes all temporal single-loops in the graph.
For this purpose three kinds of switching operations are used.
The main kind of switching removes a temporal single-loop at a specified node and with a specified timestamp.
After performing this kind of switching we always perform one of two auxiliary switchings.
The purpose of these switchings is to equalize the probabilities between graphs which contain ordinary multi-edges of high multiplicity and graphs which do not.
Stage 2 (\autoref{subsec:stage2}) removes all temporal  double-edges, i.e. double-edges which share the same timestamp.
Doing this efficiently requires five kinds of switchings, two of which remove a temporal double-edge between two specified nodes and with a specified timestamp, and three of which are auxiliary switchings.
Once all non-simple edges have been removed, the resulting simple temporal graph is output.

\subsection{Initial Conditions}
\label{subsec:initial-rejection}

The initial conditions for the random multigraph $G$ are as follows.
Define
\begin{equation}
	\nonumber
	\hfill B_L = \frac{M_2}{M}, \quad \quad \quad B_D = \frac{M_2^2}{M^2 T}, \hfill
\end{equation}
let $L = \sum_{i, t} \mathbf{W}_{i,i,t}(G)$ and $D = \sum_{i \neq j, t} \mathbf{W}_{i,j,t}(G)$ denote the sums of the multiplicities of loops and temporal multi-edges of $G$, respectively, and choose three constants
\begin{equation}
	\nonumber
	\hfill \lambda \geq 1 + 1 / \epsilon, \quad \quad \quad \quad \kappa \geq 1 + 1 / \epsilon, \quad \quad \quad \quad \mu \geq 3 + 2 / \epsilon \hfill
\end{equation}
where $\epsilon > 0$ is a constant such that $\Delta^{2+\epsilon} = O(M)$ (or set $\lambda = \kappa = \mu = \Delta$ if no such constant exists).
Then, $G$ satisfies the initial conditions if $L \leq B_L$, $D / 2 \leq B_D$, there are no loops of temporal multiplicity $w \geq 2$ or temporal multi-edges of temporal multiplicity $w \geq 3$, and no node is incident with more than $\lambda$ temporal single-loops or $\kappa$ temporal double-edges.

Observe that the lower bounds on the constants $\lambda$ and $\kappa$ are implied by \autoref{lem:tcm-kappa-lambda-eta}.
The lower bound on $\mu$ is implied by the following result (proof in \autoref{apx-subsec:run-time-proofs}).
\begin{lemma}
	\label{lem:edge-bound-probability}
	If the input tuple $\mathbf{D} = (\mathbf{d}, T)$ satisfies $\Delta^{2+\epsilon} = O(M)$ for a constant $\epsilon > 0$ and $T - \Delta = \Omega(T)$, then with high probability none of the graphs visited during a given run of \textsc{T-Gen} contain an edge of multiplicity higher than $\mu = \lfloor 3 + 2 / \epsilon \rfloor$.
\end{lemma}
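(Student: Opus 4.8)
The plan is to combine the high-probability bound on the initial multiplicities with a first-moment argument that controls the edges created by the switchings. First I would condition on the event, guaranteed with probability $1 - o(1)$ by \autoref{lem:tcm-kappa-lambda-eta}, that the initial multigraph $G$ has maximum (ordinary) edge multiplicity at most $\eta = \lfloor 2 + 2/\epsilon\rfloor$; for any run that passes the initial rejection of \autoref{subsec:initial-rejection}, the bounds $L \le B_L$, $D/2 \le B_D$ and the per-node bounds of $\lambda$ single-loops and $\kappa$ double-edges additionally hold by construction. Since $\mu = \lfloor 3 + 2/\epsilon\rfloor = \eta + 1$, and since removing edges only decreases multiplicities, the multiplicity of a pair $\{v_i, v_j\}$ in any graph visited during the run is at most its initial value plus the total number of edges the switchings create at $\{v_i, v_j\}$. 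Hence it suffices to show that w.h.p. the switchings create at most one new edge at every fixed node pair, which then caps every multiplicity at $\eta + 1 = \mu$.

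Next I would record two structural facts, read off from the switchings of Stages 1 and 2 (\autoref{subsec:stage1}, \autoref{subsec:stage2}). Each switching removes and creates only $O(1)$ edges, and the created edges occupy distinct node pairs, so a single switching raises the multiplicity of any pair by at most one; moreover at least one endpoint of each created edge is pinned to a node incident to the non-simple edge being processed, while the other endpoint arises from a uniformly chosen edge-endpoint. Because every switching preserves the degree sequence and at any step only $O(\Delta) = o(M)$ edge-slots are forbidden, the probability that a created edge attaches to a prescribed node $v_j$ is at most $d_j/(M - O(\Delta)) = O(d_j/M)$, uniformly over the history of the run. Finally, Stage 1 performs $O(L)$ and Stage 2 performs $O(D)$ main switchings, each triggering $O(1)$ auxiliary switchings, so the total number $N$ of created edges is $N = O(L + D) = O(M_2/M + M_2^2/(M^2 T))$, and each node $v_i$ is the pinned endpoint of at most $c_i = O(\lambda + \kappa) = O(1)$ of them, with $\sum_i c_i = O(L+D)$.

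I would then bound the expected number of node pairs receiving two created edges by summing, over all unordered pairs of the $N$ creation events, the probability that they land on a common node pair. Splitting according to whether the two events share their pinned endpoint gives a ``same-pin'' contribution $O\!\big(\sum_i \binom{c_i}{2}\sum_j d_j^2/M^2\big) = O\!\big((L+D)(M_2+M)/M^2\big)$ and a ``cross-pin'' contribution $O\!\big((\sum_i c_i d_i)^2/M^2\big) = O\!\big(\Delta^2 (L+D)^2/M^2\big)$. Using $M_2 \le \Delta M$, the consequence $\Delta^2 = o(M)$ of $\Delta^{2+\epsilon} = O(M)$, the bound $\Delta = O(T)$ implied by $T - \Delta = \Omega(T)$, and the estimates on $L$ and $D$, both contributions are $o(1)$; the decisive estimate is $\Delta^4/M^2 = O(M^{4/(2+\epsilon) - 2}) = o(1)$. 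By Markov's inequality, w.h.p. no node pair receives two created edges, which together with the initial bound $\eta$ gives maximum multiplicity at most $\mu$ throughout, as claimed.

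The hard part will be that the location of each created edge is a random variable correlated with the entire evolving structure of the graph, so the landing probabilities are neither independent nor identically distributed across the run. I would handle this by working throughout with the history-independent worst-case bound $O(d_j/M)$ justified above, whose validity rests precisely on the invariance of the degree sequence under every switching and on $\Delta = o(M)$. The second delicate point is the regime analysis of the first-moment sum: the cross-pin term is the binding one, and it is here that the hypothesis $\Delta = O(T)$ is genuinely needed to control the $D$-dependent part; I would therefore verify the $o(1)$ bound separately in the cases $M_2 \ge M$ and $M_2 < M$ to confirm it holds uniformly.
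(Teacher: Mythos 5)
There is a genuine gap: your argument ignores what the auxiliary switchings actually do. Your two structural claims --- that every switching creates only $O(1)$ edges occupying \emph{distinct} node pairs, so that ``a single switching raises the multiplicity of any pair by at most one,'' and that the total number of created edges is $O(L+D)$ --- are true for the main switchings $\mathsf{TL}$, $\mathsf{TD}_1$, $\mathsf{TD}_0$, but false for the auxiliary switchings $\mathsf{A}_{m,n}$, $\mathsf{B}_{m,n}$, $\mathsf{C}_{m,n,o,p}$. By \autoref{def:amn-switching}, an $\mathsf{A}_{m,n}$ switching creates $m$ parallel edges at the single pair $\{v_2,v_4\}$ and $n$ parallel edges at $\{v_3,v_5\}$, with $\mu \leq \max\{m,n\} < \Delta$, plus $O(m+n)$ further edges; that is one switching that by design produces an edge of multiplicity at least $\mu$, possibly as large as $\Delta - 1$, and rewires $O(\Delta)$ edges rather than $O(1)$. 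So conditioning on these switchings being ``routine'' events that each add at most one edge per pair cannot work; if an auxiliary switching with $m > \mu$ is ever performed, the conclusion of the lemma is violated outright, and no first-moment collision count over creation events will detect this.

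The missing ingredient is precisely the first step of the paper's proof: by \autoref{lem:st1-auxiliary} and \autoref{lem:st2-auxiliary}, $p_\mathsf{A}(\mathsf{I})$, $p_\mathsf{B}(\mathsf{I})$, $p_\mathsf{C}(\mathsf{I})$ are all $1 - o(\Delta^{-1})$, so over the at most $B_L + B_D = O(\Delta)$ iterations the probability that \emph{any} auxiliary switching is ever performed is $(B_L + B_D)\, o(\Delta^{-1}) = o(1)$; one then conditions on the complementary event, after which only $\mathsf{TL}$, $\mathsf{TD}_1$, $\mathsf{TD}_0$ switchings occur and your accounting becomes legitimate. With that repair, the remainder of your proposal (conditioning on the initial bound $\eta$ from \autoref{lem:tcm-kappa-lambda-eta}, the $O(d_j/M)$ landing probability justified by degree preservation, the iteration counts $O(B_L)$ and $O(B_D)$ with $O(\lambda + \kappa) = O(1)$ pinned events per node, and the first-moment bound whose dominant term is $\Delta^4/M^2 = o(1)$) is essentially the computation the paper performs. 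One further small inaccuracy: not every created edge has a pinned endpoint --- the edge $(\{v_4,v_5\},t_6)$ of a $\mathsf{TL}$ switching (and $(\{v_5,v_6\},t_6)$ of a $\mathsf{TD}_1$ switching) has both endpoints chosen randomly; this case needs its own (easier) collision bound rather than being folded into the ``pinned'' bookkeeping, but it does not change the asymptotics.
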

As a simple temporal graph is allowed to contain ordinary multi-edges, the constant $\mu$ cannot be enforced by rejecting violating initial graphs.
Instead, we equalize the probabilities between graphs which contain such edges and graphs which do not via the auxiliary switchings mentioned above.
We describe this approach in detail in \autoref{subsec:stage1} and \autoref{subsec:stage2}.

Finally, there are two special cases.
First, if the initial multigraph $G$ is simple, then this graph can be output without checking the preconditions or going through any of the stages.
Second, if the input tuple does not satisfy 
\begin{equation}
	\nonumber
	\hfill M > 16 \Delta^2 + 4 \Delta + 2 B_L + 4 B_D, \quad \quad \quad \quad T > \Delta - 1 \hfill
\end{equation}
then \textsc{T-Gen} restarts until a simple graph is found and output.
Note that these requirements are satisfied if $\Delta^{2+\epsilon} = O(M)$ and $T - \Delta = \Omega(T)$, however, we make them explicit here to ensure correctness on any input tuple.

In all other cases, \textsc{T-Gen} restarts if the graph $G$ does not satisfy the initial conditions.
Otherwise, the algorithm enters Stage 1 to remove the temporal single-loops in $G$.

\subsection{Stage 1: Removal of Temporal Single-Loops}
\label{subsec:stage1}

\begin{figure}[t]
	\centering
	\resizebox{0.6\textwidth}{!}{
		\def\node#1{}
\def\thickness{thick}
\def\saturation{15}
\begin{tikzpicture}[
	point/.style={
		draw,
		circle,
		inner sep=0,
		fill,
		minimum width=0.35em,
		minimum height=0.35em
	},
	vertex/.style={
		draw,
		circle,
		inner sep=0.25em,
		\thickness,
		fill=white
	},
	edge/.style={
		draw, 
		black, 
		solid,
		\thickness
	}]
	
	\filldraw[fill=red!\saturation!white, draw=black!0!black, \thickness, rounded corners] (-5.25em,5.5em) rectangle (-0.75em,1em);
	\node (t1) at (-4.75em, 5em) {\footnotesize{$t_1$}};
	\node[vertex] (1v1) at (-3em, 2.5em) {$v_1$};
	\draw[edge] (-3.75em, 3em) .. controls (-5em, 5.5em) and (-1.0em, 5.5em) .. (-2.25em, 3em);
	
	\filldraw[fill=green!\saturation!white, draw=black!0!black, \thickness, rounded corners] (-7.5em,0em) rectangle (-3.5em,-6em);
	\node (t2) at (-7em, -0.5em) {\footnotesize{$t_2$}};
	\node[vertex] (1v2) at (-5.5em, -1.5em) {$v_2$};
	\node[vertex] (1v4) at (-5.5em, -4.5em) {$v_4$};
	\path[edge] (1v2) -- (1v4);
	
	\filldraw[fill=green!\saturation!white, draw=black!0!black, \thickness, rounded corners] (-2.5em,0em) rectangle (1.5em,-6em);
	\node (t3) at (-2em, -0.5em) {\footnotesize{$t_3$}};
	\node[vertex] (1v3) at (-0.5em, -1.5em) {$v_3$};
	\node[vertex] (1v5) at (-0.5em, -4.5em) {$v_5$};
	\path[edge] (1v3) -- (1v5);
	
	\path[->, thick, draw] (3.5em,  -0.25em) -- (6.5em,  -0.25em) {};
	
	\filldraw[fill=green!\saturation!white, draw=black!0!black, \thickness, rounded corners] (8.5em,5.5em) rectangle (12.5em,-2em);
	\node (t4) at (9em, 5em) {\footnotesize{$t_4$}};
	\node[vertex] (2v1a) at (10.5em, 4em) {$v_1$};
	\node[vertex] (2v2) at (10.5em, 0em) {$v_2$};
	\path[edge] (2v1a) -- (2v2);
	
	\filldraw[fill=green!\saturation!white, draw=black!0!black, \thickness, rounded corners] (13.5em,5.5em) rectangle (17.5em,-2em);
	\node (t5) at (14em, 5em) {\footnotesize{$t_5$}};
	\node[vertex] (2v1b) at (15.5em, 4em) {$v_1$};
	\node[vertex] (2v3) at (15.5em, 0em) {$v_3$};
	\path[edge] (2v1b) -- (2v3);
	
	\filldraw[fill=green!\saturation!white, draw=black!0!black, \thickness, rounded corners] (8.5em,-3em) rectangle (17.5em,-6em);
	\node (t6) at (9em, -3.5em) {\footnotesize{$t_6$}};
	\node[vertex] (2v4) at (10.5em, -4.5em) {$v_4$};
	\node[vertex] (2v5) at (15.5em, -4.5em) {$v_5$};
	\path[edge] (2v4) -- (2v5);
\end{tikzpicture}
	}
	\caption{The $\switch{TL}$ switching removes a temporal single-loop with timestamp $t_1$ at a node $v_1$. A shaded region labeled with a timestamp contains all edges with this timestamp between the nodes. Red and green shades indicate non-simple and simple edges, respectively.}
	\label{fig:switching-TL}
\end{figure}

Stage 1 removes the temporal single-loops in the graph.
Doing this efficiently requires multiple kinds of switching operations.
In total, we use three kinds of switchings which we denote as $\switch{TL}$, $\switch{A}_{m,n}$ and $\switch{I}$.
We formally define each kind of switching below.
The switching of the main kind written as $\switch{TL}$ removes a temporal single-loop at a specified node and with a specified timestamp.
After performing this kind of switching, an $\switch{A}_{m,n}$ auxiliary switching is performed with a certain probability.
This switching adds up to two edges with multiplicities $\max \{m, n\} \geq \mu$ to the graph to equalize the probability of producing graphs with or without these kinds of edges.
In addition, we define the identity switching $\switch{I}$ which maps each graph to itself.
This is done to formalize the event in which no auxiliary switching is performed.
Formal definitions of the $\switch{TL}$ and $\switch{A}_{m,n}$ switchings are as follows.

\begin{definition}[$\mathsf{TL}$ switching at $v_1, t_1$]
	\label{def:tl-switching}
	For a graph $G$ such that $(\{v_1\}, t_1)$ is a temporal single-loop, let $(\{v_2, v_4\}, t_2)$, $(\{v_3, v_5\}, t_3)$ be edges and $t_4, t_5, t_6 \in [1, T]$ timestamps such that
	\begin{itemize}
		\item none of the edges $(\{v_2, v_4\}, t_2)$, $(\{v_3, v_5\}, t_3)$ is a loop or in a temporal double-edge,
		\item the nodes $v_2, v_3, v_4, v_5$ are distinct from $v_1$, and $v_4$ is distinct from $v_5$, and
		\item none of the edges $(\{v_1, v_2\}, t_4)$, $(\{v_1, v_3\}, t_5)$, $(\{v_4, v_5\}, t_6)$ exist.
	\end{itemize}
	Then, a $\mathsf{TL}$ switching replaces the edges $(\{v_1\}, t_1)$, $(\{v_2, v_4\}, t_2)$, $(\{v_3, v_5\}, t_3)$ with $(\{v_1, v_2\}, t_4)$, $(\{v_1, v_3\}, t_5)$, $(\{v_4, v_5\}, t_6)$ (see \autoref{fig:switching-TL}).
\end{definition}
We stress that the integer subscripts of nodes are used in place of generic indices to reduce visual clutter and simplify the descriptions.
Naturally, these labels may still refer to any node in the graph.

\begin{definition}[$\mathsf{A}_{m,n}$ switching at $v_1, v_2, v_3, v_4, v_5$]
	\label{def:amn-switching}
	For a graph $G$ such that $\{v_2, v_4\}$ and  $\{v_3, v_5\}$ are non-edges, let $(\{v_2, v_{2i + 4} \}, t_i)$, $(\{v_4, v_{2i + 5} \}, t_{m + i})$, $1 \leq i \leq m$ be incident edges at $v_2$, $v_4$, $(\{v_3, v_{2m + 2i + 4} \}, t_{2m + i})$, $(\{v_5, v_{2m + 2i + 5} \}, t_{2m + n + i})$, $1 \leq i \leq n$ incident edges at $v_3$, $v_5$, and $t_{2m+2n+1}, \dots, t_{4m+4n} \in [1, T]$ timestamps such that
	\begin{itemize}
		\item none of the edges $(\{v_2, v_{2i + 4} \}, t_i)$, $(\{v_4, v_{2i + 5} \}, t_{m + i})$, $1 \leq i \leq m$, $(\{v_3, v_{2m + 2i + 4} \}, t_{2m + i})$ and none of $(\{v_3, v_{2m + 2i + 5} \}, t_{2m + n + i})$, $1 \leq i \leq n$ is a loop or in a temporal double-edge,
		\item the nodes $v_1, \dots, v_{2m + 2n + 5}$ are all distinct, and
		\item none of the edges $(\{v_2, v_4\}, t_{2m + 2n + i})$, $(\{v_{2i + 4}, v_{2i + 5}\}, t_{3m + 2n + i})$, $1 \leq i \leq m$ and none of $(\{v_3, v_5\}, t_{4m + 2n + i})$, $(\{v_{2m + 2i + 4}, v_{2m + 2i + 5}\}, t_{4m + 3n + i})$, $1 \leq i \leq n$ exist.
	\end{itemize}
	Then, an $\mathsf{A}_{m,n}$ switching replaces the edges $(\{v_2, v_{2i + 4} \}, t_i)$, $(\{v_4, v_{2i + 5} \}, t_{m + i})$, $1 \leq i \leq m$, $(\{v_3, v_{2m + 2i + 4} \}, t_{2m + i})$, $(\{v_3, v_{2m + 2i + 5} \}, t_{2m + n + i})$, $1 \leq i \leq n$ with $(\{v_2, v_4\}, t_{2m + 2n + i})$, $(\{v_{2i + 4}, v_{2i + 5}\}, t_{3m + 2n + i})$, $1 \leq i \leq m$, $(\{v_3, v_5\}, t_{4m + 2n + i})$, $(\{v_{2m + 2i + 4}, v_{2m + 2i + 5}\}, t_{4m + 3n + i})$, $1 \leq i \leq n$.
\end{definition}
In other words, the $\mathsf{TL}$ switching chooses two edges and then rewires the specified loop and the two edges such that exactly the specified loop is removed and no other non-simple edges are created or removed.
Likewise, the $\switch{A}_{m,n}$ switching chooses $m$ incident edges at two nodes $v_2, v_4$ each and $n$ incident edges at two nodes $v_3, v_5$ each and then rewires the edges such that exactly $m$ simple edges between the nodes $v_2, v_4$ and exactly $n$ simple edges between the nodes $v_3, v_5$ are created and no non-simple edges are created or removed.

After each $\switch{TL}$ switching, we perform an $\switch{A}_{m,n}$ auxiliary switching or the identity switching.
To decide which switching to perform we define a probability distribution over the different types of switchings which ensures uniformity.
In total, the set of $\switch{A}_{m,n}$ auxiliary switchings is
\begin{equation}
	\nonumber
	\hfill \Theta_\switch{A} = \bigcup_{\substack{0 \leq m,n < \Delta \\ \mu \leq \max\{m, n\} }} \{\switch{A}_{m,n}\}. \hfill.
\end{equation}
The switching to be performed is then sampled from the distribution $(\Theta_\switch{A} \cup \{ \switch{I} \}, P_\switch{A})$ where
\begin{equation}
	\nonumber
	\hfill p_\switch{A}(\switch{A}_{m,n}) = p_\switch{A}(\switch{I}) \frac{\overline{f}_{\switch{A}_{m,n}}(\mathbf{W'})}{\underline{b}_{\switch{A}_{m,n}}(\mathbf{W'})}, \quad \quad \quad p_\switch{A}(\switch{I}) = 1 - \sum_{\theta \in \Theta_\switch{A}} p_\switch{A}(\theta) \hfill
\end{equation}
for quantities $\overline{f}_{\switch{A}_{m,n}}(\mathbf{W'})$ and $\underline{b}_{\switch{A}_{m,n}}(\mathbf{W'})$ given further below.

On a high level, Stage 1 runs in a loop until a rejection occurs or all temporal single-loops have been removed from $G$.
As our switchings remove a temporal single-loop at a specified node and with a specified timestamp, the temporal single-loops can be removed in an arbitrary order.
To this end, let $\pi$ denote a permutation of the entries in $\mathbf{W}(G)$ such that $\mathbf{W}_{i,i,t} = 1$.
Then, Stage 1 iterates through the temporal single-loops in the order given by $\pi$ and performs the following steps for each temporal single-loop.

\begin{enumerate}
	\item Let $G$ denote the current graph, $\mathbf{W} = \mathbf{W}(G)$ and $(\{v_1\}, t_1)$ the loop.
	\item Pick a uniform random $\switch{TL}$ switching $S$ which removes $(\{v_1\}, t_1)$ from $G$.
	\item Restart (\textbf{f-reject}) with probability $1 - \frac{f_\switch{TL}(G)}{\overline{f}_\switch{TL}(\mathbf{W})}$.
	\item Rewire the edges according to $S$, let $G'$ denote the resulting graph and $\mathbf{W'} = \mathbf{W}(G')$.
	\item Let $(\{v_2, v_4\}, t_2)$, $(\{v_3, v_5\}, t_3)$ denote the edges removed by $S$. 
	\item Restart if $m_{2,4} \geq \mu$ or $m_{3,5} \geq \mu$.
	\item Restart (\textbf{b-reject}) with probability $1 -  \frac{\underline{b}_\switch{TL}(\mathbf{W'}; 2)}{b_\switch{TL}(G', v_1 v_2 v_3 v_4 v_5; 2)}$.
	\item Choose a switching type $\theta \sim (\Theta_\switch{A} \cup \{ \switch{I} \}, P_\switch{A})$.
	\item If $\theta = \switch{A}_{m,n}$ for some $\switch{A}_{m,n} \in \Theta_\switch{A}$:
	\begin{enumerate}
		\item Restart if $m_{2,4} \geq 1$ or $m_{3,5} \geq 1$.
		\item Pick a uniform random $\switch{A}_{m,n}$ switching $S'$ which adds an edge with node set $\{v_2, v_4\}$ and multiplicity $m$ and an edge with node set $\{v_3, v_5\}$ and multiplicity $n$ to $G'$.
		\item Restart (\textbf{f-reject}) with probability $1 - \frac{f_{\switch{A}_{m,n}}(G')}{\overline{f}_{\switch{A}_{m,n}}(\mathbf{W'})}$.
		\item Rewire the edges according to $S'$ and let $G''$ denote the resulting graph.
		\item Restart (\textbf{b-reject}) with probability $1 - \frac{\underline{b}_{\switch{A}_{m,n}}(\mathbf{W'})}{b_{\switch{A}_{m,n}}(G'', v_1 v_2 v_3 v_4 v_5)}$.
		\item Set $G' \gets G''$.
	\end{enumerate}
	\item Restart (\textbf{b-reject}) with probability $1 -  \frac{\underline{b}_\switch{TL}(\mathbf{W'}; 0)\underline{b}_\switch{TL}(\mathbf{W'}; 1)}{b_\switch{TL}(G', v_1; 0)b_\switch{TL}(G', v_1 v_2 v_3; 1)}$.
	\item Set $G \gets G'$.
\end{enumerate}

\noindent
To fully specify Stage 1, it remains to define the quantities used for the f- and b-rejection steps.
For the f-rejection in step $3$, define $f_{\switch{TL}}(G)$ as the number of $\switch{TL}$ switchings which can be performed on the graph $G$.
The corresponding upper bound is
\begin{equation}
	\nonumber
	\hfill \overline{f}_{\switch{TL}}(\mathbf{W}) = M^2 T^3. \hfill
\end{equation}

\noindent
For the b-rejections in steps $7$ and $10$, define $b_\switch{TL}(G', v_1 v_2 v_3 v_4 v_5; 2)$ as the number of timestamps $t_2, t_3 \in [1, T]$ such that the edges $(\{v_2, v_4\}, t_2)$, $(\{v_3, v_5\}, t_3)$ do not exist in $G'$, $b_\switch{TL}(G', v_1 v_2 v_3; 1)$ as the number of simple temporal edges $(\{v_4, v_5\}, t_6)$ such that $v_4$, $v_5$ are distinct from $v_1$, $v_2$, $v_3$, and $b_\switch{TL}(G', v_1; 0)$ as the number of distinct simple temporal edges $(\{v_1, v_2\}, t_4)$, $(\{v_1, v_3\}, t_5)$ incident at $v_1$.
The lower bounds on these quantities are
\begin{equation}
	\nonumber
	\hfill \underline{b}_\switch{TL}(\mathbf{W'}; 2) = (T - (\mu - 1))^2, \quad \underline{b}_\switch{TL}(\mathbf{W'}; 1) = M - 2 B_L - 4 B_D - 4 \Delta, \quad \underline{b}_\switch{TL}(\mathbf{W'}; 0) = k_1 (k_1 - 1) \hfill
\end{equation}
where $k_i = d_i - \sum_{t} (2 \mathbf{W'}_{i,i,t} + \sum_{1 \leq j \leq n : j \neq i} \mathbf{W'}_{i,j,t})$ denotes the number of simple edges at $v_i$.

For the f-rejection in step $9c$, define $f_{\switch{A}_{m,n}}(G')$ as the number of $\switch{A}_{m,n}$ switchings which can be performed on the graph $G'$.
The upper bound is
\begin{equation}
	\nonumber
	\hfill \overline{f}_{\switch{A}_{m,n}}(\mathbf{W'}) = \Delta^{2 (m+n)} T^{2 (m+n)}. \hfill
\end{equation}

\noindent
For the b-rejection in step $9e$, define $b_{\switch{A}_{m,n}}(G'', v_1 v_2 v_3 v_4 v_5)$ as the number of $\switch{A}_{m,n}$ switchings which can produce the graph $G''$.
The corresponding lower bound is
\begin{equation}
	\nonumber
	\hfill \underline{b}_{\switch{A}_{m,n}}(\mathbf{W'}) = (M - 2 B_L - 4 B_D - 4 (m+n+3) \Delta)^{m+n} (T - (\Delta - 1))^{2 (m+n)}. \hfill
\end{equation}

\noindent
The correctness of Stage 1 is implied by the following result, the proof of which can be found in \autoref{apx-subsec:uniformity-proofs}.

\begin{lemma}
	\label{lem:st1-uniformity}
	The graph $G'$ at the end of an iteration of Stage 1 is uniformly distributed in $\mathcal{S}(\mathbf{W'})$ given that the graph $G$ at the start of the iteration is uniformly distributed in $\mathcal{S}(\mathbf{W})$.
\end{lemma}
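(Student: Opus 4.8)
The plan is to fix a target graph $H \in \mathcal{S}(\mathbf{W'})$ and show that the probability that an iteration terminates with $G' = H$ is the same for every such $H$; since the start graph is uniform on $\mathcal{S}(\mathbf{W})$, this gives uniformity on $\mathcal{S}(\mathbf{W'})$. First I would record the structural facts that make the bookkeeping possible. A $\switch{TL}$ switching removes exactly the loop $(\{v_1\},t_1)$ and creates no other non-simple edge, so it maps $\mathcal{S}(\mathbf{W})$ into $\mathcal{S}(\mathbf{W'})$, where $\mathbf{W'}$ agrees with $\mathbf{W}$ except that the removed-loop entry is $0$. An $\switch{A}_{m,n}$ switching only creates ordinary multi-edges spread over distinct timestamps, hence leaves $\mathbf{W}(\cdot)$ unchanged and acts within $\mathcal{S}(\mathbf{W'})$. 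I would also check that the backward versions of both switchings return graphs in the correct class, so that backward switchings from $H$ are in bijection with (source graph, forward switching) pairs.

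Next I would split on the partition of $\mathcal{S}(\mathbf{W'})$ into $\mathcal{S'}_{\mathbf{m}<\mu}$ and $\mathcal{S'} \setminus \mathcal{S'}_{\mathbf{m}<\mu}$. For $H \in \mathcal{S'}_{\mathbf{m}<\mu}$ the only way to output $H$ is a $\switch{TL}$ switching followed by the identity $\switch{I}$, and the computation is the standard McKay--Wormald telescoping over the three levels of the b-rejection. Each path contributes $\frac{1}{|\mathcal{S}(\mathbf{W})|\,\overline{f}_\switch{TL}(\mathbf{W})}$ times the product of b-rejection factors (the $f_\switch{TL}(G)$ from choosing the switching uniformly cancels against the f-rejection of step $3$). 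Summing over all backward $\switch{TL}$ switchings from $H$ — first the level-$2$ choices $(t_2,t_3)$, then the level-$1$ choice $(v_4,v_5)$, then the level-$0$ choice $(v_2,v_3)$, matching the nesting $v_1 \subset v_1v_2v_3 \subset v_1\!\cdots\! v_5$ — collapses each denominator $b_\switch{TL}(H;\ell)$ against the number of level-$\ell$ choices, leaving the constant $C := \underline{b}_\switch{TL}(\mathbf{W'};0)\,\underline{b}_\switch{TL}(\mathbf{W'};1)\,\underline{b}_\switch{TL}(\mathbf{W'};2)/\big(|\mathcal{S}(\mathbf{W})|\,\overline{f}_\switch{TL}(\mathbf{W})\big)$, multiplied by $p_\switch{A}(\switch{I})$.

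For $H \in \mathcal{S'} \setminus \mathcal{S'}_{\mathbf{m}<\mu}$ I would use that such $H$ is reached by a $\switch{TL}$ switching to an intermediate $G' \in \mathcal{S'}_{\mathbf{m}<\mu}$ in which the specified edges $\{v_2,v_4\},\{v_3,v_5\}$ are non-edges, followed by an $\switch{A}_{m,n}$ switching. Two observations reduce this to the previous case. On $G'$ the specified edges are absent, so the level-$2$ count equals $T^2$ and the step-$7$ rejection against $(T-(\mu-1))^2$ is valid and contributes as before; and an $\switch{A}_{m,n}$ switching touches neither the edges incident to $v_1$ nor the edge $\{v_4,v_5\}$ (the relabelled nodes are all distinct), so the level-$0$ and level-$1$ counts agree on $G'$ and on $H$, which justifies evaluating step $10$ on the final graph. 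Substituting the defining identity $p_\switch{A}(\switch{A}_{m,n}) = p_\switch{A}(\switch{I})\,\overline{f}_{\switch{A}_{m,n}}(\mathbf{W'})/\underline{b}_{\switch{A}_{m,n}}(\mathbf{W'})$ cancels the auxiliary switching's own $\overline{f}_{\switch{A}_{m,n}}$, $\underline{b}_{\switch{A}_{m,n}}$ and $f_{\switch{A}_{m,n}}$ factors, and summing over backward $\switch{A}_{m,n}$ switchings cancels $b_{\switch{A}_{m,n}}(H)$; what remains should telescope exactly as in Case A and again equal $C\cdot p_\switch{A}(\switch{I})$. Finally I would verify the side conditions: that every rejection probability lies in $[0,1]$ (that the stated $\overline{f}$ are upper and the $\underline{b}$ lower bounds, using $\mu = O(1)$ and $T - \Delta = \Omega(T)$), and that $P_\switch{A}$ is a genuine sub-distribution so that $p_\switch{A}(\switch{I}) \ge 0$.

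I expect the main obstacle to be the complement case, and specifically making the cancellation of the auxiliary layer against the deferred level-$0,1$ b-rejection fully rigorous: the backward $\switch{TL}$ switching underlying this case is pinned to the specified edges fixed by the auxiliary step, whereas the counts $b_\switch{TL}(H;0)$ and $b_\switch{TL}(H;1)$ of step $10$ range over all level-$0,1$ choices, so the naive telescoping overcounts. Reconciling this — through the role of $P_\switch{A}$ in equalizing the probabilities over the different choices of specified edges, together with the invariance of the level-$0,1$ counts under the auxiliary switching established above — is the delicate part of the argument; the remaining rejection-bound and distribution-validity checks are routine given the constants $\kappa,\lambda,\mu = O(1)$ supplied by \autoref{lem:tcm-kappa-lambda-eta} and \autoref{lem:edge-bound-probability}.
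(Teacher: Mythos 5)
Your overall strategy coincides with the paper's: for a fixed choice of the three created edges, the identity and auxiliary routes are equalized through the defining relation $p_\switch{A}(\switch{A}_{m,n}) = p_\switch{A}(\switch{I})\,\overline{f}_{\switch{A}_{m,n}}(\mathbf{W'})/\underline{b}_{\switch{A}_{m,n}}(\mathbf{W'})$, and the b-rejection of step $10$ then cancels the remaining sum over the choices of created edges; your Case A is exactly the paper's telescoping computation, ending in the same constant $C\cdot p_\switch{A}(\switch{I})$.

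The genuine gap is in your case decomposition, and it is precisely the obstacle you flag at the end. You partition the \emph{targets} $H \in \mathcal{S}(\mathbf{W'})$ into $\mathcal{S'}_{\mathbf{m}<\mu}$ and its complement and assert that a target in the complement is reached only via an $\switch{A}_{m,n}$ switching. That is false: a graph $H$ containing a pair of multiplicity at least $\mu$ is also produced by plain $\switch{TL}$-plus-identity routes whose specified pairs $\{v_2,v_4\},\{v_3,v_5\}$ are \emph{other} pairs of multiplicity below $\mu$; the high-multiplicity edge then simply persists, untouched, from the source graph $G$, and neither the definition of $\mathcal{S}(\mathbf{W})$ nor the initial conditions exclude ordinary multiplicities $\geq \mu$ from $G$ (they constrain temporal multiplicities and the numbers of non-simple edges only — this is the very reason the auxiliary switchings exist). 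Consequently, summing only auxiliary routes against the step-$10$ denominators $b_\switch{TL}(H,v_1;0)\,b_\switch{TL}(H,v_1v_2v_3;1)$, which range over \emph{all} choices, does not reproduce the Case A constant, and the two cases fail to match. The paper avoids this by never partitioning targets: it fixes the three created edges first, shows that \emph{for each fixed choice} the probability of producing $G'$ after step $9$ equals $p_\switch{A}(\switch{I})\,\underline{b}_\switch{TL}(\mathbf{W'};2)/(\overline{f}_\switch{TL}(\mathbf{W})\,|\mathcal{S}(\mathbf{W})|)$ — where the multiplicities $m_{2,4}, m_{3,5}$ of the specified pairs of \emph{that choice}, not any global property of the target, decide whether the identity or the auxiliary route applies (step $6$ kills the direct route exactly when the auxiliary one takes over) — and only then sums over all choices, which is exactly what step $10$ divides out. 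With the summation organized per choice rather than per target class, the ``pinning'' mismatch you describe never arises, and your per-route calculations go through verbatim. A smaller repair: the invariance of the level-$1$ count under $\switch{A}_{m,n}$ does not follow merely because the switching avoids $v_1$ and $\{v_4,v_5\}$; the switching removes $m+n$ simple edges avoiding $v_1,v_2,v_3$ and creates $m+n$ repair edges avoiding them, and it is this exact balance that preserves $b_\switch{TL}(\cdot,v_1v_2v_3;1)$.
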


\noindent
Showing that Stage 1 is efficient requires showing that both the probability of restarting and the run time of all iterations is small.
To this end, we show the following in \autoref{apx-subsec:run-time-proofs}. 

\begin{lemma}
	\label{lem:st1-rejection}
	The probability of not restarting in Stage 1 is $\exp(- O(\Delta^2 / M) - O(\Delta / T))$.
\end{lemma}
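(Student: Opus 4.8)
The plan is to express the probability of completing Stage~1 without a restart as a product over iterations---one per temporal single-loop, hence at most $L \le B_L = M_2/M$ of them by the initial conditions, since each $\switch{TL}$ switching removes exactly one loop and creates none, as do the auxiliary switchings---and, within each iteration, as a product of the survival probabilities of the individual rejection steps. I would lower-bound each per-iteration survival probability by $1 - O(\Delta/M + 1/T)$ and then combine everything with $1 - x \ge e^{-2x}$ (valid since each factor is $o(1)$ for large $M$), obtaining a total of at least $\exp(-O(B_L(\Delta/M + 1/T)))$. Because $M_2 < \Delta M$ gives $B_L < \Delta$, one has $B_L\,\Delta/M < \Delta^2/M$ and $B_L/T < \Delta/T$, so the product is $\exp(-O(\Delta^2/M) - O(\Delta/T))$, matching the claim.

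The core of the argument is the forward count in step~3, where I would prove $f_\switch{TL}(G) \ge \overline{f}_\switch{TL}(\mathbf{W})\,(1 - O(\Delta/M + 1/T))$. Starting from the idealized $\overline{f}_\switch{TL}(\mathbf{W}) = M^2 T^3$ choices of two directed edges and three free timestamps $t_4, t_5, t_6$, I would subtract the forbidden configurations: picking an edge that is a loop or lies in a temporal double-edge (at most $O(B_L + B_D)$ such edges, each accounting for at most $O(M T^3)$ switchings); picking an edge incident to $v_1$ or sharing an endpoint with the other chosen edge ($O(\Delta)$ incident edges, $O(\Delta M T^3)$ switchings); and choosing a timestamp that recreates an already-present edge (at most $O(1)$ blocked timestamps per new edge by the multiplicity caps, costing a relative $O(1/T)$). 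Dividing by $M^2 T^3$ and using $B_L < \Delta$ together with $B_D = M_2^2/(M^2 T) < \Delta^2/T = O(\Delta)$ (the last step by $\Delta = O(T)$, which follows from $T - \Delta = \Omega(T)$) gives the stated relative error $O(\Delta/M + 1/T)$.

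For the backward rejections I would show the realized counts sit just above their lower bounds. In step~7, the edges $\{v_2, v_4\}, \{v_3, v_5\}$ have multiplicity at most $\mu - 1$ in $G'$ (by step~6 together with the high-probability event of \autoref{lem:edge-bound-probability}), so at most $\mu - 1$ timestamps are blocked for each and $b_\switch{TL}(G', v_1 v_2 v_3 v_4 v_5; 2) \le T^2$; the ratio is thus at least $(T-(\mu-1))^2/T^2 = 1 - O(1/T)$ since $\mu = O(1)$. In step~10, I would bound $b_\switch{TL}(G', v_1 v_2 v_3; 1) \le M$ against $\underline{b}_\switch{TL}(\mathbf{W'}; 1) = M - 2B_L - 4B_D - 4\Delta = M - O(\Delta)$, yielding a factor $1 - O(\Delta/M)$, while $b_\switch{TL}(G', v_1; 0)$ counts exactly the $k_1(k_1-1) = \underline{b}_\switch{TL}(\mathbf{W'}; 0)$ ordered pairs of distinct simple edges at $v_1$ and so contributes a factor of $1$.

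Finally I would dispose of the hard restarts and the auxiliary branch. Conditioning on the high-probability events of \autoref{lem:tcm-kappa-lambda-eta} and \autoref{lem:edge-bound-probability}, no visited graph carries an edge of multiplicity exceeding $\mu$, so after removing one copy of $\{v_2, v_4\}$ or $\{v_3, v_5\}$ their multiplicities in $G'$ are at most $\mu - 1 < \mu$ and the restart in step~6 never fires. The auxiliary branch is entered only with probability $\sum_{\theta \in \Theta_\switch{A}} p_\switch{A}(\theta)$, and since $p_\switch{A}(\switch{A}_{m,n}) = p_\switch{A}(\switch{I})\,\overline{f}_{\switch{A}_{m,n}}(\mathbf{W'})/\underline{b}_{\switch{A}_{m,n}}(\mathbf{W'}) = O((\Delta^2/M)^{m+n})$ over the range $\max\{m,n\} \ge \mu$, this total is $O((\Delta^2/M)^\mu)$, which by $\mu \ge 3 + 2/\epsilon > 1 + 1/\epsilon$ is $o(\Delta/M)$. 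Hence steps~9a--9e---whatever their internal rejection probabilities---contribute at most $o(\Delta/M)$ per iteration, and $\switch{I}$ is chosen with probability $1 - o(\Delta/M)$. Summing the surviving per-step bounds yields the per-iteration survival probability $1 - O(\Delta/M + 1/T)$ used above. The main obstacle is the precise enumeration in the $f_\switch{TL}$ bound: one must verify that every deviation of $f_\switch{TL}(G)$ from $M^2 T^3$ is controlled by the initial conditions $L \le B_L$, $D/2 \le B_D$ and the multiplicity caps $\lambda, \kappa, \eta, \mu = O(1)$, while the backward steps and the auxiliary branch become routine once these caps are available.
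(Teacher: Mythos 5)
Your proposal is correct and takes essentially the same approach as the paper's proof: the same per-iteration decomposition into the step-3 f-rejection, the step-7 and step-10 b-rejections, and the auxiliary branch, the same forbidden-configuration counts (at most $2B_L + 4B_D + O(\Delta)$ bad edge choices out of $M$, at most $\mu - 1 = O(1)$ blocked timestamps out of $T$), and the same final combination over at most $B_L < \Delta$ iterations. The only difference is cosmetic: you bound the auxiliary-branch entry probability by $o(\Delta/M)$ per iteration and fold it multiplicatively into the per-iteration survival bound, whereas the paper bounds it additively over the whole stage as $B_L \cdot o(\Delta^{-1}) = o(1)$; your variant is, if anything, slightly tighter and combines more cleanly into the claimed exponential form.
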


\begin{lemma}
	\label{lem:st1-runtime}
	The expected run time of Stage 1 is $O(\Delta^2)$.
\end{lemma}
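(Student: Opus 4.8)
The plan is to write the expected run time as (expected number of attempts) $\times$ (expected work per attempt) and bound each factor separately. For the first factor, \autoref{lem:st1-rejection} gives that the probability of completing Stage~1 without a restart is $\exp(-O(\Delta^2/M) - O(\Delta/T))$. Under the hypotheses we have $\Delta^2/M = O(\Delta^{-\epsilon}) = o(1)$ and, since $T - \Delta = \Omega(T)$, the quantity $\Delta/T$ is bounded away from $1$, so $O(\Delta/T) = O(1)$; hence this success probability is $\Omega(1)$ and the expected number of attempts is $O(1)$. It therefore suffices to bound the work of one attempt by $O(\Delta^2)$.

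For the work per attempt, I would first bound the number of loop-removal iterations. The initial conditions enforce $L \le B_L = M_2/M$, and as no loop has temporal multiplicity $\ge 2$, $L$ equals the number of temporal single-loops; combined with $M_2 \le (\Delta-1)M$ this gives at most $B_L \le \Delta - 1 = O(\Delta)$ iterations. Each iteration draws a $\switch{TL}$ switching by sampling from the superset of size $\overline{f}_{\switch{TL}}(\mathbf{W}) = M^2T^3$ and testing validity in $O(1)$ time (with the edges stored so that adjacency and timestamp queries are $O(1)$), and rewires in $O(1)$. The only steps costing more than $O(1)$ are the backward counts: $b_{\switch{TL}}(G',v_1v_2v_3;1)$ and $b_{\switch{TL}}(G',v_1;0)$ require counting simple temporal edges incident to a constant number of named nodes (subtracted from a globally maintained simple-edge count), which is $O(\Delta)$, whereas $b_{\switch{TL}}(\cdot;2)$ is $O(1)$. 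So each iteration costs $O(\Delta)$ plus the cost of any auxiliary switching.

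The remaining task is to handle the auxiliary switchings and the sampling of $\theta$ from $(\Theta_\switch{A} \cup \{\switch{I}\}, P_\switch{A})$. The key observation is that the formulas for $\overline{f}_{\switch{A}_{m,n}}(\mathbf{W'})$ and $\underline{b}_{\switch{A}_{m,n}}(\mathbf{W'})$ depend only on the fixed input quantities $M, \Delta, T, B_L, B_D$ and on $m,n$, not on the current graph; hence $P_\switch{A}$ is the same in every iteration and can be tabulated once. Since $\Theta_\switch{A}$ has $O(\Delta^2)$ elements, computing the normalizer $\sum_\theta \overline{f}_{\switch{A}_{m,n}}/\underline{b}_{\switch{A}_{m,n}}$ and preparing an alias table costs $O(\Delta^2)$, after which each draw of $\theta$ is $O(1)$. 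To bound the expected cost of actually performing an auxiliary switching, I would use $T-(\Delta-1) = \Omega(T)$ together with $M - 2B_L - 4B_D - 4(m+n+3)\Delta \ge M/2$ for large $M$ (all subtracted terms are $O(\Delta^2) = o(M)$, which is also ensured by the explicit requirement $M > 16\Delta^2 + 4\Delta + 2B_L + 4B_D$) to obtain $\overline{f}_{\switch{A}_{m,n}}/\underline{b}_{\switch{A}_{m,n}} \le (c\,\Delta^2/M)^{m+n}$ for a constant $c$, and hence $p_\switch{A}(\switch{A}_{m,n}) \le (c\,\Delta^2/M)^{m+n}$. As performing $\switch{A}_{m,n}$ costs $O(m+n)$ and $\max\{m,n\} \ge \mu$ forces $m+n \ge \mu$, the expected auxiliary cost per iteration is $\sum (c\,\Delta^2/M)^{m+n}O(m+n) = O((\Delta^2/M)^{\mu}) = o(1)$ by geometric decay.

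Putting these together, one attempt costs $O(\Delta^2)$ for the one-time tabulation plus $O(\Delta)$ iterations each costing $O(\Delta) + o(1)$, i.e.\ $O(\Delta^2)$ in total; multiplying by the $O(1)$ expected number of attempts gives $O(\Delta^2)$. The main obstacle I expect is the auxiliary-switching bookkeeping: one must verify that $P_\switch{A}$ is graph-independent so that its $O(\Delta^2)$ tabulation is paid only once, and establish the geometric decay of $p_\switch{A}(\switch{A}_{m,n})$ so that the possibly large (up to $O(\Delta)$) cost of a single auxiliary switching contributes only $o(1)$ in expectation rather than an extra factor of $\Delta$.
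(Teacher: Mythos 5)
Your proposal is correct and follows the paper's skeleton for the main accounting: at most $B_L = M_2/M < \Delta$ iterations, each costing $O(\Delta)$ in expectation, with the same superset-sampling implementation of the f-rejection (draw two oriented edges and three timestamps out of $M^2T^3$ possibilities and reject invalid choices) and the same $O(\Delta)$ incidence-list scan for $b_\switch{TL}(G',v_1v_2v_3;1)$. Where you genuinely diverge from the paper is the one delicate step, sampling $\theta$ from $(\Theta_\switch{A}\cup\{\switch{I}\},P_\switch{A})$. The paper handles this \emph{lazily in every iteration}: it groups types by $k=m+n$, computes in $O(\Delta)$ time upper bounds $\overline{p}_\switch{A}(\switch{A}_k)$ on the aggregated probabilities, samples $\switch{I}$ with the complementary lower-bound probability $\underline{p}_\switch{A}(\switch{I})$, and only on the rare event that some group $k$ is hit does it compute the $O(\Delta)$ exact probabilities, adding one more rejection to correct the overestimate. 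You instead observe that the defined formulas for $\overline{f}_{\switch{A}_{m,n}}$ and $\underline{b}_{\switch{A}_{m,n}}$ depend only on $M,\Delta,T,B_L,B_D$ and on $m,n$ --- the argument $\mathbf{W'}$ in the paper's notation is vestigial for these two quantities --- so $P_\switch{A}$ is the same fixed distribution in every iteration and can be tabulated once with an alias table in $O(\Delta^2)$, within the total budget. This observation is valid and yields a cleaner argument; what the paper's lazy scheme buys in exchange is robustness (it would still give $O(\Delta)$ per iteration even if the rejection bounds did depend on the evolving $\mathbf{W'}$, and it avoids any preprocessing). Your additional restart accounting via \autoref{lem:st1-rejection} is harmless; the paper defers that bookkeeping to \autoref{lem:t-gen-runtime}.

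One soft spot, though not a fatal one: you charge only $O(m+n)$ for actually executing an $\switch{A}_{m,n}$ switching. This undercounts steps $9b$--$9e$; in particular the b-rejection requires computing $b_{\switch{A}_{m,n}}(G'',v_1\dots v_5)$ exactly (the number of ways to reverse the switching, including distinctness checks over $2m+2n+5$ nodes and timestamp availability counts), for which the paper budgets $O(\Delta^2)$. Your conclusion survives anyway: by your own geometric-decay bound $p_\switch{A}(\switch{A}_{m,n}) \le (c\,\Delta^2/M)^{m+n}$ together with $m+n \ge \mu \ge 3+2/\epsilon$ and $\Delta^2/M = O(\Delta^{-\epsilon})$, the probability that any auxiliary switching is performed in a given iteration is $O(\Delta^{-(2+3\epsilon)})$, so even an $O(\Delta^2)$ execution cost contributes $o(1)$ expected time per iteration. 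To make the proof complete you should either describe how the auxiliary b-rejection quantity is computed or simply cap its cost by $O(\Delta^2)$ and invoke this slack, rather than asserting $O(m+n)$.
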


\noindent
Stage 1 ends if all temporal single-loops have been removed.
The algorithm then moves on to Stage 2 to remove the remaining temporal double-edges.

\subsection{Stage 2: Removal of Temporal Double-Edges}
\label{subsec:stage2}

\begin{figure}[t]
	\centering
	\resizebox{0.6\textwidth}{!}{
		\def\node#1{}
\def\thickness{thick}
\def\saturation{15}
\begin{tikzpicture}[
	point/.style={
		draw,
		circle,
		inner sep=0,
		fill,
		minimum width=0.35em,
		minimum height=0.35em
	},
	vertex/.style={
		draw,
		circle,
		inner sep=0.25em,
		\thickness,
		fill=white
	},
	edge/.style={
		draw, 
		black, 
		solid,
		\thickness
	}]
	
	\filldraw[fill=red!\saturation!white, draw=black!0!black, \thickness, rounded corners] (-7.5em, 7em) rectangle (1.5em, 3em);
	\node (t1) at (-7em, 6.5em) {\footnotesize{$t_1$}};
	\node[vertex] (1v1) at (-5.5em, 5em) {$v_1$};
	\node[vertex] (1v2) at (-0.5em, 5em) {$v_2$};
	\draw[edge] (1v1) to[bend right] (1v2);
	\draw[edge] (1v1) to[bend left] (1v2);
	
	\filldraw[fill=green!\saturation!white, draw=black!0!black, \thickness, rounded corners] (-7.5em,1em) rectangle (-3.5em,-6em);
	\node (t2) at (-7em, 0.5em) {\footnotesize{$t_2$}};
	\node[vertex] (1v3) at (-5.5em, -0.5em) {$v_3$};
	\node[vertex] (1v5) at (-5.5em, -4.5em) {$v_5$};
	\path[edge] (1v3) -- (1v5);
	
	\filldraw[fill=green!\saturation!white, draw=black!0!black, \thickness, rounded corners] (-2.5em,1em) rectangle (1.5em,-6em);
	\node (t3) at (-2em, 0.5em) {\footnotesize{$t_3$}};
	\node[vertex] (1v4) at (-0.5em, -0.5em) {$v_4$};
	\node[vertex] (1v6) at (-0.5em, -4.5em) {$v_6$};
	\path[edge] (1v4) -- (1v6);
	
	\path[->, thick, draw] (3.5em,  -0.25em) -- (6.5em,  -0.25em) {};
	
	\filldraw[fill=green!\saturation!white, draw=black!0!black, \thickness, rounded corners] (8.5em, 7em) rectangle (17.5em,4em);
	\node (t1) at (9em, 6.5em) {\footnotesize{$t_1$}};
	\node[vertex] (2v1) at (10.5em, 5.5em) {$v_1$};
	\node[vertex] (2v2) at (15.5em, 5.5em) {$v_2$};
	\draw[edge] (2v1) -- (2v2);
	
	\filldraw[fill=green!\saturation!white, draw=black!0!black, \thickness, rounded corners] (8.5em, 3.5em) rectangle (12.5em, -2.5em);
	\node (t4) at (9em, 3em) {\footnotesize{$t_4$}};
	\node[vertex] (2v1b) at (10.5em, 2em) {$v_1$};
	\node[vertex] (2v3) at (10.5em, -1em) {$v_3$};
	\path[edge] (2v1b) -- (2v3);
	
	\filldraw[fill=green!\saturation!white, draw=black!0!black, \thickness, rounded corners] (13.5em, 3.5em) rectangle (17.5em, -2.5em);
	\node (t5) at (14em, 3em) {\footnotesize{$t_5$}};
	\node[vertex] (2v2b) at (15.5em, 2em) {$v_2$};
	\node[vertex] (2v4) at (15.5em, -1em) {$v_4$};
	\path[edge] (2v2b) -- (2v4);
	
	\filldraw[fill=green!\saturation!white, draw=black!0!black, \thickness, rounded corners] (8.5em,-3em) rectangle (17.5em,-6em);
	\node (t6) at (9em, -3.5em) {\footnotesize{$t_6$}};
	\node[vertex] (2v5) at (10.5em, -4.5em) {$v_5$};
	\node[vertex] (2v6) at (15.5em, -4.5em) {$v_6$};
	\path[edge] (2v5) -- (2v6);
\end{tikzpicture}
	}
	\caption{The $\switch{TD}_1$ switching removes a temporal double-edge with timestamp $t_1$ between two nodes $v_1, v_2$ while leaving a single-edge between the nodes. A shaded region labeled with a timestamp contains all edges with this timestamp between the nodes. Red and green shades indicate non-simple and simple edges, respectively.}
	\label{fig:switching-TD}
\end{figure}

Stage 2 uses five kinds of switchings which we denote as $\switch{TD}_1$, $\switch{TD}_0$, $\switch{B}_{m, n}$, $\switch{C}_{m,n,o,p}$ and $\switch{I}$.
The two main switchings $\switch{TD}_1$ and $\switch{TD}_0$ remove a temporal double-edge between two specified nodes with a specified timestamp.
The difference is that the $\switch{TD}_1$ switching only removes one occurrence of the edge while the $\switch{TD}_0$ switching erases both occurrences.
This is done to equalize the probability between graphs in which the removed temporal double-edge is a non-edge, or single edge.
After performing a $\switch{TD}_1$ switching, the $\switch{B}_{m, n}$ auxiliary switching may be performed, and after performing a $\switch{TD}_0$ switching, the $\switch{C}_{m, n,o,p}$ switching may be performed.
These auxiliary switchings add edges with multiplicity $\max \{m, n\} \geq \mu$ or $\max \{m, n,o,p\} \geq \mu$ to the graph to equalize the probabilities between graphs with or without these edges. 
We define the $\mathsf{TD}_1$ switching as follows.

\begin{definition}[$\mathsf{TD}_1$ switching at $(\{v_1, v_2\}, t_1)$]
	\label{def:td1-switching}
	For a graph $G$ such that $(\{v_1, v_2\}, t_1)$ is contained in a temporal double-edge, let $(\{v_3, v_5\}, t_2)$, $(\{v_4, v_6\}, t_3)$ be edges and $t_4, t_5, t_6 \in [1, T]$ timestamps such that
	\begin{itemize}
		\item none of the edges $(\{v_3, v_5\}, t_2)$, $(\{v_4, v_6\}, t_3)$ is contained in a temporal double-edge,
		\item the nodes $v_3, v_4, v_5, v_6$ are distinct from $v_1$ and $v_2$, and $v_5$ is distinct from $v_6$, and
		\item none of the edges $(\{v_1, v_3\}, t_4)$, $(\{v_2, v_4\}, t_5)$, $(\{v_5, v_6\}, t_6)$ exist.
	\end{itemize}
	Then, a $\mathsf{TD}_1$ switching replaces the edges $(\{v_1, v_2\}, t_1)$, $(\{v_3, v_5\}, t_2)$, $(\{v_4, v_6\}, t_3)$ with $(\{v_1, v_3\}, t_4)$, $(\{v_2, v_4\}, t_5)$, $(\{v_5, v_6\}, t_6)$ (see \autoref{fig:switching-TD}).
\end{definition}
The $\mathsf{TD}_0$ switching can be defined analogously by using four edges in place of two to remove both edges contained in the temporal double-edge.
To choose a $\switch{TD}_1$ or $\switch{TD}_0$ switching when removing a temporal double-edge, specify the probability distribution $(\{ \switch{TD}_1, \switch{TD}_0 \}, P)$ where
\begin{equation}
	\nonumber
	\hfill p(\switch{TD}_1) = p(\switch{TD}_0) \frac{p_\switch{C}(\switch{I})}{p_\switch{B}(\switch{I})} \frac{\overline{f}_{\switch{TD}_1}(\mathbf{W})\underline{b}_{\switch{TD}_0}(\mathbf{W'})}{ \overline{f}_{\switch{TD}_0}(\mathbf{W}) \underline{b}_{\switch{TD}_1}(\mathbf{W'})}, \quad \quad \quad p(\switch{TD}_0) = 1 - p(\switch{TD}_1) \hfill
\end{equation}
for quantities $\overline{f}_{\switch{TD}_1}(\mathbf{W})$, $\overline{f}_{\switch{TD}_0}(\mathbf{W})$ and $\underline{b}_{\switch{TD}_1}(\mathbf{W'}) = \underline{b}_{\switch{TD}_1}(\mathbf{W'}; 0) \underline{b}_{\switch{TD}_1}(\mathbf{W'}; 1) \underline{b}_{\switch{TD}_1}(\mathbf{W'}; 2)$, $\underline{b}_{\switch{TD}_0}(\mathbf{W'}) = \underline{b}_{\switch{TD}_0}(\mathbf{W'}; 0) \underline{b}_{\switch{TD}_0}(\mathbf{W'}; 1) \underline{b}_{\switch{TD}_0}(\mathbf{W'}; 2)$ defined below and where $p_\switch{B}(\switch{I})$ and $p_\switch{C}(\switch{I})$ are the probabilities of not performing a $\mathsf{B}_{m,n}$ and $\mathsf{C}_{m,n,o,p}$ auxiliary switching, respectively.

Continuing with the auxiliary switchings, we now define the $\mathsf{B}_{m,n}$ switching.
The $\mathsf{C}_{m,n,o,p}$ switching can be defined analogously by expanding the number of edges created from the two edges removed by the $\switch{TD}_1$ switching to the four edges removed by the $\switch{TD}_0$ switching.
\begin{definition}[$\mathsf{B}_{m,n}$ switching at $v_1, v_2, v_3, v_4, v_5, v_6$]
	For a graph $G$ such that $\{v_3, v_5\}$ and  $\{v_4, v_6\}$ are non-edges, let $(\{v_3, v_{2i + 5} \}, t_i)$, $(\{v_5, v_{2i + 6} \}, t_{m + i})$, $1 \leq i \leq m$ be incident edges at $v_3$, $v_5$, $(\{v_4, v_{2m + 2i + 5} \}, t_{2m + i})$, $(\{v_6, v_{2m + 2i + 6} \}, t_{2m + n + i})$, $1 \leq i \leq n$ incident edges at $v_4$, $v_6$, and $t_{2m+2n+1}, \dots, t_{4m+4n} \in [1, T]$ timestamps such that
	\begin{itemize}
		\item none of the edges $(\{v_3, v_{2i + 5} \}, t_i)$, $(\{v_5, v_{2i + 6} \}, t_{m + i})$, $1 \leq i \leq m$, $(\{v_4, v_{2m + 2i + 5} \}, t_{2m + i})$, $(\{v_6, v_{2m + 2i + 6} \}, t_{2m + n + i})$, $1 \leq i \leq n$ is contained in a temporal double-edge,
		\item the nodes $v_1, \dots, v_{2m + 2n + 6}$ are all distinct, and
		\item none of the edges $(\{v_3, v_5\}, t_{2m + 2n + i})$, $(\{v_{2i + 5}, v_{2i + 6}\}, t_{3m + 2n + i})$, $1 \leq i \leq m$ and none of the edges $(\{v_4, v_6\}, t_{4m + 2n + i})$, $(\{v_{2m + 2i + 5}, v_{2m + 2i + 6}\}, t_{4m + 3n + i})$, $1 \leq i \leq n$ exist.
	\end{itemize}
	Then, a $\mathsf{B}_{m,n}$ switching replaces the edges $(\{v_3, v_{2i + 5} \}, t_i)$, $(\{v_5, v_{2i + 6} \}, t_{m + i})$, $1 \leq i \leq m$, $(\{v_4, v_{2m + 2i + 5} \}, t_{2m + i})$, $(\{v_6, v_{2m + 2i + 6} \}, t_{2m + n + i})$, $1 \leq i \leq n$ with $(\{v_3, v_5\}, t_{2m + 2n + i})$, $(\{v_{2i + 5}, v_{2i + 6}\}, t_{3m + 2n + i})$, $1 \leq i \leq m$, $(\{v_4, v_6\}, t_{4m + 2n + i})$, $(\{v_{2m + 2i + 5}, v_{2m + 2i + 6}\}, t_{4m + 3n + i})$, $1 \leq i \leq n$.
\end{definition}

\noindent
The sets of $\switch{B}_{m,n}$ and $\switch{C}_{m,n,o,p}$ switchings are
\begin{equation}
	\nonumber
	\hfill \Theta_\switch{B} = \bigcup_{\substack{0 \leq m,n < \Delta \\ \mu \leq \max\{m,n\} }} \{\switch{B}_{m,n}\} \quad \quad \quad \quad \quad \Theta_\switch{C} = \bigcup_{\substack{0 \leq m,n,o,p < \Delta \\ \mu \leq \max\{m,n,o,p\} }} \{\switch{C}_{m,n,o,p}\} \hfill
\end{equation}
and the associated type distributions are $(\Theta_\switch{B} \cup \{ \switch{I} \}, P_\switch{B})$ and $(\Theta_\switch{C} \cup \{ \switch{I} \}, P_\switch{C})$ where
\begin{align}
	\nonumber
	\hfill \quad \quad \quad \quad \quad p_\switch{B}(\switch{B}_{m,n}) &= p_\switch{B}(\switch{I}) \frac{\overline{f}_{\switch{B}_{m,n}}(\mathbf{W'})}{\underline{b}_{\switch{B}_{m,n}}(\mathbf{W'})}, \quad \quad \quad &p_\switch{B}(\switch{I}) &= 1 - \sum_{\theta \in \Theta_\switch{B}} p_\switch{B}(\theta),& \hfill
	\\
	\nonumber
	\hfill p_\switch{C}(\switch{C}_{m,n,o,p}) &= p_\switch{C}(\switch{I}) \frac{\overline{f}_{\switch{C}_{m,n,o,p}}(\mathbf{W'})}{\underline{b}_{\switch{C}_{m,n,o,p}}(\mathbf{W'})}, \quad \quad \quad &p_\switch{C}(\switch{I}) &= 1 - \sum_{\theta \in \Theta_\switch{C}} p_\switch{C}(\theta).& \hfill
\end{align}

\noindent
The main loop of Stage 2 is as follows.
Let $\pi$ denote a permutation of the entries in $\mathbf{W}(G)$ such that $\mathbf{W}_{i,j,t} = 2$ and $i \neq j$.
Then, Stage 2 iterates through the temporal double-edges in the order given by $\pi$ and performs the following steps.

\begin{enumerate}
	\item Let $G$ denote the current graph, $\mathbf{W} = \mathbf{W}(G)$ and $(\{v_1, v_2\}, t_1)$ the temporal double-edge.
	\item Choose a switching type $\theta \sim (\{ \switch{TD}_0, \switch{TD}_1 \}, P)$.
	\item Pick a uniform random $\theta$ switching $S$ which removes $(\{v_1, v_2\}, t_1)$ from $G$.
	\item Restart (\textbf{f-reject}) with probability $1 - \frac{f_{\theta}(G)}{\overline{f}_{\theta}(\mathbf{W})}$.
	\item Rewire the edges according to $S$, let $G'$ denote the resulting graph and $\mathbf{W'} = \mathbf{W}(G')$.
	\item If $\theta = \switch{TD}_1$:
	\begin{enumerate}
		\item Let $(\{v_3, v_5\}, t_2)$, $(\{v_4, v_6\}, t_3)$ denote the edges removed by $S$. 
		\item Restart if $m_{3,5} \geq \mu$ or $m_{4,6} \geq \mu$.
		\item Restart (\textbf{b-reject}) with probability $1 -  \frac{\underline{b}_{\switch{TD}_0}(\mathbf{W'}; 2)}{b_{\switch{TD}_0}(G', v_1 \dots v_6; 2)}$.
		\item Choose a switching type $\theta_\switch{B} \sim (\Theta_\switch{B} \cup \{ \switch{I} \}, P_\switch{B})$.
		\item If $\theta = \switch{B}_{m,n}$ for some $\switch{B}_{m,n} \in \Theta_\switch{B}$:
		\begin{enumerate}
			\item Restart if $m_{3,5} \geq 1$ or $m_{4,6} \geq 1$.
			\item Pick a uniform random $\switch{B}_{m,n}$ switching $S'$ which adds an edge with node set $\{v_3, v_5\}$ and multiplicity $m$ and an edge with node set $\{v_4, v_6\}$ and multiplicity $n$ to $G'$.
			\item Restart (\textbf{f-reject}) with probability $1 - \frac{f_{\switch{B}_{m,n}}(G')}{\overline{f}_{\switch{B}_{m,n}}(\mathbf{W'})}$.
			\item Rewire the edges according to $S'$ and let $G''$ denote the resulting graph.
			\item Restart (\textbf{b-reject}) with probability $1 - \frac{\underline{b}_{\switch{B}_{m,n}}(\mathbf{W'})}{b_{\switch{B}_{m,n}}(G'', v_1 \dots v_6)}$.
			\item Set $G' \gets G''$.
		\end{enumerate}
		\item Restart (\textbf{b-reject}) with probability $1 -  \frac{\underline{b}_{\switch{TD}_1}(\mathbf{W'}; 0)\underline{b}_{\switch{TD}_1}(\mathbf{W'}; 1)}{b_{\switch{TD}_1}(G', v_1 v_2; 0)b_{\switch{TD}_1}(G', v_1 v_2 v_3 v_4; 1)}$.
	\end{enumerate}
	\item Else if $\theta = \switch{TD}_0$:
	\begin{enumerate}
		\item Let $(\{v_3, v_7\}, t_2)$, $(\{v_4, v_8\}, t_3)$, $(\{v_5, v_9\}, t_4)$, $(\{v_6, v_{10}\}, t_5)$ denote the edges removed by $S$. 
		\item Restart if $m_{3,7} \geq \mu$, $m_{4,8} \geq \mu$, $m_{5,9} \geq \mu$ or $m_{6,10} \geq \mu$.
		\item Restart (\textbf{b-reject}) with probability $1 -  \frac{\underline{b}_{\switch{TD}_0}(\mathbf{W'}; 2)}{b_{\switch{TD}_0}(G', v_1 \dots v_{10}; 2)}$.
		\item Choose a switching type $\theta_\switch{C} \sim (\Theta_\switch{C} \cup \{ \switch{I} \}, P_\switch{C})$.
		\item If $\theta = \switch{C}_{m,n,o,p}$ for some $\switch{C}_{m,n,o,p} \in \Theta_\switch{C}$:
		\begin{enumerate}
			\item Restart if $m_{3,7} \geq 1$, $m_{4,8} \geq 1$, $m_{5,9} \geq 1$ or $m_{6,10} \geq 1$.
			\item Pick a uniform random $\switch{C}_{m,n,o,p}$ switching $S'$ which adds an edge with node set $\{v_3, v_7\}$ and multiplicity $m$, an edge with node set $\{v_4, v_8\}$ and multiplicity $n$, an edge with node set $\{v_5, v_9\}$ and multiplicity $o$,, and an edge with node set $\{v_6, v_{10}\}$ and multiplicity $p$ to $G'$.
			\item Restart (\textbf{f-reject}) with probability $1 - \frac{f_{\switch{C}_{m,n,o,p}}(G')}{\overline{f}_{\switch{C}_{m,n,o,p}}(\mathbf{W'})}$.
			\item Rewire the edges according to $S'$ and let $G''$ denote the resulting graph.
			\item Restart (\textbf{b-reject}) with probability $1 - \frac{\underline{b}_{\switch{C}_{m,n,o,p}}(\mathbf{W'})}{b_{\switch{C}_{m,n,o,p}}(G'', v_1 \dots v_{10})}$.
			\item Set $G' \gets G''$.
		\end{enumerate}
		\item Restart (\textbf{b-reject}) with probability $1 -  \frac{\underline{b}_{\switch{TD}_0}(\mathbf{W'}; 0)\underline{b}_{\switch{TD}_0}(\mathbf{W'}; 1)}{b_{\switch{TD}_0}(G', v_1 v_2; 0)b_{\switch{TD}_0}(G', v_1 \dots v_6; 1)}$.
	\end{enumerate}
	\item Set $G \gets G'$.
\end{enumerate}

\noindent
It remains to define the quantities required for the f- and b-rejection steps.
For the f-rejection in step $4$, define $f_{\switch{TD}_1}(G)$ and $f_{\switch{TD}_0}(G)$ as the number of $\switch{TD}_1$ and $\switch{TD}_0$ switchings which can be performed on the graph $G$, respectively.
The corresponding upper bounds are
\begin{equation}
	\nonumber
	\hfill \overline{f}_{\switch{TD}_1}(\mathbf{W}) = M^2 T^3, \quad \quad \quad \quad \overline{f}_{\switch{TD}_0}(\mathbf{W}) = M^4 T^6. \hfill
\end{equation}

\noindent
For the b-rejections in steps $6c$ and $7c$, define $b_\switch{TD_1}(G', v_1 \dots v_6; 2)$ as the number of timestamps $t_2, t_3 \in [1, T]$ such that $(\{v_3, v_5\}, t_2)$, $(\{v_4, v_6\}, t_3)$ do not exist in $G'$ and $b_\switch{TD_0}(G', v_1 \dots v_{10}; 2)$ as the number of timestamps $t_2, t_3, t_4, t_5 \in [1, T]$ such that $(\{v_3, v_7\}, t_2)$, $(\{v_4, v_8\}, t_3)$, $(\{v_5, v_9\}, t_4)$, $(\{v_6, v_{10}\}, t_5)$ do not exist in $G'$.
The lower bounds are
\begin{equation}
	\nonumber
	\hfill \underline{b}_{\switch{TD}_1}(\mathbf{W'}; 2) = (T - (\mu - 1))^2, \quad \quad \quad \quad \underline{b}_{\switch{TD}_0}(\mathbf{W'}; 2) = (T - (\mu - 1))^4. \hfill
\end{equation}

\noindent
For the b-rejections in steps $6f$ and $7f$, define $b_\switch{TD_1}(G', v_1 \dots v_4; 1)$ as the number of simple edges $(\{v_5, v_6\}, t_6)$ in $G'$ such that $v_5$, $v_6$ are distinct from $v_1, v_2, v_3, v_4$ and $b_\switch{TD_0}(G', v_1 \dots v_6; 1)$ as the number of distinct simple edges $(\{v_7, v_8\}, t_{10})$, $(\{v_9, v_{10}\}, t_{11})$ in $G'$ such that $v_7$, $v_8$, $v_9$, $v_{10}$ are distinct from $v_1, v_2, v_3, v_4, v_5, v_6$.
Then, define $b_\switch{TD_1}(G', v_1 v_2; 0)$ as the number of simple edges $(\{v_1, v_3\}, t_{4})$, $(\{v_2, v_4\}, t_{5})$ incident at $v_1$ and $v_2$ in $G'$ and $b_\switch{TD_0}(G', v_1 v_2; 0)$ as the number of distinct simple edges $(\{v_1, v_3\}, t_{4})$, $(\{v_2, v_4\}, t_{5})$, $(\{v_1, v_5\}, t_{6})$, $(\{v_2, v_6\}, t_{7})$ incident at $v_1$ and $v_2$ in $G'$.
The lower bounds are
\begin{gather}
\begin{align}
	\nonumber
	\quad&\underline{b}_{\switch{TD}_1}(\mathbf{W'}; 1) = M - 4 B_D - 4 \Delta, & & \underline{b}_{\switch{TD}_0}(\mathbf{W'}; 1) = (M - 4 B_D - 4 \Delta)^2,&
	\\
	\nonumber
	\quad&\underline{b}_{\switch{TD}_1}(\mathbf{W'}; 0) = k_1 k_2, & & \underline{b}_{\switch{TD}_0}(\mathbf{W'}; 0) = k_1 (k_1 - 1) k_2 (k_2 - 1).&
\end{align}
\end{gather}

\noindent
For the f-rejections in steps $6eiii$ and $7eiii$, define $f_{\switch{B}_{m,n}}(G')$, $f_{\switch{C}_{m,n,o,p}}(G')$ as the number of $\switch{B}_{m,n}$, $\switch{C}_{m,n,o,p}$ switchings which can be performed on $G'$, respectively.
In addition, define
\begin{equation}
	\nonumber
	\hfill \overline{f}_{\switch{B}_{m,n}}(\mathbf{W'}) = \Delta^{2 (m+n)} T^{2 (m+n)}, \quad \quad \quad \quad \overline{f}_{\switch{C}_{m,n,o,p}}(\mathbf{W'}) = \Delta^{2 (m+n+o+p)} T^{2 (m+n+o+p)}. \hfill
\end{equation}

\noindent
For the b-rejections in steps $6ev$ and $7ev$, define $b_{\switch{B}_{m,n}}(G'', v_1 \dots v_{6})$ and $b_{\switch{C}_{m,n,o,p}}(G'', v_1 \dots v_{10})$ as the number of $\switch{B}_{m,n}$ and $\switch{C}_{m,n,o,p}$ switchings which can produce the graph $G''$, respectively.
The corresponding lower bounds are
\begin{align}
	\nonumber
	\underline{b}_{\switch{B}_{m,n}}(\mathbf{W'}) &= (M - 4 B_D - 4 (m+n+3) \Delta)^{m+n} (T - (\Delta - 1))^{2 (m+n)},
	\\
	\nonumber
	\underline{b}_{\switch{C}_{m,n,o,p}}(\mathbf{W'}) &= (M - 4 B_D - 4 (m+n+o+p+5) \Delta)^{m+n+o+p} (T - (\Delta - 1))^{2 (m+n+o+p)}.
\end{align}

\noindent
The following lemmata ensure the correctness and efficiency of Stage 2 (proofs in \autoref{apx-subsec:uniformity-proofs} and \autoref{apx-subsec:run-time-proofs}).

\begin{lemma}
	\label{lem:st2-uniformity}
	The graph $G'$ at the end of an iteration of Stage 2 is uniformly distributed in $\mathcal{S}(\mathbf{W'})$ given that the graph $G$ at the start of the iteration is uniformly distributed in $\mathcal{S}(\mathbf{W})$.
\end{lemma}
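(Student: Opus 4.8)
The plan is to mirror the proof of \autoref{lem:st1-uniformity}: fix an arbitrary target graph $G^\ast \in \mathcal{S}(\mathbf{W'})$ and compute the total probability that one iteration of Stage 2 reaches $G^\ast$ without triggering a restart, then show that this probability depends only on $\mathbf{W}$ and $\mathbf{W'}$. Since $G$ is uniform in $\mathcal{S}(\mathbf{W})$ and a restart abandons the iteration, conditioning on no restart then makes $G'$ uniform in $\mathcal{S}(\mathbf{W'})$. The one genuinely new feature relative to Stage~1 is that the removed temporal double-edge $(\{v_1,v_2\},t_1)$ may appear in $G^\ast$ either as a \emph{single edge} or as a \emph{non-edge}, and both are recorded as $0$ in the off-diagonal entry $\mathbf{W'}_{1,2,t_1}$, so both lie in $\mathcal{S}(\mathbf{W'})$. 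First I would observe that a $\switch{TD}_1$ switching always leaves a single edge at $(\{v_1,v_2\},t_1)$ whereas a $\switch{TD}_0$ switching always leaves a non-edge; hence each $G^\ast$ is reachable by \emph{exactly one} of the two main switching types, determined by the status of $(\{v_1,v_2\},t_1)$ in $G^\ast$. This splits the analysis into two cases that can be treated separately and then balanced.

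Within each case the argument reuses the Stage-1 template. Consider the $\switch{TD}_1$ case, and for each backward main switching let $\{v_3,v_5\},\{v_4,v_6\}$ be the node-pairs a backward switching would restore. If both have multiplicity less than $\mu$ in $G^\ast$, the target is produced with the identity auxiliary switching $\switch{I}$. Summing over the backward $\switch{TD}_1$ switchings and using the McKay--Wormald cancellation from the overview, the factors $1/f_{\switch{TD}_1}(G)$ cancel the f-acceptance probabilities $f_{\switch{TD}_1}(G)/\overline{f}_{\switch{TD}_1}(\mathbf{W})$, and the counts $b_{\switch{TD}_1}(G^\ast,\cdot\,;\ell)$ for $\ell\in\{0,1,2\}$ cancel the three corresponding b-acceptance probabilities, leaving the value $p(\switch{TD}_1)\,p_\switch{B}(\switch{I})\,\underline{b}_{\switch{TD}_1}(\mathbf{W'})/\bigl(|\mathcal{S}(\mathbf{W})|\,\overline{f}_{\switch{TD}_1}(\mathbf{W})\bigr)$, which is independent of $G^\ast$. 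Here I would check that the step-$6$b restart ($m_{3,5},m_{4,6}<\mu$) is exactly what validates the bound $\underline{b}_{\switch{TD}_1}(\mathbf{W'};2)=(T-(\mu-1))^2$, so that no acceptance probability exceeds $1$.

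For the complementary subcase, where a specified edge of $G^\ast$ has multiplicity at least $\mu$, the target is reached by a main $\switch{TD}_1$ switching to an intermediate graph $\tilde{G}\in\mathcal{S'}_{\mathbf{m}<\mu}$ in which $\{v_3,v_5\}$ and $\{v_4,v_6\}$ are non-edges, followed by a genuine $\switch{B}_{m,n}$ switching reinserting them at multiplicities $m,n$. The key point is that the type probability $p_\switch{B}(\switch{B}_{m,n})=p_\switch{B}(\switch{I})\,\overline{f}_{\switch{B}_{m,n}}(\mathbf{W'})/\underline{b}_{\switch{B}_{m,n}}(\mathbf{W'})$ is chosen so that, after the $\switch{B}$ f- and b-rejections telescope, the combined probability of the path through $\tilde{G}$ equals the identity-subcase value above. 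I would use that the $\switch{B}_{m,n}$ switching rewires only edges incident to $v_3,v_4,v_5,v_6$ and preserves the edges $\{v_1,v_3\},\{v_2,v_4\},\{v_5,v_6\}$ that govern the level-$0$ and level-$1$ b-rejections of the main switching, so the two phases' rejections multiply cleanly; and that $\switch{I}$ covers $\mathcal{S'}_{\mathbf{m}<\mu}$ while the $\switch{B}_{m,n}$ with $\max\{m,n\}\geq\mu$ cover $\mathcal{S'}\setminus\mathcal{S'}_{\mathbf{m}<\mu}$, giving a disjoint cover. The $\switch{TD}_0$/$\switch{C}_{m,n,o,p}$ case is identical in spirit but bookkeeps four removed edges and four specified node-pairs.

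It then remains to balance the two main switching types so that single-edge and non-edge targets receive equal probability. The two per-case values are $p(\switch{TD}_1)p_\switch{B}(\switch{I})\underline{b}_{\switch{TD}_1}(\mathbf{W'})/(|\mathcal{S}(\mathbf{W})|\overline{f}_{\switch{TD}_1}(\mathbf{W}))$ and $p(\switch{TD}_0)p_\switch{C}(\switch{I})\underline{b}_{\switch{TD}_0}(\mathbf{W'})/(|\mathcal{S}(\mathbf{W})|\overline{f}_{\switch{TD}_0}(\mathbf{W}))$, and the prescribed ratio
\[
p(\switch{TD}_1)=p(\switch{TD}_0)\,\frac{p_\switch{C}(\switch{I})}{p_\switch{B}(\switch{I})}\,\frac{\overline{f}_{\switch{TD}_1}(\mathbf{W})\,\underline{b}_{\switch{TD}_0}(\mathbf{W'})}{\overline{f}_{\switch{TD}_0}(\mathbf{W})\,\underline{b}_{\switch{TD}_1}(\mathbf{W'})}
\]
is exactly what equalizes them, making every $G^\ast\in\mathcal{S}(\mathbf{W'})$ equiprobable. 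I expect the main obstacle to be carrying out the $\switch{TD}_0$/$\switch{C}_{m,n,o,p}$ telescoping and the balancing step simultaneously: verifying that the heavier four-edge bookkeeping produces algebraically the same shape as the $\switch{TD}_1$/$\switch{B}_{m,n}$ one (so that the single balancing factor above suffices), and confirming that across all backward main switchings the auxiliary-switching classification into identity- and genuine-type paths is a genuine disjoint cover of $\mathcal{S}(\mathbf{W'})$ with each high-multiplicity configuration arising from a unique intermediate graph and a unique $\switch{B}$ or $\switch{C}$ count.
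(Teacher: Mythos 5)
Your proposal is correct and takes essentially the same route as the paper: the paper's proof likewise splits on whether the removed temporal double-edge survives as a single edge (handled by $\switch{TD}_1$ with auxiliary $\switch{B}_{m,n}$) or becomes a non-edge (handled by $\switch{TD}_0$ with auxiliary $\switch{C}_{m,n,o,p}$), runs the Stage-1 telescoping argument within each case using \autoref{lem:st2-bound-1}--\autoref{lem:st2-bound-4} to arrive at exactly your two per-case probabilities, and then notes that the prescribed distribution $(\{\switch{TD}_0,\switch{TD}_1\},P)$ equalizes them. The only difference is one of detail: the paper compresses the within-case cancellation into a reference to the proof of \autoref{lem:st1-uniformity}, whereas you spell out the identity/high-multiplicity subcases and the disjoint-cover check explicitly.
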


\begin{lemma}
	\label{lem:st2-rejection}
	The probability of not restarting in Stage 2 is $\exp(- O(\Delta^3 / M T) - O(\Delta^2 / T^2))$.
\end{lemma}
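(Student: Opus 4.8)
The plan is to mirror the proof of \autoref{lem:st1-rejection}: bound the probability of restarting inside a single iteration by $O(\Delta/M) + O(1/T)$, and then multiply this over the iterations of Stage 2. The number of iterations is at most $B_D = M_2^2/(M^2 T)$, since the graph entering Stage 2 passed the initial condition $D/2 \le B_D$ and each iteration removes exactly one temporal double-edge while the switchings are defined so as never to create a new one. Writing $p_i$ for the total restart probability of iteration $i$ and using $\prod_i (1-p_i) \ge \exp(-(1+o(1))\sum_i p_i)$ (valid since each $p_i = o(1)$), it suffices to show $\sum_i p_i = O(\Delta^3/(MT)) + O(\Delta^2/T^2)$. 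Given the per-iteration bound this follows from $\sum_i p_i \le B_D\,(O(\Delta/M) + O(1/T))$ on substituting $B_D = M_2^2/(M^2T)$ and using $M_2 \le \Delta M$, which turns $B_D\,\Delta/M$ into $O(\Delta^3/(MT))$ and $B_D/T$ into $O(\Delta^2/T^2)$.

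What remains is the per-iteration estimate, established uniformly over the graphs that can occur; by \autoref{lem:edge-bound-probability} we may assume every such graph has all edge multiplicities at most $\mu = O(1)$, and the initial conditions bound $D$ and the incidences at each node. The dominant contribution is the main f-rejection in step $4$, where $1 - f_\theta(G)/\overline{f}_\theta(\mathbf{W})$ equals the fraction of the $\overline{f}_\theta(\mathbf{W})$ candidate switchings that are forbidden. For $\theta = \switch{TD}_1$ I would split these into a chosen edge lying in a temporal double-edge ($O(D/M)$), a collision among the nodes $v_1,\dots,v_6$ ($O(\Delta/M)$), and a to-be-created edge such as $(\{v_1,v_3\},t_4)$ already being present; here a fixed node pair blocks only $\le \mu = O(1)$ timestamps, and summing the blocked configurations globally via $\sum_{v} \deg(v)\,m_{1,v} \le \Delta\deg(v_1) \le \Delta^2$ gives $O(\Delta^2/(MT))$ for this last source. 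Since $\Delta = O(T)$ and $D = O(B_D)$, the node-collision term $O(\Delta/M)$ dominates, so the f-rejection costs $O(\Delta/M)$; the $\switch{TD}_0$ branch, with four rewired edges and six free timestamps, is handled by the identical split and yields the same order.

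For the b-rejections I would separate the three lower bounds. The $(\,;2)$ rejection compares $\underline{b}(\mathbf{W'};2) = (T-(\mu-1))^2$ (a fourth power for $\switch{TD}_0$) against the true number of admissible timestamp pairs, which lies between that bound and $T^2$ (resp.\ $T^4$); since $\mu = O(1)$ the relative deficit is $O(1/T)$. The $(\,;1)$ rejection compares the count of admissible simple edges, bounded below by $M - 4B_D - 4\Delta$, against a value at most $M$, giving relative deficit $O((B_D+\Delta)/M) = O(\Delta/M)$; the $(\,;0)$ rejection, comparing against $k_1 k_2$, is likewise $O(\Delta/M)$. Finally, the multiplicity guards (steps $6b$, $7b$ and $6(e)i$, $7(e)i$) and the rejections internal to the $\switch{B}$ and $\switch{C}$ auxiliary switchings are kept sub-dominant by the choice $\mu = \lfloor 3 + 2/\epsilon\rfloor$: using $T-\Delta = \Omega(T)$ to keep the $T$-factors in $\overline{f}/\underline{b}$ bounded, each auxiliary type is selected with total probability $\sum_\theta p_\switch{B}(\theta), \sum_\theta p_\switch{C}(\theta) = O((\Delta^2/M)^{\mu})$, so that their aggregate contribution over all $B_D$ iterations is a factor $\Delta^{-1-2\epsilon} = o(1)$ below the first claimed term. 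Collecting everything gives $p_i = O(\Delta/M) + O(1/T)$, completing the plan.

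I expect the genuine obstacle to be the uniform f-rejection lower bound, i.e.\ showing $f_\theta(G) \ge (1 - O(\Delta/M))\,\overline{f}_\theta(\mathbf{W})$ for every reachable $G$ and both main switching types, since this is where the full enumeration of forbidden configurations (degenerate nodes, edges already in a double-edge, and the temporal-specific event of a created edge colliding at its assigned timestamp) must be carried out and shown to stay below threshold, with the global counting step preventing the timestamp contributions from inflating the $O(1/T)$ and $O(\Delta/M)$ terms. A secondary difficulty is the bookkeeping for the auxiliary $\switch{B}/\switch{C}$ switchings and the multiplicity guards: one must verify that they never contribute a term exceeding $O(\Delta/M) + O(1/T)$, which is precisely the role of the bound $\mu \ge 3 + 2/\epsilon$ together with \autoref{lem:edge-bound-probability}, and that the two-branch, Markov-chain-free selection of $\switch{TD}_1$ versus $\switch{TD}_0$ does not spoil the uniform per-iteration estimate.
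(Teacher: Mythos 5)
Your proposal is correct and follows essentially the same route as the paper's proof: bound the per-iteration f- and b-rejection probabilities by $O(\Delta/M) + O(1/T)$ (edges in temporal double-edges, node collisions, and blocked timestamps under the multiplicity bound $\mu = O(1)$ from \autoref{lem:edge-bound-probability}), dismiss the auxiliary $\switch{B}/\switch{C}$ switchings and multiplicity guards as subdominant, and multiply over the at most $B_D \leq \Delta^2/T$ iterations. Your minor refinements --- the global count $\sum_v \deg(v) m_{1,v} \leq \Delta^2$ for timestamp collisions and the explicit $\Delta^{-1-2\epsilon}$ comparison for the auxiliary contribution (where the paper simply invokes \autoref{lem:st2-auxiliary} to get $B_D \cdot o(\Delta^{-1}) = o(1)$) --- are valid and if anything slightly sharper, but do not change the structure of the argument.
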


\begin{lemma}
	\label{lem:st2-runtime}
	The expected run time of Stage 2 is $O(\Delta^3 / T)$.
\end{lemma}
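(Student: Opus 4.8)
The plan is to bound the expected run time of one pass of Stage 2 by the product of the number of iterations and the expected cost of a single iteration, and then to multiply by the expected number of passes, which is constant by \autoref{lem:st2-rejection}. For the number of iterations I would first argue that the switchings never create a new temporal double-edge: each of $\switch{TD}_1$, $\switch{TD}_0$ introduces only edges whose non-existence is required by \autoref{def:td1-switching} and its $\switch{TD}_0$ analogue, while $\switch{B}_{m,n}$ and $\switch{C}_{m,n,o,p}$ touch only edges that are not contained in a temporal double-edge. Hence every iteration removes exactly one temporal double-edge and creates none, so the number of iterations equals the initial count $D/2$. Since a graph passing the initial conditions satisfies $D/2 \leq B_D = M_2^2 / (M^2 T)$ and $M_2 \leq (\Delta - 1) M$, this count is $O(\Delta^2 / T)$.

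Next I would bound the expected cost of a single iteration by $O(\Delta)$. Sampling the type from $(\{\switch{TD}_0, \switch{TD}_1\}, P)$ and the auxiliary type from $(\Theta_\switch{B} \cup \{\switch{I}\}, P_\switch{B})$ or $(\Theta_\switch{C} \cup \{\switch{I}\}, P_\switch{C})$ uses weights that are explicit functions of $M$, $T$, $\Delta$ and the multiplicity parameters; these are precomputable, and since almost all mass sits on $\switch{I}$, a draw costs $O(1)$ in expectation. The main switching then selects a candidate from the $\overline{f}$-sized index set, verifies the conditions of its definition, evaluates the forward and backward rejection probabilities, and rewires a constant number of edges. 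With adjacency structures storing edge multiplicities together with global counts of simple and admissible edges, each such step inspects only the constantly many involved nodes and their incident edges; as every node has degree at most $\Delta$, the cost is $O(\Delta)$. In particular the backward counts such as $b_{\switch{TD}_1}(G', v_1 \dots v_4; 1)$ and the factor $k_1 k_2$ in $\underline{b}_{\switch{TD}_1}(\mathbf{W'}; 0)$ are recovered from node degrees and $O(1)$ adjacency queries, each scanning at most $\Delta$ incident edges.

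It then remains to show that the auxiliary switchings do not inflate this bound. A performed $\switch{B}_{m,n}$ rewires $O(m + n)$ edges, so its expected contribution to an iteration is $\sum_{\switch{B}_{m,n} \in \Theta_\switch{B}} p_\switch{B}(\switch{B}_{m,n}) \, O(m+n)$, and $p_\switch{B}(\switch{B}_{m,n}) \leq \overline{f}_{\switch{B}_{m,n}}(\mathbf{W'}) / \underline{b}_{\switch{B}_{m,n}}(\mathbf{W'})$. The condition $T - \Delta = \Omega(T)$ gives $T/(T-\Delta+1) = O(1)$, so the timestamp factors contribute at most $c_1^{m+n}$ for a constant $c_1$, while the explicit requirement $M > 16 \Delta^2 + \dots$ keeps $M/(M - 4 B_D - 4(m+n+3)\Delta) \leq 2$ uniformly over $m + n < 2\Delta$, so the edge factors contribute at most $2^{m+n}$. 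Together these give $\overline{f}_{\switch{B}_{m,n}}(\mathbf{W'}) / \underline{b}_{\switch{B}_{m,n}}(\mathbf{W'}) = O((c\, \Delta^2 / M)^{m+n})$ for a constant $c$. Since $\Delta^{2+\epsilon} = O(M)$ gives $\Delta^2/M = o(1)$, the base is $o(1) < 1$, so the series converges geometrically and is dominated by its smallest exponent; as $\max\{m,n\} \geq \mu$ forces $m + n \geq \mu = O(1)$, the expected auxiliary cost is $O((\Delta^2/M)^\mu) = o(1)$. The identical estimate applies to $\switch{C}_{m,n,o,p}$. Hence one iteration costs $O(\Delta)$ in expectation, and a single pass of Stage 2 costs $O(\Delta^2/T) \cdot O(\Delta) = O(\Delta^3/T)$.

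Finally I would convert the per-pass bound into the total. Each pass processes at most $B_D = O(\Delta^2/T)$ double-edges wherever a restart occurs, so its work is $O(\Delta^3/T)$; by \autoref{lem:st2-rejection}, together with the constant probabilities of passing the initial conditions and completing Stage 1, a full run reaches and completes Stage 2 with probability $\Omega(1)$, so Stage 2 is entered $O(1)$ times in expectation, and multiplying yields the claimed $O(\Delta^3/T)$. I expect the main obstacle to be the third step: bounding $\overline{f}_{\switch{B}_{m,n}} / \underline{b}_{\switch{B}_{m,n}}$ and its $\switch{C}$ counterpart uniformly in the multiplicity parameters and then summing the resulting series, since this is exactly where the hypotheses $\Delta^{2+\epsilon} = O(M)$ and $T - \Delta = \Omega(T)$ are needed to prevent the rare high-multiplicity auxiliary switchings from dominating the cost.
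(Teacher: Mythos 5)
Your overall decomposition matches the paper's proof exactly: at most $B_D = M_2^2/(M^2T) = O(\Delta^2/T)$ iterations (the paper takes this directly from the initial conditions; your extra argument that no switching creates a new temporal double-edge is correct), each iteration costing $O(\Delta)$ in expectation, with f-rejections implemented by sampling uniformly from the $\overline{f}$-sized index set and checking validity, b-rejection counts computed from $O(1)$ adjacency scans of cost $O(\Delta)$ each, and auxiliary switchings contributing negligibly. However, there is one genuine gap, and it sits precisely where the paper spends most of its effort: the sampling of the switching-type distributions. You assert the weights are ``precomputable'' and that a draw costs $O(1)$ in expectation, but $|\Theta_\switch{B}| = \Theta(\Delta^2)$ and $|\Theta_\switch{C}| = \Theta(\Delta^4)$, so computing these distributions atom-by-atom---even once---exceeds the $O(\Delta^3/T)$ budget of this lemma (and $\Delta^4$ can exceed even the global $O(M)$ budget, since $\Delta^{2+\epsilon} = O(M)$ permits $\Delta^4 = \omega(M)$ for $\epsilon < 2$). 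Nor can this be dodged by lazy sampling alone: step $2$ of \emph{every} iteration draws from $(\{\switch{TD}_0,\switch{TD}_1\}, P)$, and $p(\switch{TD}_1)$ is defined via the exact normalizing constants $p_\switch{B}(\switch{I})$ and $p_\switch{C}(\switch{I})$, i.e., the full sums over $\Theta_\switch{B}$ and $\Theta_\switch{C}$. The missing idea (the paper's fix) is that $\overline{f}_{\switch{C}_{m,n,o,p}}/\underline{b}_{\switch{C}_{m,n,o,p}}$ depends only on $k = m+n+o+p$, so the $\Theta(\Delta^4)$ atoms collapse into $O(\Delta)$ classes: one computes per-class bounds $\overline{p}_\switch{C}(\switch{C}_k)$ and the resulting bound $\underline{p}_\switch{C}(\switch{I})$ in total time $O(\Delta)$, skips the auxiliary step with that probability, and only in the remaining $o(\Delta^{-1})$-probability event resolves the exact probabilities within the chosen class together with a correcting rejection. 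Your remark that the weights are explicit functions of the multiplicity parameters is the right starting point, but without this collapse by $k$ the sampling step is not within budget and the per-iteration $O(\Delta)$ claim fails.

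A second, minor inaccuracy: you account a performed $\switch{B}_{m,n}$ switching as $O(m+n)$ work, which covers only the rewiring. Executing steps $6ei$--$6evi$ also requires its own f-rejection and the b-rejection count $b_{\switch{B}_{m,n}}(G'', v_1 \dots v_6)$, which the paper implements in time $O(\Delta^2)$. This does not break the bound---by \autoref{lem:st2-auxiliary} the execution probability is $o(\Delta^{-1})$, so the expected contribution is still $o(\Delta)$, which is the argument the paper uses---but your geometric-series estimate bounds the wrong quantity, and you should replace $O(m+n)$ by the actual per-execution cost before multiplying by the execution probability.
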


\noindent
Once Stage 2 ends, the final graph is simple and can be output.

\section{Proof of \autoref{thm:t-gen}}
\label{subsec:thm2-proof}
It remains to show \autoref{thm:t-gen}.
We start with the uniformity of the output distribution.

\subsection{Uniformity of \textsc{T-Gen}}
\label{apx-subsec:uniformity-proofs}

To show \autoref{lem:st1-uniformity}, we first show that the upper and lower bounds specified for Stage 1 hold.
To this end, let $\mathcal{M}_0(\mathbf{D}) \subseteq \mathcal{M}(\mathbf{D})$ denote the set of temporal multigraphs which satisfy the initial conditions specified in \autoref{subsec:initial-rejection}.
Then, we obtain the following.

\begin{lemma}
	\label{lem:st1-bounds-1}
	For all $G \in \mathcal{S}(\mathbf{W}) \subseteq \mathcal{M}_0(\mathbf{D})$ we have
	\begin{align}
		\nonumber
		&f_{\switch{TL}}(G) \leq \overline{f}_{\switch{TL}}(\mathbf{W}).
	\end{align}
\end{lemma}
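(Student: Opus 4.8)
The plan is to establish the bound by directly over-counting the number of $\switch{TL}$ switchings available at the (fixed) temporal single-loop $(\{v_1\}, t_1)$ and then discarding the side conditions of \autoref{def:tl-switching}, which can only decrease this count. First I would observe that, once the loop $(\{v_1\}, t_1)$ to be removed is fixed (as it is throughout an iteration of Stage 1), a $\switch{TL}$ switching is completely specified by a choice of the two \emph{oriented} edges $(\{v_2, v_4\}, t_2)$ and $(\{v_3, v_5\}, t_3)$ together with the three timestamps $t_4, t_5, t_6 \in [1, T]$. By oriented I mean that selecting such an edge amounts to picking an edge of $G$ and designating which endpoint plays the role of $v_2$ (resp.\ $v_3$) and which plays the role of $v_4$ (resp.\ $v_5$); this distinction matters because the switching attaches $v_2$ and $v_3$ to $v_1$ but joins $v_4$ to $v_5$, so the two endpoints are not interchangeable.

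Next I would bound each factor separately. Since \autoref{def:tl-switching} forbids the chosen edges from being loops, they range over the non-loop edges of $G$, of which there are at most $M/2$ (the degree sum $M$ equals twice the total number of edges); each such edge admits exactly two orientations, giving at most $M$ choices for the oriented edge $(\{v_2, v_4\}, t_2)$ and, independently, at most $M$ choices for $(\{v_3, v_5\}, t_3)$. The timestamps $t_2, t_3$ are not free parameters, as they are determined by the selected edges, whereas each of $t_4, t_5, t_6$ ranges over $[1, T]$ and so contributes a factor of at most $T$. Multiplying these bounds yields at most $M \cdot M \cdot T \cdot T \cdot T = M^2 T^3$ candidate switchings.

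Finally, since every valid switching counted by $f_{\switch{TL}}(G)$ arises from such a selection, while the distinctness and non-existence conditions in \autoref{def:tl-switching} merely rule out certain selections, I conclude $f_{\switch{TL}}(G) \le M^2 T^3 = \overline{f}_{\switch{TL}}(\mathbf{W})$. The only point that requires genuine care is the orientation bookkeeping in the edge count: one must account for the factor of two coming from the two endpoint assignments in order to obtain the factor $M$ (rather than $M/2$) per edge, so that the product matches the stated bound exactly. Beyond this, the argument is a routine over-count and uses none of the structural properties of $\mathcal{M}_0(\mathbf{D})$, which is consistent with the fact that an upper bound on $f_{\switch{TL}}$ should hold uniformly over the class.
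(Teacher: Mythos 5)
Your proposal is correct and follows essentially the same route as the paper's proof: bound $f_{\switch{TL}}(G)$ by the number of ways to pick two oriented edges (at most $M^2$) and three timestamps (at most $T^3$), noting that the conditions of \autoref{def:tl-switching} only remove candidates from this count. Your explicit accounting of the orientation factor (at most $M/2$ edges, each with two endpoint assignments) just spells out what the paper compresses into the phrase ``(oriented) edges.''
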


\begin{proof}
	By \autoref{def:tl-switching}, a $\switch{TL}$ switching involves two edges $(\{v_2, v_4\}, t_2)$, $(\{v_3, v_5\}, t_3)$ and three timestamps $t_4, t_5, t_6 \in [1, T]$.
	The total number of choices for the edges and timestamps constitutes an upper bound on the number of switchings which can be performed, and there are at most $M^2$ choices the (oriented) edges and at most $T^3$ choices the timestamps.
\end{proof}

\begin{lemma}
	\label{lem:st1-bounds-2}
	For all $G' \in \mathcal{S}(\mathbf{W'}) \subseteq \mathcal{M}_0(\mathbf{D})$ where $m_{2,4} < \mu$ and $m_{3,5} < \mu$ we have
	\begin{align}
		\nonumber
		&\underline{b}_{\switch{TL}}(\mathbf{W'}; 2) \leq b_\switch{TL}(G', v_1 v_2 v_3 v_4 v_5; 2) \leq T^2
	\end{align}
	and for all $G' \in \mathcal{S}(\mathbf{W'}) \subseteq \mathcal{M}_0(\mathbf{D})$ we have
	\begin{align}
		\nonumber
		&\underline{b}_\switch{TL}(\mathbf{W'}; 1) \leq b_\switch{TL}(G', v_1 v_2 v_3; 1) \leq M,
		\\
		\nonumber
		&\underline{b}_\switch{TL}(\mathbf{W'}; 0) = b_\switch{TL}(G', v_1; 0).
	\end{align}
\end{lemma}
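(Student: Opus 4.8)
The plan is to exploit that the backward count for a $\switch{TL}$ switching factorizes into three independent pieces, matching the three superscripts in the notation. Reversing a switching (see \autoref{def:tl-switching}) on a fixed $G'$ amounts to independently reconstructing (i) the timestamps $t_2, t_3$ of the two restored edges $(\{v_2, v_4\}, t_2)$, $(\{v_3, v_5\}, t_3)$, counted by the $;2$ factor; (ii) the rewired edge $(\{v_4, v_5\}, t_6)$, counted by the $;1$ factor; and (iii) the two edges $(\{v_1, v_2\}, t_4)$, $(\{v_1, v_3\}, t_5)$ incident at $v_1$, counted by the $;0$ factor. I would bound each factor separately against its claimed upper/lower bound.

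For the $;2$ factor the upper bound $b_\switch{TL}(G', v_1 v_2 v_3 v_4 v_5; 2) \leq T^2$ is immediate, since there are at most $T$ admissible timestamps for each of $t_2$ and $t_3$. For the lower bound I would observe that a timestamp is unavailable for $(\{v_2, v_4\}, t_2)$ only if that edge already occurs in $G'$, so the number of blocked timestamps is at most the multiplicity $m_{2,4}$ of $\{v_2, v_4\}$; invoking the hypothesis $m_{2,4} < \mu$ leaves at least $T - (\mu - 1)$ free timestamps, and symmetrically for $t_3$. Because the two edges sit on different node pairs, the choices are independent, yielding $b_\switch{TL}(G', v_1 v_2 v_3 v_4 v_5; 2) \geq (T - (\mu - 1))^2 = \underline{b}_\switch{TL}(\mathbf{W'}; 2)$. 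This is exactly the step where the restriction $m_{2,4}, m_{3,5} < \mu$ is essential: without the constant bound $\mu$ on multiplicity guaranteed by the auxiliary switchings, the worst case would be multiplicity $\Delta - 1$, forcing the far weaker bound $T - (\Delta - 1)$ and ruining the rejection efficiency. I expect this timestamp-availability argument to be the main obstacle, since it is the only place the multiplicity hypothesis enters and the only nontrivial counting.

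For the $;1$ factor the upper bound $b_\switch{TL}(G', v_1 v_2 v_3; 1) \leq M$ holds because the number of oriented simple edges never exceeds the total number $M$ of oriented edges. For the lower bound I would count the oriented simple edges of $G'$ and discard those meeting the distinguished nodes: the initial conditions of \autoref{subsec:initial-rejection} give total loop multiplicity $L \leq B_L$ and total temporal multi-edge multiplicity bounded via $D/2 \leq B_D$, so at least $M - 2 B_L - 4 B_D$ oriented edges are simple, and removing the $O(\Delta)$ oriented edges incident to $v_1, v_2, v_3$ (each of degree at most $\Delta$) leaves at least $\underline{b}_\switch{TL}(\mathbf{W'}; 1) = M - 2 B_L - 4 B_D - 4 \Delta$ admissible edges.

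Finally, the $;0$ factor admits an exact evaluation rather than a mere inequality. Here $b_\switch{TL}(G', v_1; 0)$ is the number of ordered pairs of distinct simple edges incident at $v_1$; there are exactly $k_1 = d_1 - \sum_t (2 \mathbf{W'}_{1,1,t} + \sum_{j \neq 1} \mathbf{W'}_{1,j,t})$ such edges, a quantity fixed by $\mathbf{W'}$, and any two distinct simple edges at $v_1$ necessarily reach distinct endpoints $v_2 \neq v_3$ (otherwise $\{v_1, v_2\}$ would be a multi-edge, contradicting simplicity). Hence $b_\switch{TL}(G', v_1; 0) = k_1 (k_1 - 1) = \underline{b}_\switch{TL}(\mathbf{W'}; 0)$ holds with equality for every $G' \in \mathcal{S}(\mathbf{W'})$, which is precisely why this factor is stated as an identity while the other two are genuine two-sided bounds.
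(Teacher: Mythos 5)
Your proposal follows essentially the same route as the paper's proof: the ``$;2$'' factor is bounded via timestamp availability using $m_{2,4}, m_{3,5} \leq \mu - 1$, the ``$;1$'' factor by subtracting from the $M$ oriented edges the at most $2B_L + 4B_D$ non-simple ones and the $O(\Delta)$ ones colliding with $v_1, v_2, v_3$, and the ``$;0$'' factor is evaluated exactly because $\mathbf{W'}$ (together with $\mathbf{D}$) determines the number $k_1$ of simple edges at $v_1$. One cosmetic remark: your accounting ``remove all oriented edges incident to $v_1, v_2, v_3$'' literally excludes up to $2(d_1+d_2+d_3) \leq 6\Delta$ oriented edges, not $4\Delta$; the constant $4\Delta$ comes from the orientation-sensitive conditions actually used in the proof of \autoref{lem:st1-uniformity}, namely $v_4 \notin \{v_1,v_2\}$ and $v_5 \notin \{v_1,v_3\}$, which exclude at most $(d_1+d_2)+(d_1+d_3) \leq 4\Delta$ choices (the paper's prose definition of $b_\switch{TL}(G', v_1 v_2 v_3; 1)$ shares this wrinkle, so this is not held against you).

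The substantive issue is your parenthetical claim in the ``$;0$'' step that two distinct simple edges at $v_1$ ``necessarily reach distinct endpoints $v_2 \neq v_3$, otherwise $\{v_1,v_2\}$ would be a multi-edge, contradicting simplicity.'' This is false in the temporal setting: simplicity forbids only loops and repeated edges with the \emph{same} timestamp, so $(\{v_1,v_2\},t)$ and $(\{v_1,v_2\},t')$ with $t \neq t'$ are both simple temporal edges sharing an endpoint --- ordinary multi-edges are explicitly allowed in simple temporal graphs, which is the very reason the constant $\mu$ and the auxiliary $\switch{A}_{m,n}$ switchings exist at all. Your conclusion survives only because neither \autoref{def:tl-switching} nor the definition of $b_\switch{TL}(G', v_1; 0)$ requires $v_2 \neq v_3$: the quantity is simply the number of ordered pairs of distinct simple temporal edges at $v_1$, which equals $k_1(k_1-1)$ regardless of coinciding endpoints, exactly as the paper concludes. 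Note that if the switching did demand $v_2 \neq v_3$, the count would depend on the ordinary multiplicities of the simple edges at $v_1$, hence on $G'$ and not only on $\mathbf{W'}$, and the equality you are proving would fail; so the misconception, while harmless here, is precisely the kind that would break the lemma in a slightly different setup.
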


\begin{proof}
	The first set of inequalities follows since there are at most $T$ and at least $T - (\mu - 1)$ available timestamps for each edge with multiplicity at most $\mu - 1$.
	For the second set of inequalities, there are at most $M$ choices for an edge, at most $2 B_L + 4 B_D$ choices such that the edge is not simple, and at most $4 \Delta$ choices such that some of the nodes are not distinct.
	For the equality, observe that $\mathbf{W'}$ (in addition to $\mathbf{D}$) determines the number of incident simple temporal edges at all nodes.
\end{proof}

\begin{lemma}
	\label{lem:st1-bounds-3}
	For all $G' \in \mathcal{S}(\mathbf{W'}) \subseteq \mathcal{M}_0(\mathbf{D})$ we have
	\begin{align}
		\nonumber
		&f_{\switch{A}_{m,n}}(G') \leq \overline{f}_{\switch{A}_{m,n}}(\mathbf{W'}).
	\end{align}
\end{lemma}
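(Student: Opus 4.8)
The plan is to mirror the counting argument used for \autoref{lem:st1-bounds-1}: since the nodes $v_1, \dots, v_5$ are fixed when the switching is selected (they are inherited from the preceding $\switch{TL}$ switching), an $\switch{A}_{m,n}$ switching on $G'$ is fully determined by the choice of the incident edges it rewires together with the fresh timestamps it assigns to the created edges. The total number of ways to make these choices is therefore an upper bound on $f_{\switch{A}_{m,n}}(G')$, and I would show that this count is at most $\overline{f}_{\switch{A}_{m,n}}(\mathbf{W'}) = \Delta^{2 (m+n)} T^{2 (m+n)}$.

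First I would count the edge choices. By \autoref{def:amn-switching}, the switching selects $m$ incident edges at $v_2$ and $m$ incident edges at $v_4$ (the edges $(\{v_2, v_{2i + 4}\}, t_i)$ and $(\{v_4, v_{2i + 5}\}, t_{m + i})$ for $1 \leq i \leq m$), together with $n$ incident edges at each of $v_3$ and $v_5$. Since every node has degree at most $\Delta$, each of these four nodes is incident with at most $\Delta$ edges, so there are at most $\Delta^m$ ordered choices for the $m$ edges at $v_2$, at most $\Delta^m$ for those at $v_4$, and at most $\Delta^n$ for those at each of $v_3, v_5$. Multiplying gives at most $\Delta^{m} \cdot \Delta^{m} \cdot \Delta^{n} \cdot \Delta^{n} = \Delta^{2 (m+n)}$ choices for the rewired edges.

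Next I would count the timestamp choices. The switching assigns the fresh timestamps $t_{2m + 2n + 1}, \dots, t_{4m + 4n}$ to the created edges, which is exactly $2(m+n)$ timestamps, each ranging over $[1, T]$; this yields at most $T^{2 (m+n)}$ choices. As the edges and the new timestamps are chosen independently, the product of the two bounds gives $f_{\switch{A}_{m,n}}(G') \leq \Delta^{2 (m+n)} T^{2 (m+n)} = \overline{f}_{\switch{A}_{m,n}}(\mathbf{W'})$, as required.

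The argument involves no genuine obstacle; the only care needed is the bookkeeping of \autoref{def:amn-switching}, namely correctly identifying that exactly $2(m+n)$ edges (split as $m, m, n, n$ across the four nodes $v_2, v_4, v_3, v_5$) and exactly $2(m+n)$ free timestamps parametrize the switching, so that the crude per-node bound of $\Delta$ and per-timestamp bound of $T$ combine to the stated expression. I would not attempt a tighter count, since only the upper bound $\overline{f}_{\switch{A}_{m,n}}(\mathbf{W'})$ matters for the rejection step in $9c$.
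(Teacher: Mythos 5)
Your proposal is correct and follows essentially the same argument as the paper: the paper's proof likewise observes that an $\switch{A}_{m,n}$ switching is determined by $2(m+n)$ incident edges (at most $\Delta$ choices each) and $2(m+n)$ fresh timestamps (at most $T$ choices each), whose product gives $\overline{f}_{\switch{A}_{m,n}}(\mathbf{W'}) = \Delta^{2(m+n)} T^{2(m+n)}$. Your version merely spells out the per-node bookkeeping ($m, m, n, n$ edges at $v_2, v_4, v_3, v_5$) in more detail.
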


\begin{proof}
	The number of edges and timestamps needed for the $\switch{A}_{m,n}$ switching is $2 (m+n)$ and there are at most $\Delta$ choices for each incident edge, and at most $T$ choices for each timestamp.
\end{proof}

\begin{lemma}
	\label{lem:st1-bounds-4}
	For all $G'' \in \mathcal{S}(\mathbf{W'}) \subseteq \mathcal{M}_0(\mathbf{D})$ we have
	\begin{align}
		\nonumber
		&\underline{b}_{\switch{A}_{m,n}}(\mathbf{W'}) \leq b_{\switch{A}_{m,n}}(G'', v_1 \dots v_{5}) \leq M^{m+n} T^{2(m+n)}.
	\end{align}
\end{lemma}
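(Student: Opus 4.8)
The plan is to count the backward $\switch{A}_{m,n}$ switchings producing $G''$ directly, in the same spirit as Lemmas \ref{lem:st1-bounds-2} and \ref{lem:st1-bounds-3}. Since $G'' \in \mathcal{S}(\mathbf{W'})$, the tensor $\mathbf{W'}$ fixes the multiplicities of $\{v_2, v_4\}$ and $\{v_3, v_5\}$ to be exactly $m$ and $n$, so these $m+n$ multi-edges are forced and every backward switching removes all of them. Hence a backward $\switch{A}_{m,n}$ switching is specified by two pieces of data: a choice of $m+n$ oriented connecting edges, whose endpoints become the spoke endpoints $v_6, \dots, v_{2m+2n+5}$ (the orientation recording which endpoint attaches to $v_2$, $v_3$ and which to $v_4$, $v_5$), together with a choice of timestamps for the $2(m+n)$ spokes the switching creates at $v_2, v_4, v_3, v_5$.

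For the upper bound I would count these choices crudely: at most $M$ oriented edges for each of the $m+n$ connecting slots and at most $T$ timestamps for each of the $2(m+n)$ spokes yields $b_{\switch{A}_{m,n}}(G'', v_1 \dots v_5) \leq M^{m+n} T^{2(m+n)}$.

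For the lower bound I would bound each slot uniformly from below. A connecting edge must be simple and must avoid the special nodes. Counting oriented edges, the initial conditions $L \leq B_L$ and $D \leq 2 B_D$ imply that at most $2 B_L + 4 B_D$ oriented edges lie on loops or temporal multi-edges (a factor two per orientation). Each special node has degree at most $\Delta$ and hence at most $2\Delta$ incident oriented edges, and at most $2(m+n)+3$ special nodes are fixed when a given slot is filled, so at most $2\Delta(2(m+n)+3) \leq 4(m+n+3)\Delta$ oriented edges touch a special node; this leaves at least $M - 2 B_L - 4 B_D - 4(m+n+3)\Delta$ admissible edges per slot. A spoke timestamp, say for $(\{v_2, v_{2i+4}\}, t_i)$, must avoid the at most $\Delta - 1$ timestamps already used between $v_2$ and $v_{2i+4}$ — at most $\Delta - 1$ since $v_2$ has degree at most $\Delta$ and already sends an edge to $v_4$ — leaving at least $T - (\Delta - 1)$ choices. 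Multiplying over all $m+n$ connecting slots and $2(m+n)$ spoke slots gives exactly $\underline{b}_{\switch{A}_{m,n}}(\mathbf{W'})$.

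The main obstacle I anticipate is the distinctness bookkeeping in the lower bound: the choices are not independent, since filling one connecting slot creates two fresh special nodes that later slots must avoid, so the genuine count is a product of shrinking factors rather than a clean power. The resolution is to bound every slot against the worst case in which all $2m+2n+5$ special nodes are already present, which is exactly what the uniform factor $M - 2 B_L - 4 B_D - 4(m+n+3)\Delta$ encodes; the slightly generous constant $4(m+n+3)$, rather than the tighter value the direct degree sum above produces, is what keeps this bound valid for every order in which the slots are filled. The remaining care is to confirm that this worst-case bound excludes no admissible backward switching and that the forced removal of the $m+n$ multi-edges does not interfere with the spoke-timestamp count.
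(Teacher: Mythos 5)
Correct, and essentially the paper's own argument: you parametrize a reverse $\switch{A}_{m,n}$ switching by the $m+n$ connecting edges and the $2(m+n)$ spoke timestamps, with per-slot bounds of $M$ and $T$ above and $M - 2B_L - 4B_D - 4(m+n+3)\Delta$ and $T - (\Delta - 1)$ below, exactly as in the paper's proof. One small repair to your justification: the forced removal of the $m$ and $n$ parallel copies of $\{v_2, v_4\}$ and $\{v_3, v_5\}$ follows from the definition of the forward switching (it starts from non-edges and creates exactly those copies, so given $G''$ and the specified nodes there is no freedom), not from $\mathbf{W'}$, which records only loops and temporal multiplicities $\geq 2$ and hence says nothing about these ordinary multi-edges of simple temporal edges.
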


\begin{proof}
	The number of switchings which can produce a given graph corresponds to the number of choices for edges and timestamps needed to reverse the switching.
	Reversing the $\switch{A}_{m,n}$ switching requires $m+n$ edges and $2 (m+ n)$ timestamps.
	For each edge, there are at most $M$ choices, at most $2 B_L + 4 B_D$ choices such that the edge is not simple, and at most $2 \Delta$ choices for each node already chosen such that the nodes are not distinct.
	For each timestamp, there are at most $T$ choices, and at least $T - (\Delta - 1)$ choices such that the edge does not exist in $G''$.
\end{proof}

\begin{proof}[Proof of \autoref{lem:st1-uniformity}]	Let $c_{i,j} = \sum_{1 \leq t \leq T} \mathbf{1}_{w_{i,j,t} > 0}$ denote the number of distinct temporal edges between two given nodes $v_i, v_j$ and $N(v_i)$ the (multi-)set of incident temporal edges at a given node $v_i$.
	Then by \autoref{lem:st1-bounds-1}, after the f-rejection in step $3$ a given graph $G' \in \mathcal{S}(\mathbf{W'})$ is produced with probability
	\begin{equation}
		\nonumber
		\hfill \sum_{\substack{(\{v_1, v_2\}, t_4) \neq (\{v_1, v_3\}, t_5) \in N(v_1) \\ v_1 \neq v_2, w_{1,2,4} = 1 \\ v_1 \neq v_3, w_{1,3,5} = 1}} \sum_{\substack{(\{v_4, v_5\}, t_6) \in E \\ v_4 \neq v_5, w_{4,5,6} = 1 \\ v_4 \notin \{v_1, v_2\} \\ v_5 \notin \{v_1, v_3\}}} \frac{(T - c_{2,4}) (T - c_{3,5})}{\overline{f}_\switch{TL}(\mathbf{W}) |\mathcal{S}(\mathbf{W})|} \hfill
	\end{equation}
	where $v_1$ is the node at which the temporal single-loop was removed.
	Thus, in particular, $G'$ is produced via switchings where we fix the three created edges $(\{v_1, v_2\}, t_4)$, $(\{v_1, v_3\}, t_5)$, $(\{v_4, v_5\}, t_6)$ to three given edges which satisfy the conditions with probability
	\begin{equation}
		\nonumber
		\hfill \frac{(T - c_{2,4}) (T - c_{3,5})}{{\overline{f}_\switch{TL}(\mathbf{W}) |\mathcal{S}(\mathbf{W})|}} \propto b_\switch{TL}(G', v_1 v_2 v_3 v_4 v_5; 2) \hfill
	\end{equation}
	and by \autoref{lem:st1-bounds-2} after steps $6$ and $7$ with probability proportional to $\underline{b}_\switch{TL}(\mathbf{W'}; 2) = T - (\mu - 1)$ if $m_{2,4}, m_{3,5} < \mu$ (which implies $c_{2,4}, c_{3,5} < \mu$) and probability $0$ if any of $m_{2,4} \geq \mu$, $m_{3,5} \geq \mu$.
	Now, if $m_{2,4} = m_{3,5} = 0$, we perform a type $\switch{A}_{m, n}$ switching with probability $p_\switch{A}(\switch{A}_{m, n})$, or restart the algorithm with this probability if $0 < m_{2,4}, m_{3,5} < \mu$.
	Thus, if $0 \leq m_{2,4}, m_{3,5} < \mu$, the probability of producing $G'$ via switchings which create the three fixed edges is
	\begin{equation}
		\nonumber
		\hfill p_\switch{A}(\switch{I}) \frac{\underline{b}_\switch{TL}(\mathbf{W'}; 2)}{\overline{f}_\switch{TL}(\mathbf{W}) |\mathcal{S}(\mathbf{W})|}. \hfill
	\end{equation}
	If instead $\mu \leq m_{2,4} < \min \{d_2, d_4\}$ or $\mu \leq m_{3,5} < \min \{d_3, d_5\}$, then $G'$ is produced only via an $\switch{A}_{m_{2,4}, m_{3,5}}$ switching on a graph with $m_{2,4} = m_{3,5} = 0$ and by \autoref{lem:st1-bounds-3} and \autoref{lem:st1-bounds-4}, after the f- and b-rejections in steps $9c$ and $9e$, the probability of producing $G'$ in this way is
	\begin{equation}
		\nonumber
		\hfill p_\switch{A}(\switch{A}_{m_{2,4}, m_{3,5}}) \frac{\underline{b}_{\switch{A}_{m_{2,4}, m_{3,5}}}(\mathbf{W'})}{\overline{f}_{\switch{A}_{m_{2,4}, m_{3,5}}}(\mathbf{W'})} \frac{\underline{b}_\switch{TL}(\mathbf{W'}; 2)}{\overline{f}_\switch{TL}(\mathbf{W}) |\mathcal{S}(\mathbf{W})|}. \hfill
	\end{equation}
	It is now straightforward to verify that the probabilities $p_\switch{A}(\switch{I})$ and $p_\switch{A}(\switch{A}_{m, n})$ as specified for Stage 1 equalize the expressions given above.
	Thus, after step $9$, the probability of producing a given graph $G'$ via switchings which create the three fixed edges no longer depends on $c_{2,4}$ or $c_{3,5}$.
	It only remains to show that step $10$ equalizes the probabilities over all choices of the three edges $(\{v_1, v_2\}, t_4)$, $(\{v_1, v_3\}, t_5)$ and $(\{v_4, v_5\}, t_6)$.
	To this end, observe that the probability of producing $G'$ via switchings where we only fix the choices of $(\{v_1, v_2\}, t_4)$ and $(\{v_1, v_3\}, t_5)$ is
	\begin{equation}
		\nonumber
		\hfill \sum_{\substack{(\{v_4, v_5\}, t_6) \in E \\ v_4 \neq v_5, w_{4,5,6} = 1 \\ v_4 \notin \{v_1, v_2\} \\ v_5 \notin \{v_1, v_3\}}} p_\switch{A}(\switch{I}) \frac{\underline{b}_\switch{TL}(\mathbf{W'}; 2)}{{\overline{f}_\switch{TL}(\mathbf{W}) |\mathcal{S}(\mathbf{W})|}} \propto b_\switch{TL}(G', v_1 v_2 v_3; 1) \hfill
	\end{equation}
	and by \autoref{lem:st1-bounds-2} after the second b-rejection in step $10$, this probability is proportional to $\underline{b}_\switch{TL}(\mathbf{W'}; 1)$.
	Finally, to show that the probability is equal over all choices of $(\{v_1, v_2\}, t_4)$, $(\{v_1, v_3\}, t_5)$, observe that by \autoref{lem:st1-bounds-2}, we have
	\begin{equation}
		\nonumber
		\hfill \sum_{\substack{(\{v_1, v_2\}, t_4) \neq (\{v_1, v_3\}, t_5) \in N(v_1) \\ v_1 \neq v_2, w_{1,2,4} = 1 \\ v_1 \neq v_3, w_{1,3,5} = 1}} p_\switch{A}(\switch{I}) \frac{\underline{b}_\switch{TL}(\mathbf{W'}; 1) \underline{b}_\switch{TL}(\mathbf{W'}; 2)}{{\overline{f}_\switch{TL}(\mathbf{W}) |\mathcal{S}(\mathbf{W})|}} \propto b_\switch{TL}(G', v_1; 0) = \underline{b}_\switch{TL}(\mathbf{W'}; 0) \hfill
	\end{equation}
	and thus $G'$ is produced with probability
	\begin{equation}
		\nonumber
		\hfill p_\switch{A}(\switch{I}) \frac{\underline{b}_\switch{TL}(\mathbf{W'}; 0) \underline{b}_\switch{TL}(\mathbf{W'}; 1) \underline{b}_\switch{TL}(\mathbf{W'}; 2)}{{\overline{f}_\switch{TL}(\mathbf{W}) |\mathcal{S}(\mathbf{W})|}}\hfill
	\end{equation}
	which only depends on $\mathbf{W}$, $\mathbf{W'}$ and $\mathbf{D}$.
\end{proof}

\noindent
The proof of \autoref{lem:st2-uniformity} requires showing that the upper and lower bounds for Stage 2 are correct.
To this end, let $\mathcal{M}_1(\mathbf{D}) \subseteq \mathcal{M}_0(\mathbf{D})$ denote set of temporal multigraphs which satisfy the initial conditions and which are output by Stage 1, i.e. which contain no temporal single-loops.
Then, we obtain the following results.

\begin{lemma}
	\label{lem:st2-bound-1}
	For all $G \in \mathcal{S}(\mathbf{W}) \subseteq \mathcal{M}_1(\mathbf{D})$ we have
	\begin{align}
		\nonumber
		f_{\switch{TD}_1}(G) \leq \overline{f}_{\switch{TD}_1}(\mathbf{W}),
		\\
		\nonumber
		f_{\switch{TD}_0}(G) \leq \overline{f}_{\switch{TD}_0}(\mathbf{W}).
	\end{align}
\end{lemma}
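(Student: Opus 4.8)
The plan is to reuse the elementary counting strategy behind \autoref{lem:st1-bounds-1}: each quantity $\overline{f}$ is simply the number of ways to select the objects that parameterize the corresponding switching, so it suffices to check that every admissible switching is encoded by a distinct such selection and then to count the selections. Overcounting is harmless, since the side conditions in the definitions (distinctness of nodes, non-existence of the created edges, and so on) only cut down the set of valid switchings relative to the naive count.

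For the first inequality I would read off from \autoref{def:td1-switching} that a $\switch{TD}_1$ switching removing the fixed double-edge $(\{v_1, v_2\}, t_1)$ is determined by the two edges $(\{v_3, v_5\}, t_2)$, $(\{v_4, v_6\}, t_3)$ together with the three timestamps $t_4, t_5, t_6 \in [1, T]$ of the created edges. Since the switching distinguishes the endpoint of each removed edge that is reconnected to $v_1$ (namely $v_3$) from the other endpoint, and likewise for $v_2$, both edges must be treated as oriented; as $G$ is loopless (it satisfies the initial conditions and was output by Stage 1) there are at most $M$ oriented edges, giving at most $M^2$ choices for the ordered pair and at most $T^3$ choices for the timestamps. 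Hence $f_{\switch{TD}_1}(G) \le M^2 T^3 = \overline{f}_{\switch{TD}_1}(\mathbf{W})$.

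For the second inequality I would unwind the analogously defined $\switch{TD}_0$ switching, which erases both copies of the double-edge. Rewiring each copy in the fashion of $\switch{TD}_1$ consumes the four removed edges $(\{v_3, v_7\}, t_2)$, $(\{v_4, v_8\}, t_3)$, $(\{v_5, v_9\}, t_4)$, $(\{v_6, v_{10}\}, t_5)$ and produces six created edges whose timestamps are the free parameters; this is consistent with the backward-count quantities, where $b_{\switch{TD}_0}(G', v_1 v_2; 0)$ lists four created edges incident at $v_1, v_2$ and $b_{\switch{TD}_0}(G', v_1 \dots v_6; 1)$ lists the two remaining ones. Counting the four oriented removed edges ($M^4$ choices) and the six timestamps ($T^6$ choices) then yields $f_{\switch{TD}_0}(G) \le M^4 T^6 = \overline{f}_{\switch{TD}_0}(\mathbf{W})$.

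The content here is bookkeeping rather than mathematics, so the step I expect to require the most care is fixing the parameter count for $\switch{TD}_0$: because its definition is given only by analogy, I would verify explicitly that it has precisely four freely chosen removed edges and six freely chosen timestamps—in particular that the created edges receive fresh timestamps and do not reuse those of the removed edges—so that the exponents in $M^4 T^6$ are the correct ones. Once that count is pinned down, both bounds follow immediately from the product of the numbers of admissible selections.
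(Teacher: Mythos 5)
Your proposal is correct and follows essentially the same argument as the paper: bound $f_{\switch{TD}_1}(G)$ by the at most $M^2$ choices of the two oriented removed edges times the $T^3$ choices of timestamps, and observe that $\switch{TD}_0$ uses twice as many of each (four oriented edges, six timestamps), giving $M^4 T^6$. Your explicit verification of the $\switch{TD}_0$ parameter count (four removed edges, six created edges receiving fresh timestamps) is exactly what the paper's one-line remark ``uses twice as many edges and timestamps'' compresses.
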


\begin{proof}
	There are at most $M^2$ choices for the (oriented) edges $(\{v_3, v_5\}, t_2)$, $(\{v_4, v_6\}, t_3)$ and at most $T^3$ choices for the timestamps $t_4, t_5, t_6 \in [1, T]$ needed to perform a $\switch{TD}_1$ switching.
	The $\switch{TD}_0$ switching uses twice as many edges and timestamps.
\end{proof}

\begin{lemma}
	\label{lem:st2-bound-2}
	For all $G' \in \mathcal{S}(\mathbf{W'}) \subseteq \mathcal{M}_1(\mathbf{D})$ where $m_{3,5} < \mu$ and $m_{4,6} < \mu$ we have
	\begin{align}
		\nonumber
		\underline{b}_{\switch{TD}_1}(\mathbf{W'}; 2) \leq b_\switch{TD_1}(G', v_1 \dots v_6; 2) \leq T^2,
	\end{align}
	for all $G' \in \mathcal{S}(\mathbf{W'}) \subseteq \mathcal{M}_1(\mathbf{D})$ where $m_{3,7}, m_{4,8}, m_{5,9}, m_{6,10} < \mu$ we have
	\begin{align}
		\nonumber
		\underline{b}_{\switch{TD}_0}(\mathbf{W'}; 2) \leq b_\switch{TD_0}(G', v_1 \dots v_{10}; 2) \leq T^4
	\end{align}
	and for all $G' \in \mathcal{S}(\mathbf{W'}) \subseteq \mathcal{M}_1(\mathbf{D})$ we have
	\begin{align}
		\nonumber
		&\underline{b}_{\switch{TD}_1}(\mathbf{W'}; 1) \leq b_\switch{TD_1}(G', v_1 \dots v_4; 1) \leq M,
		\\
		\nonumber
		&\underline{b}_{\switch{TD}_0}(\mathbf{W'}; 1) \leq b_\switch{TD_0}(G', v_1 \dots v_6; 1) \leq M^2,
		\\
		\nonumber
		&\underline{b}_{\switch{TD}_1}(\mathbf{W'}; 0) = b_\switch{TD_1}(G', v_1 v_2; 0),
		\\
		\nonumber
		&\underline{b}_{\switch{TD}_0}(\mathbf{W'}; 0) = b_\switch{TD_0}(G', v_1 v_2; 0).
	\end{align}
\end{lemma}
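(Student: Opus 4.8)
The plan is to prove all three groups of bounds by elementary counting, in direct analogy with \autoref{lem:st1-bounds-2}, while tracking the larger number of fixed nodes and rewired edges of the $\switch{TD}_1$ and $\switch{TD}_0$ switchings. The three index levels $;2$, $;1$, $;0$ correspond to the successive stages at which a backward switching is reconstructed: first the timestamps of the recreated non-simple edges, then the simple edge(s) that are split apart, and finally the simple edges incident at the fixed endpoints $v_1, v_2$. Throughout I work with $G' \in \mathcal{S}(\mathbf{W'}) \subseteq \mathcal{M}_1(\mathbf{D})$, so the graph contains no temporal single-loops and satisfies the initial conditions of \autoref{subsec:initial-rejection}.

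For the $;2$ bounds I would count available timestamps. Reversing a $\switch{TD}_1$ switching (see \autoref{def:td1-switching}) needs timestamps $t_2, t_3$ so that $(\{v_3, v_5\}, t_2)$ and $(\{v_4, v_6\}, t_3)$ are absent from $G'$; there are at most $T$ choices each, giving $b \le T^2$, and since the hypothesis $m_{3,5}, m_{4,6} < \mu$ forces each of these node pairs to occupy at most $\mu - 1$ timestamps, at least $T - (\mu - 1)$ choices remain for each, giving $\underline{b}_{\switch{TD}_1}(\mathbf{W'}; 2) = (T - (\mu-1))^2$. The $\switch{TD}_0$ case is identical with four timestamps in place of two, under $m_{3,7}, m_{4,8}, m_{5,9}, m_{6,10} < \mu$, yielding $(T - (\mu-1))^4 \le b \le T^4$.

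For the $;1$ bounds I would count the simple edges that the reverse switching splits. For $\switch{TD}_1$ this is a single oriented simple edge $(\{v_5, v_6\}, t_6)$, at most $M$ in number. For the lower bound I start from $M$ and discard the ineligible edges: because $\mathcal{M}_1(\mathbf{D})$ excludes loops and satisfies $D/2 \le B_D$, at most $4 B_D$ oriented edges are non-simple, and only $O(\Delta)$ oriented edges are incident to one of the already-fixed nodes $v_1, \dots, v_4$, which the stated bound accounts for as $4\Delta$; this gives $\underline{b}_{\switch{TD}_1}(\mathbf{W'}; 1) = M - 4 B_D - 4\Delta$, with no $B_L$ term precisely because $\mathcal{M}_1(\mathbf{D})$ has no loops. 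For $\switch{TD}_0$ two distinct simple edges must be split, so $b \le M^2$, and treating the two choices as a product after subtracting the same correction per factor gives $\underline{b}_{\switch{TD}_0}(\mathbf{W'}; 1) = (M - 4 B_D - 4\Delta)^2$.

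For the $;0$ equalities I would observe that the relevant counts are fully determined by $\mathbf{W'}$ and $\mathbf{D}$: by definition $k_i = d_i - \sum_t (2 \mathbf{W'}_{i,i,t} + \sum_{j \ne i} \mathbf{W'}_{i,j,t})$ is exactly the number of simple edges incident at $v_i$, so selecting one simple edge at each of $v_1, v_2$ gives $b_\switch{TD_1}(G', v_1 v_2; 0) = k_1 k_2$, and selecting two distinct simple edges at each gives $b_\switch{TD_0}(G', v_1 v_2; 0) = k_1 (k_1 - 1) k_2 (k_2 - 1)$, matching the stated values. The main obstacle I expect lies in the $;1$ lower bounds: one must verify that the uniform corrections $-4 B_D$ and $-4\Delta$ per chosen edge genuinely dominate every ineligible configuration, especially for $\switch{TD}_0$, where the second edge must additionally avoid the endpoints of the first. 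This requires careful bookkeeping of which nodes each chosen edge must avoid, but the resulting lower bounds stay positive since all corrections are $o(M)$ under the standing assumption $\Delta^{2+\epsilon} = O(M)$; the $;2$ and $;0$ bounds, by contrast, are immediate once the multiplicity hypotheses and the definition of $k_i$ are in hand.
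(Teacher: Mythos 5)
Your proposal is correct and follows essentially the same argument as the paper's proof: counting at most $T$ and at least $T - (\mu - 1)$ available timestamps per recreated edge for the $;2$ bounds, counting oriented simple edges as $M$ minus at most $4 B_D$ non-simple choices and at most $4\Delta$ node-collision choices for the $;1$ bounds, and observing that $\mathbf{W'}$ together with $\mathbf{D}$ determines the simple-edge counts $k_i$ for the $;0$ equalities. Your closing remark about the bookkeeping for the second edge in the $\switch{TD}_0$ case is a fair observation, but the paper's own proof uses exactly the same uniform $4 B_D$ and $4\Delta$ corrections, so you have not diverged from its route.
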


\begin{proof}
	The first two sets of inequalities follow since there are at most $T$ and at least $T - (\mu - 1)$ available timestamps for each edge.
	For the third and fourth sets of inequalities, observe that there are at most $M$ choices for an edge, at most $4 B_D$ choices such that the edge is not simple, and at most $4 \Delta$ choices such that the nodes are not distinct.
	The equalities follow from the observation that $\mathbf{W'}$ (in addition to $\mathbf{D}$) determines the number of incident simple temporal edges at each node.
\end{proof}

\begin{lemma}
	\label{lem:st2-bound-3}
	For all $G' \in \mathcal{S}(\mathbf{W'}) \subseteq \mathcal{M}_1(\mathbf{D})$ we have
	\begin{align}
		\nonumber
		&f_{\switch{B}_{m,n}}(G') \leq \overline{f}_{\switch{B}_{m,n}}(\mathbf{W'}),
		\\
		\nonumber
		&f_{\switch{C}_{m,n,o,p}}(G') \leq \overline{f}_{\switch{C}_{m,n,o,p}}(\mathbf{W'}).
	\end{align}
\end{lemma}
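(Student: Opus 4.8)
The plan is to bound each of $f_{\switch{B}_{m,n}}(G')$ and $f_{\switch{C}_{m,n,o,p}}(G')$ by the number of ways to specify all the data that a switching of the corresponding type requires, exactly as in the proof of \autoref{lem:st1-bounds-3} for the $\switch{A}_{m,n}$ switching. Every valid switching is determined by such a choice, and imposing the side conditions in the definition (distinctness of the nodes, the listed edges not being contained in temporal double-edges, and the target edges being absent) can only decrease the count, so it suffices to count choices while ignoring these constraints.

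For the $\switch{B}_{m,n}$ switching, reading off its definition, the operation is specified by selecting $m$ incident edges at each of $v_3$ and $v_5$ and $n$ incident edges at each of $v_4$ and $v_6$, for a total of $2(m+n)$ incident edges, together with the $2(m+n)$ fresh timestamps $t_{2m+2n+1}, \dots, t_{4m+4n}$. Since the carrier nodes $v_3, v_4, v_5, v_6$ are fixed as parameters of the switching, each incident edge is drawn from the at most $\Delta$ edges incident at the corresponding fixed node (and choosing the edge already determines its free endpoint), contributing a factor of at most $\Delta$; each fresh timestamp ranges over $[1,T]$, contributing a factor of at most $T$. Multiplying over all $2(m+n)$ edges and $2(m+n)$ timestamps then yields $f_{\switch{B}_{m,n}}(G') \leq \Delta^{2(m+n)} T^{2(m+n)} = \overline{f}_{\switch{B}_{m,n}}(\mathbf{W'})$.

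The argument for $\switch{C}_{m,n,o,p}$ is structurally identical: the switching creates four multi-edges of multiplicities $m, n, o, p$, and creating a multiplicity-$k$ edge consumes $k$ incident edges at each of its two fixed endpoints, so in total $2(m+n+o+p)$ incident edges are selected, each from at most $\Delta$ options, alongside $2(m+n+o+p)$ fresh timestamps, each from at most $T$ options. The same product then gives $f_{\switch{C}_{m,n,o,p}}(G') \leq \Delta^{2(m+n+o+p)} T^{2(m+n+o+p)} = \overline{f}_{\switch{C}_{m,n,o,p}}(\mathbf{W'})$. I do not expect a genuine obstacle here; the only point requiring care is reading off the exact number of incident edges and free timestamps from the (index-heavy) definitions and confirming that the chosen endpoints are free while the carrier nodes are fixed parameters. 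Because only an upper bound is needed, all side conditions may be dropped, which keeps the estimate a clean product of independent per-object factors.
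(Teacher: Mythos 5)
Your proposal is correct and follows essentially the same argument as the paper: the $\switch{B}_{m,n}$ (resp. $\switch{C}_{m,n,o,p}$) switching is determined by $2(m+n)$ (resp. $2(m+n+o+p)$) incident edges at fixed nodes, each with at most $\Delta$ choices, and the same number of timestamps, each with at most $T$ choices, and dropping the side conditions only increases the count. The paper's proof is just a terser statement of exactly this counting.
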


\begin{proof}
	The number of edges and timestamps needed for the $\switch{B}_{m,n}$ switching is $2 (m+n)$, the number of edges and timestamps needed for the $\switch{C}_{m,n,o,p}$ switching is $2 (m+n+o+p)$, and there are at most $\Delta$ choices for each incident edge, and at most $T$ choices for each timestamp.
\end{proof}

\begin{lemma}
	\label{lem:st2-bound-4}
	For all $G'' \in \mathcal{S}(\mathbf{W'}) \subseteq \mathcal{M}_1(\mathbf{D})$ we have
	\begin{align}
		\nonumber
		&\underline{b}_{\switch{B}_{m,n}}(\mathbf{W'}) \leq b_{\switch{B}_{m,n}}(G'', v_1 \dots v_{6}) \leq M^{m+n} T^{2 (m+n)},
		\\
		\nonumber
		&\underline{b}_{\switch{C}_{m,n,o,p}}(\mathbf{W'}) \leq b_{\switch{C}_{m,n,o,p}}(G'', v_1 \dots v_{10}) \leq M^{m+n+o+p} T^{2 (m+n+o+p)}.
	\end{align}
\end{lemma}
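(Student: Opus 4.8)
The plan is to mirror the proof of \autoref{lem:st1-bound-4}: the quantity $b_{\switch{B}_{m,n}}(G'', v_1 \dots v_6)$ counts the switchings producing $G''$, which equals the number of ways to select the edges and timestamps needed to \emph{reverse} a $\switch{B}_{m,n}$ switching on $G''$. With the nodes $v_1, \dots, v_6$ fixed, the multi-edge of multiplicity $m$ between $\{v_3, v_5\}$ and the multi-edge of multiplicity $n$ between $\{v_4, v_6\}$ are already determined by $G''$, so the only degrees of freedom are the $m+n$ created ``partner'' edges $(\{v_{2i+5}, v_{2i+6}\}, \cdot)$ (whose endpoints fix the remaining nodes $v_7, \dots, v_{2m+2n+6}$) together with the $2(m+n)$ timestamps for the incident edges re-created at $v_3, v_4, v_5, v_6$.

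For the upper bound I would count these choices crudely: each partner edge is an (oriented) edge of $G''$, giving at most $M$ options, and each timestamp lies in $[1,T]$, giving at most $T$ options, so $b_{\switch{B}_{m,n}}(G'', v_1 \dots v_6) \leq M^{m+n} T^{2(m+n)}$. Reversing a $\switch{C}_{m,n,o,p}$ switching instead requires $m+n+o+p$ partner edges and $2(m+n+o+p)$ timestamps, and the identical count yields $b_{\switch{C}_{m,n,o,p}}(G'', v_1 \dots v_{10}) \leq M^{m+n+o+p} T^{2(m+n+o+p)}$.

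For the lower bound I would count only the \emph{valid} reversals. By the definition of the $\switch{B}_{m,n}$ switching, each partner edge must be simple and must join two nodes distinct from all previously selected nodes. Since $G'' \in \mathcal{M}_1(\mathbf{D})$ is loop-free and satisfies $D/2 \leq B_D$, at most $4 B_D$ oriented edges lie in temporal double-edges; and over the whole reversal at most $6 + 2(m+n) = 2(m+n+3)$ nodes are ever fixed, each excluding at most $2\Delta$ oriented edges, so distinctness removes at most $4(m+n+3)\Delta$ options. Hence each partner edge admits at least $M - 4 B_D - 4(m+n+3)\Delta$ choices. Each re-created incident edge must not already be present (else it would form a temporal double-edge), and as every edge has multiplicity at most $\Delta$ there remain at least $T-(\Delta-1)$ admissible timestamps. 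Taking the product gives $\underline{b}_{\switch{B}_{m,n}}(\mathbf{W'})$, and the $\switch{C}_{m,n,o,p}$ bound follows identically with $10$ fixed nodes in place of $6$, so that the node count becomes $10 + 2(m+n+o+p) = 2(m+n+o+p+5)$ and produces the factor $4(m+n+o+p+5)\Delta$. The one step to handle carefully is this distinctness bookkeeping, namely verifying that the worst-case accumulated node count yields exactly the $(m+n+3)$ and $(m+n+o+p+5)$ coefficients; the remainder is a routine product of per-choice bounds.
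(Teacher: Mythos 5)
Your proposal is correct and takes essentially the same approach as the paper's proof: count the $m+n$ (resp.\ $m+n+o+p$) partner edges and $2(m+n)$ (resp.\ $2(m+n+o+p)$) timestamps needed to reverse a switching, bounding each edge choice above by $M$ and below by $M$ minus at most $4B_D$ non-simple choices minus at most $2\Delta$ per already-fixed node (with $2(m+n+3)$, resp.\ $2(m+n+o+p+5)$, fixed nodes in total), and each timestamp choice by $T$ above and $T-(\Delta-1)$ below. The only slip is cosmetic: ``multiplicity at most $\Delta$'' would only yield $T-\Delta$ admissible timestamps, whereas the stated bound $T-(\Delta-1)$ holds because any edge whose timestamp must be chosen is incident to a node (e.g.\ $v_3$) that also carries the $m\geq 1$ copies of the multi-edge being dissolved in the reversal, so the relevant multiplicity in $G''$ is at most $\Delta-1$; the paper's own proof asserts this bound with no more justification than yours.
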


\begin{proof}
	The number of edges and timestamps needed to reverse the $\switch{B}_{m,n}$ switching is $m+n$ and $2 (m+n)$, the number of edges and timestamps needed to reverse the $\switch{C}_{m,n,o,p}$ switching is $m+n+o+p$ and $2 (m+n+o+p)$.
	There are at most $M$ choices for each edge, at most $4 B_D$ choices such that the edge is not simple, and at most $2 \Delta$ choices for each node already chosen such that the nodes are not distinct.
	Finally, there are at most $T$ and at least $T - (\Delta - 1)$ choices for each timestamp such that the edge does not exist in $G''$.
\end{proof}

\begin{proof}[Proof of \autoref{lem:st2-uniformity}]
	Using \autoref{lem:st2-bound-1}, \autoref{lem:st2-bound-2}, \autoref{lem:st2-bound-3}, and \autoref{lem:st2-bound-4} in a similar style argument as in the proof of \autoref{lem:st1-uniformity}, after removing a temporal double-edge between two nodes $v_1$ and $v_2$ in step $5$, a given graph $G' \in \mathcal{S}(\mathbf{W'})$ is produced with probability
	\begin{equation}
		\nonumber
		\hfill p(\switch{TD}_1) p_\switch{B}(\switch{I}) \frac{\underline{b}_{\switch{TD}_1}(\mathbf{W'}; 0) \underline{b}_{\switch{TD}_1}(\mathbf{W'}; 1) \underline{b}_{\switch{TD}_1}(\mathbf{W'}; 2)}{\overline{f}_{\switch{TD}_1}(\mathbf{W}) |\mathcal{S}(\mathbf{W})|} \hfill
	\end{equation}
	if $m_{1,2}(G') = 1$ and with probability 
	\begin{equation}
		\nonumber
		\hfill p(\switch{TD}_0) p_\switch{C}(\switch{I}) \frac{\underline{b}_{\switch{TD}_0}(\mathbf{W'}; 0) \underline{b}_{\switch{TD}_0}(\mathbf{W'}; 1) \underline{b}_{\switch{TD}_0}(\mathbf{W'}; 2)}{\overline{f}_{\switch{TD}_0}(\mathbf{W}) |\mathcal{S}(\mathbf{W})|} \hfill
	\end{equation}
	if $m_{1,2}(G') = 0$.
	Specifying the probabilities $p(\switch{TD}_1)$ and $p(\switch{TD}_0)$ as done for Stage 2 then suffices to equalize the probabilities over all graphs in $\mathcal{S}(\mathbf{W'})$.
\end{proof}

\noindent
We are now able to show the following.

\begin{lemma}
	\label{lem:t-gen-uniformity}
	Given a tuple $\mathbf{D} = (\mathbf{d}, T)$ as input, \textsc{T-Gen} outputs a uniform random sample $G \in \mathcal{G}(\mathbf{D})$.
\end{lemma}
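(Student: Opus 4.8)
The plan is to maintain throughout the execution the invariant that, conditioned on the current graph having a given temporal-multiplicity tensor $\mathbf{W}$, it is distributed uniformly in $\mathcal{S}(\mathbf{W})$. Since a simple temporal graph is exactly one whose tensor is $\mathbf{0}^{n \times n \times T}$, and $\mathcal{G}(\mathbf{D}) = \mathcal{S}(\mathbf{0}^{n \times n \times T})$, it suffices to show that the graph output by \textsc{T-Gen} is uniform in $\mathcal{S}(\mathbf{0}^{n \times n \times T})$. The whole argument is an induction over the switching steps that carries this per-class invariant from the output of the temporal configuration model down to the all-zero tensor, combined with a standard rejection-sampling argument to deal with the restarts.

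For the base case, recall from \autoref{thm:tcm-output-distribution} that the graph produced by the temporal configuration model is uniform in $\mathcal{S}(\mathbf{W}(G))$. The key observation is that the initial conditions of \autoref{subsec:initial-rejection} — the bounds $L \le B_L$, $D/2 \le B_D$, the absence of loops or temporal multi-edges of high temporal multiplicity, and the per-node caps $\lambda$ and $\kappa$ — are all functions of the tensor $\mathbf{W}(G)$ alone. Hence each class $\mathcal{S}(\mathbf{W})$ is either accepted or rejected in its entirety, so conditioning on passing the initial rejection and on the current tensor being $\mathbf{W}$ leaves the graph uniform in $\mathcal{S}(\mathbf{W})$. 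The two special cases are handled directly: if the initial multigraph is already simple it is output, and it is uniform in $\mathcal{S}(\mathbf{0}^{n \times n \times T}) = \mathcal{G}(\mathbf{D})$ by \autoref{thm:tcm-output-distribution}; and if the input fails the explicit requirements $M > 16\Delta^2 + 4\Delta + 2 B_L + 4 B_D$ or $T > \Delta - 1$, the algorithm reduces to plain rejection sampling from the temporal configuration model, which again yields the uniform distribution on $\mathcal{G}(\mathbf{D})$ by \autoref{thm:tcm-output-distribution}.

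For the inductive step I would process the iterations of Stage 1 and then Stage 2 one at a time. The loops (resp.\ temporal double-edges) are removed in the fixed order given by the permutation $\pi$, which is itself a function of $\mathbf{W}$, so the target tensor $\mathbf{W'}$ at the end of an iteration is determined by the starting tensor $\mathbf{W}$ and the iteration index; moreover every graph in $\mathcal{S}(\mathbf{W})$ carries the same specified non-simple edge, so the transition $\mathbf{W} \to \mathbf{W'}$ is well defined for the whole class at once. Applying \autoref{lem:st1-uniformity} to each iteration of Stage 1 and \autoref{lem:st2-uniformity} to each iteration of Stage 2 then propagates the invariant: if the graph entering an iteration with tensor $\mathbf{W}$ is uniform in $\mathcal{S}(\mathbf{W})$, the graph leaving it with tensor $\mathbf{W'}$ is uniform in $\mathcal{S}(\mathbf{W'})$, conditioned on that iteration not triggering a restart. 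After the last iteration of Stage 2 the graph contains no temporal single-loops and no temporal double-edges, and the initial conditions already exclude loops or temporal multi-edges of higher multiplicity, so its tensor is $\mathbf{0}^{n \times n \times T}$ and the graph is uniform in $\mathcal{S}(\mathbf{0}^{n \times n \times T}) = \mathcal{G}(\mathbf{D})$.

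Finally, every rejection (the f- and b-rejections, and the auxiliary $m \ge \mu$ checks) restarts \textsc{T-Gen} from scratch, so a completed run is the first successful trial of an i.i.d.\ sequence of attempts; the output distribution is therefore the conditional distribution given success, which inherits the uniformity established above on $\mathcal{S}(\mathbf{0}^{n \times n \times T})$. Termination with probability one follows from the success probability being positive, which for realizable $\mathbf{D}$ is guaranteed (quantitatively by \autoref{lem:st1-rejection} and \autoref{lem:st2-rejection} when the hypotheses of \autoref{thm:t-gen} hold, and by $\mathcal{G}(\mathbf{D}) \neq \emptyset$ together with the positive acceptance probability of the configuration model otherwise). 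I expect the main obstacle to be the stitching argument rather than any single computation: one must verify carefully that the initial rejection is measurable with respect to $\mathbf{W}$ (so the base case is genuinely class-wise uniform), that the fixed removal order makes the class-wise transition $\mathbf{W} \to \mathbf{W'}$ unambiguous, and that the global-restart semantics legitimately turns the per-iteration conditional uniformity of \autoref{lem:st1-uniformity} and \autoref{lem:st2-uniformity} into unconditional uniformity of the final output.
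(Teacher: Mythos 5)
Your proposal is correct and follows essentially the same route as the paper's proof: invoke \autoref{thm:tcm-output-distribution} for the initial graph, then propagate class-wise uniformity through the iterations via \autoref{lem:st1-uniformity} and \autoref{lem:st2-uniformity} until the tensor is $\mathbf{0}^{n \times n \times T}$, giving uniformity on $\mathcal{S}(\mathbf{0}^{n\times n\times T}) = \mathcal{G}(\mathbf{D})$. The points you flag as needing care (measurability of the initial rejection with respect to $\mathbf{W}$, the well-definedness of the transition $\mathbf{W} \to \mathbf{W'}$, and the restart-as-independent-trials argument) are exactly the steps the paper treats as straightforward and leaves implicit, so your write-up is simply a more explicit version of the same argument.
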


\begin{proof}
	If the initial graph $G$ is simple, then the claim follows by \autoref{thm:tcm-output-distribution}.
	Otherwise the initial graph $G$ is uniformly distributed in the set $\mathcal{S}(\mathbf{W}(G)) \subseteq \mathcal{M}(\mathbf{D})$ for some $\mathbf{W}(G) \neq \mathbf{0}^{n\times n \times T}$.
	If $G$ satisfies the initial conditions, then all entries $\mathbf{W}(G)_{i,i,t}$ are either $0$ or $1$ and all entries $\mathbf{W}(G)_{i,j,t}$ such that $i \neq j$ are either $0$ or $2$.
	Now, it is straightforward to check that each iteration of Stage 1 corresponds to a map $\mathcal{S}(\mathbf{W}) \to \mathcal{S}(\mathbf{W'})$ where $\mathbf{W'}$ is the tensor obtained from $\mathbf{W} = \mathbf{W}(G)$ by setting exactly one entry $\mathbf{W}_{i,i,t} = 1$ to $0$, and Stage 1 ends once all such entries have been set to $0$.
	Similarly, each iteration of Stage 2 corresponds to a map $\mathcal{S}(\mathbf{W}) \to \mathcal{S}(\mathbf{W'})$ where $\mathbf{W'}$ is the tensor obtained from $\mathbf{W}$ by setting exactly two entries $\mathbf{W}_{i,j, t} = \mathbf{W}_{j, i, t} = 2$ where $i \neq j$ to $0$, and Stage 2 ends once all such entries have been set to $0$.
	Thus, after Stage 1 and Stage 2 end, the final graph $G$ is a simple temporal graph with $\mathbf{W}(G) = \mathbf{0}^{n\times n \times T}$, and by \autoref{lem:st1-uniformity} and \autoref{lem:st2-uniformity}, this graph is uniformly distributed in $\mathcal{S}(\mathbf{0}^{n\times n \times T}) = \mathcal{G}(\mathbf{D})$ as claimed.
\end{proof}

\noindent
We move on to the run time proof.

\subsection{Runtime of \textsc{T-Gen}}
\label{apx-subsec:run-time-proofs}


The following additional results are needed for the proofs of \autoref{lem:edge-bound-probability}, \autoref{lem:st1-rejection}, \autoref{lem:st2-rejection}, \autoref{lem:st1-runtime}, and \autoref{lem:st2-runtime}.

\begin{lemma}
	\label{lem:st1-auxiliary}
	Given that $\Delta^{2+\epsilon} = O(M)$ for a constant $\epsilon > 0$ and $T - \Delta = \Omega(T)$, we have $p_\switch{A}(\switch{I}) = 1 - o(\Delta^{-1})$.
\end{lemma}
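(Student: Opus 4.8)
The plan is to first solve the defining recursion for $p_\switch{A}(\switch{I})$ in closed form and thereby reduce the claim to a tail bound on a single sum. Writing $r_{m,n} = \overline{f}_{\switch{A}_{m,n}}(\mathbf{W'}) / \underline{b}_{\switch{A}_{m,n}}(\mathbf{W'})$, the two defining relations for $P_\switch{A}$ give $p_\switch{A}(\switch{I}) = 1 - \sum_{(m,n)} p_\switch{A}(\switch{I})\, r_{m,n}$, hence $p_\switch{A}(\switch{I}) = 1/(1+S)$ where $S = \sum_{(m,n)\in\Theta_\switch{A}} r_{m,n}$. Since $p_\switch{A}(\switch{I}) = 1 - S/(1+S)$, it then suffices to show $S = o(\Delta^{-1})$, as this already forces $S = o(1)$ and hence $1 + S = 1 + o(1)$.

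Next I would bound $r_{m,n}$ uniformly over $\Theta_\switch{A}$. Substituting the stated expressions for $\overline{f}_{\switch{A}_{m,n}}$ and $\underline{b}_{\switch{A}_{m,n}}$ gives
\[ r_{m,n} = \Bigl[ \tfrac{\Delta^2}{M - 2B_L - 4B_D - 4(m+n+3)\Delta} \bigl(\tfrac{T}{T-(\Delta-1)}\bigr)^2 \Bigr]^{m+n}. \]
The hypothesis $T - \Delta = \Omega(T)$ makes $T - (\Delta - 1) = \Omega(T)$, so the $T$-factor is $O(1)$. For the denominator base, I would note that $B_L = M_2/M \le \Delta$, that $B_D = O(\Delta^2/T)$, and that $4(m+n+3)\Delta = O(\Delta^2)$ because $m,n < \Delta$; since $\Delta^{2+\epsilon} = O(M)$ yields $\Delta^2 = o(M)$, each subtracted term is $o(M)$ and the base equals $(1 - o(1))M$ uniformly in $(m,n)$. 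Hence $r_{m,n} \le q^{m+n}$ with $q = O(\Delta^2/M) = o(1)$.

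I would then perform the summation. The constraint $\max\{m,n\} \ge \mu$ forces $m+n \ge \mu$, and grouping the at most $s+1$ pairs with $m+n = s$ gives $S \le \sum_{s \ge \mu}(s+1)q^s$. Because $q = o(1)$ and $\mu = O(1)$, this tail is dominated by its leading term, so $S = O(q^\mu) = O((\Delta^2/M)^\mu)$, whence $\Delta S = O(\Delta^{2\mu+1}/M^\mu) = O\bigl((\Delta^{2+1/\mu}/M)^\mu\bigr)$.

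The final and most delicate step is the arithmetic confirming $\Delta S \to 0$: the requirement $\mu \ge 3 + 2/\epsilon$ implies $\mu > 1/\epsilon$, hence $2 + 1/\mu < 2 + \epsilon$, and combined with $\Delta^{2+\epsilon} = O(M)$ this gives $\Delta^{2+1/\mu} = o(M)$, so $\Delta S \to 0$ and $S = o(\Delta^{-1})$, establishing $p_\switch{A}(\switch{I}) = 1 - o(\Delta^{-1})$. I expect the main obstacle to lie in step two, namely verifying that the base $M - 2B_L - 4B_D - 4(m+n+3)\Delta$ remains $\Omega(M)$ simultaneously for every index in the range $0 \le m,n < \Delta$, since this is what lets a single geometric ratio $q$ control the whole sum; pinning down the precise inequality $\mu > 1/\epsilon$ needed to absorb the extra factor of $\Delta$ is the other point requiring care.
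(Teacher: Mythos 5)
Your proposal is correct and takes essentially the same route as the paper's proof: both reduce the claim to bounding $\sum_{m,n} \overline{f}_{\switch{A}_{m,n}}(\mathbf{W'})/\underline{b}_{\switch{A}_{m,n}}(\mathbf{W'})$, bound each ratio geometrically by $q^{m+n}$ with $q = O(\Delta^2/M)$ (using $T-\Delta = \Omega(T)$ and that all subtracted terms in the denominator are $O(\Delta^2) = o(M)$ uniformly), group the at most $k+1$ pairs with $m+n=k$, and sum the tail from $k = \mu$ onward, with the key inequality $\epsilon > 1/\mu$ absorbing the extra factor of $\Delta$. Your exact solution $p_\switch{A}(\switch{I}) = 1/(1+S)$ is only a cosmetic refinement of the paper's implicit use of $p_\switch{A}(\switch{I}) \leq 1$.
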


\begin{proof}
	Let $k = m + n$.
	Then, the probability of choosing a type $\mathsf{A}_{m,n}$ switching is at most
	\begin{equation}
		\nonumber
		\hfill p(\mathsf{A}_{m,n}) = p_\switch{A}(\switch{I}) \frac{\overline{f}_{\mathsf{A}_{m,n}}(\mathbf{W'})}{\underline{b}_{\mathsf{A}_{m,n}}(\mathbf{W'})} < \frac{\Delta^{2 k} T^{2 k}}{(M - B_L - B_D - 4 (k + 3) \Delta)^{k} (T - (\Delta - 1))^{2 k}}. \hfill
	\end{equation}
	Now, if $\Delta^{2+\epsilon} = O(M)$ and $T - \Delta = \Omega(T)$, then 
	\begin{equation}
		\nonumber
		\hfill \frac{\Delta^{2 k} T^{2 k}}{(M - B_L - B_D - 4 (k + 3) \Delta)^{k} (T - (\Delta - 1))^{2 k}} = O\left(\Delta^{-\epsilon k}\right) = o(\Delta^{-k / \mu}) \hfill
	\end{equation}
	by $B_L + B_D + 4 (k + 3) \Delta = O(\Delta^2) = o(M)$ and $\epsilon > \frac{1}{\mu}$.
	Thus, the type $\switch{I}$ switching is chosen with probability at least
	\begin{equation}
		\nonumber
		\hfill p_\switch{A}(\switch{I}) = 1 - \sum_{\substack{0 \leq m,n < \Delta \\ \mu \leq \max \{m,n\}}} p_\switch{A}(\mathsf{A}_{m,n}) > 1 -  \sum_{\mu \leq k < 2 \Delta} \binom{k+1}{1} o\left(\Delta^{-k / \mu}\right) = 1 - o\left(\Delta^{-1}\right)\hfill
	\end{equation}
	as claimed.
\end{proof}

\begin{lemma}
	\label{lem:st2-auxiliary}
	Given that $\Delta^{2+\epsilon} = O(M)$ for a constant $\epsilon > 0$ and $T - \Delta = \Omega(T)$, we have $p_\switch{B}(\switch{I}) = 1 - o(\Delta^{-1})$ and $p_\switch{C}(\switch{I}) = 1 - o(\Delta^{-1})$.
\end{lemma}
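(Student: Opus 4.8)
The plan is to mirror the proof of \autoref{lem:st1-auxiliary}, since the forward and backward bounds for the auxiliary switchings $\switch{B}_{m,n}$ and $\switch{C}_{m,n,o,p}$ have exactly the same product form as those for $\switch{A}_{m,n}$. I would handle $p_\switch{B}(\switch{I})$ first, which is essentially a verbatim copy of the $\switch{A}$ argument, and then $p_\switch{C}(\switch{I})$, where the only genuine difference is the larger parameter space.

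For $p_\switch{B}(\switch{I})$, I would set $k = m + n$ and substitute the definitions of $\overline{f}_{\switch{B}_{m,n}}$ and $\underline{b}_{\switch{B}_{m,n}}$ to get
\begin{equation}
	\nonumber
	p_\switch{B}(\switch{B}_{m,n}) = p_\switch{B}(\switch{I}) \frac{\overline{f}_{\switch{B}_{m,n}}(\mathbf{W'})}{\underline{b}_{\switch{B}_{m,n}}(\mathbf{W'})} < \frac{\Delta^{2k} T^{2k}}{(M - 4 B_D - 4(k+3)\Delta)^{k} (T - (\Delta - 1))^{2k}}.
\end{equation}
Using $4 B_D + 4(k+3)\Delta = O(\Delta^2) = o(M)$ together with $T - \Delta = \Omega(T)$, this right-hand side is $O(\Delta^{-\epsilon k}) = o(\Delta^{-k/\mu})$, where $\epsilon > 1/\mu$ holds because $\mu \ge 3 + 2/\epsilon$ forces $\epsilon\mu \ge 3\epsilon + 2 > 1$. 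Summing over $\Theta_\switch{B}$ grouped by $k$ (there are at most $k+1$ pairs with $m + n = k$, and the sum starts at $k = \mu$ since $\mu \le \max\{m,n\}$) then yields
\begin{equation}
	\nonumber
	p_\switch{B}(\switch{I}) = 1 - \sum_{\theta \in \Theta_\switch{B}} p_\switch{B}(\theta) > 1 - \sum_{\mu \le k < 2\Delta} (k+1)\, o(\Delta^{-k/\mu}) = 1 - o(\Delta^{-1}),
\end{equation}
the tail being dominated by its $k = \mu$ term. The only cosmetic change from \autoref{lem:st1-auxiliary} is the disappearance of the $B_L$ term, which is harmless as it was already absorbed into the $O(\Delta^2)$ slack.

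For $p_\switch{C}(\switch{I})$ I would repeat the identical estimate with $k = m + n + o + p$, obtaining $p_\switch{C}(\switch{C}_{m,n,o,p}) = o(\Delta^{-k/\mu})$ by the same substitution. I expect the one point needing care to be the combinatorics of the summation: the number of tuples $(m,n,o,p)$ with fixed sum $k$ is $\binom{k+3}{3}$ rather than $k+1$, so the tail becomes $\sum_{\mu \le k < 4\Delta} \binom{k+3}{3}\, o(\Delta^{-k/\mu})$. Since $\binom{k+3}{3}$ is only polynomial in $k$, this weighted geometric series is still governed by its $k = \mu$ term and remains $o(\Delta^{-1})$. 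This bookkeeping — checking that the wider range and larger per-level multiplicity for $\switch{C}$ do not spoil the $o(\Delta^{-1})$ bound — is the sole obstacle; everything else transfers directly from the Stage 1 argument.
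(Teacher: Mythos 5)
Your proposal is correct and is essentially the paper's own proof: the paper disposes of this lemma with the single line ``by a similar argument as in the proof of \autoref{lem:st1-auxiliary},'' and what you have written is exactly that argument carried out, with the right substitutions ($4B_D$ in place of $2B_L + 4B_D$ in the slack, and $\binom{k+3}{3}$ tuples per level for $\switch{C}_{m,n,o,p}$), both of which are absorbed harmlessly as you note.
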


\begin{proof}
	By a similar argument as in the proof of \autoref{lem:st1-auxiliary}.
\end{proof}

\begin{proof}[Proof of \autoref{lem:edge-bound-probability}]
	By \autoref{lem:tcm-kappa-lambda-eta}, the highest multiplicity of any ordinary multi-edge in the initial graph is at most $\eta = \lfloor 2 + 2 / \epsilon \rfloor$ with high probability.
	To complete the proof, we show that with high probability no two edges created due to switchings share the same node set, which in turn implies that the highest multiplicity of any ordinary multi-edge is $\eta + 1 = \lfloor 3 + 2 / \epsilon \rfloor := \mu$.
	
	First, observe that \autoref{lem:st1-auxiliary} and \autoref{lem:st2-auxiliary} imply that the probability of performing at least one auxiliary switching in either of Stage 1 or Stage 2 is at most $(B_L + B_D) o(\Delta^{-1}) = o(1)$.
	With the remaining probability, only $\switch{TL}$, $\switch{TD}_1$, and $\switch{TD}_0$ switchings are performed.
	The $\switch{TL}$ switching creates three edges with node sets $\{v_1, v_2\}$, $\{v_1, v_3\}$ and $\{v_4, v_5\}$ where $v_1$ is incident with a temporal single-loop, and where $v_2$ and $v_4$, $v_3$ and $v_5$ are determined by choosing the switching uniformly at random which implies that these nodes are incident with edges $(\{v_2, v_4\}, t_2)$, $(\{v_3, v_5\}, t_3)$ chosen uniformly at random among all edges which satisfy the conditions.
	Likewise, the $\switch{TD}_1$ switching creates three edges with node sets $\{v_1, v_3\}$, $\{v_2, v_4\}$ and $\{v_5, v_6\}$ where $v_1$ and $v_2$ are incident with a temporal double-edge, $v_3$ and $v_5$, $v_4$ and $v_6$ are incident with edges satisfying the conditions chosen uniformly at random, and the $\switch{TD}_0$ switching adds two such sets of edges.
	Now, observe that when performing a $\switch{TL}$ switching at some node $v_1$, there are at least $M - 2 B_L - 4 B_D - 2 \Delta$ choices for the edge $(\{v_2, v_4\}, t_2)$ and at least $M - 2 B_L - 4 B_D - 3 \Delta$ choices for the second edge $(\{v_3, v_5\}, t_3)$.
	Thus, the probability that a given node $v$ with degree $d$ takes on the role of one of $v_2$, $v_3$, $v_4$, $v_5$ in such a switching chosen uniformly at random is at most
	\begin{equation}
		\nonumber
		\hfill \frac{2d}{M - 2 B_L - 4 B_D - 2 \Delta} + \frac{2d}{M - 2 B_L - 4 B_D - 3 \Delta} = O\left(\frac{\Delta}{M}\right) \hfill
	\end{equation}
	due to $d \leq \Delta$ and $O(B_L + B_D + \Delta)= o(M) \implies M - O(B_L - B_D - \Delta) = \Omega(M)$.
	Similar calculations give the same asymptotic probability of a given node being involved in a $\switch{TD}_1$ or $\switch{TD}_0$ switching.
	
	This leads to the following bounds on the probability of creating two edges with the same node set in terms of the number of iterations of a stage performed.
	Fix any stage, and any node $v$, and let $i$ denote the number of iterations performed overall, and $j$ the number of iterations performed such that the non-simple edge removed is incident at $v$.
	Then, the number of edges created overall is $O(i)$ and the expected number of edges created which are incident at $v$ is $O(j + i \Delta / M)$.
	Thus, the probability that the next switching performed at $v$ creates an edge which shares the same node set as any edges created prior is $O(i \Delta^2 / M^2)$ for edges not incident at $v$, and $O(j \Delta / M + i \Delta^2 / M^2)$ for edges incident at $v$.
	
	Now, recall that by the initial conditions, there are at most $\lambda = O(1)$ incident temporal single-loops and at most $\kappa = O(1)$ incident temporal double-edges at any node, and at most $B_L = M_2 / M = O(\Delta)$ temporal single-loops and at most $B_D = M_2^2 / M T= O(\Delta^2 / T)$ temporal double-edges overall.
	Then, starting with Stage 1, the probability of creating two edges with share the same node set due to any of the $\switch{TL}$ switchings is at most
	\begin{align}
		\nonumber
		\sum_{1 \leq i \leq B_L} \sum_{1 \leq j \leq \lambda} \left(O\left(i \frac{\Delta^2}{M^2}\right) + O\left(j \frac{\Delta}{M} + i \frac{\Delta^2}{M^2}\right)\right) &< B_L \lambda \left(O\left(B_L \frac{\Delta^2}{M^2}\right) + O\left(\lambda \frac{\Delta}{M}\right)\right)
		\\
		\nonumber
		&= O\left(\frac{\Delta^4}{M^2} + \frac{\Delta^2}{M}\right)
		\\
		\nonumber
		&= o(1)
	\end{align}
	by $B_L = O(\Delta)$, $\lambda = O(1)$, and $\Delta^{2+\epsilon} = O(M)$.
	Likewise, for Stage 2, the probability of creating two edges with share the same node set due to any of the $\switch{TD}_1$ or $\switch{TD}_0$ switchings is at most
	\begin{gather}
	\begin{align}
		\nonumber
		\quad \quad \sum_{1 \leq i \leq B_D} &\sum_{1 \leq j \leq \kappa} \left(O\left(\left(B_L + i\right) \frac{\Delta^2}{M^2}\right) + O\left(\left(\lambda + j \right) \frac{\Delta}{M} + \left(B_L + i\right) \frac{\Delta^2}{M^2}\right)\right)
		\\
		\nonumber
		&< B_D \kappa \left(O\left(\left(B_L + B_D\right) \frac{\Delta^2}{M^2}\right) + O\left(\left(\lambda + \kappa \right) \frac{\Delta}{M}\right)\right)
		\\
		\nonumber
		&= O\left(\frac{\Delta^6}{T^2 M^2} + \frac{\Delta^5}{T M^2} + \frac{\Delta^3}{T M}\right) 
		\\
		\nonumber
		&= o(1)
	\end{align}
	\end{gather}
	by $B_D = O(\Delta^2 / T)$, $\kappa = O(1)$, $\Delta = O(T)$ and $\Delta^{2+\epsilon} = O(M)$.
\end{proof}

\begin{proof}[Proof of \autoref{lem:st1-rejection}]
	By \autoref{lem:st1-auxiliary} the probability that Stage 1 restarts in steps $9a-f$ is smaller than $B_L o(\Delta^{-1}) = o(1)$ and by \autoref{lem:edge-bound-probability} we can assume that the highest multiplicity of an edge in $G$ is at most $\mu$.
	Hence, the probability of not restarting is at most the probability of not f- or b-rejecting in steps $3$, $7$ or $10$ under this assumption.

	The probability of an f-rejection in step $3$ equals the probability of choosing edges $(\{v_2, v_4\}, t_2)$, $(\{v_3, v_5\}, t_3)$ and timestamps $t_4, t_5, t_6 \in [1, T]$ which do not fulfill the conditions defined for the $\switch{TL}$ switching.
	For each edge, this probability is at most $(2 B_L + 4 B_D + 3 \Delta) / M$ as there are at least $M$ choices for each (oriented) edge, at most $2 B_L$ choices for a loop, at most $4 B_D$ choices for an edge contained in a temporal double-edge and at most $3 \Delta$ choices for an edge such that $v_2, v_3, v_4, v_5$ is not distinct from $v_1$ or $v_4$ is not distinct from $v_5$.
	For the timestamps, we can assume that $m_{1,2}, m_{1,3}, m_{4,5} \leq \mu$ so for each timestamp the probability of a rejection is at most $(\mu - 1) / T$.
	Then, the probability of not f-rejecting in a given iteration is at least
	\begin{equation}
		\nonumber
		\hfill \frac{(M - 2 B_L - 4 B_D - 3 \Delta)^2 (T - (\mu - 1))^3}{M^2 T^3} = \left(1 - O\left(\frac{\Delta}{M}\right)\right)^2 \left(1 - O\left(\frac{\mu}{T}\right)\right)^3. \hfill
	\end{equation}
	In addition, by \autoref{lem:st1-bounds-2}, the probability of a b-rejection in step $7$ is at most $(\mu - 1) / T$ for each timestamp, and the probability of a b-rejection in step $10$ is at most $(2 B_L + 4 B_D + 4 \Delta) / M$.
	Thus, the probability of not b-rejecting in a given iteration is at least
	\begin{equation}
		\nonumber
		\hfill \frac{(M - 2 B_L - 4 B_D - 4 \Delta) (T - (\mu - 1))^2}{M T^2} = \left(1 - O\left(\frac{\Delta}{M}\right)\right) \left(1 - O\left(\frac{\mu}{T}\right)\right)^2. \hfill
	\end{equation}
	Finally, by the initial conditions there are at most $B_L = M_2 / M < \Delta$ iterations of Stage 1, and by using $\mu = O(1)$, the probability that the algorithm does not restart in Stage 1 is at least
	\begin{equation}
		\nonumber
		\hfill \left(\left(1 - O\left(\frac{\Delta}{M}\right)\right)^2 \left(1 - O\left(\frac{\mu}{T}\right)\right)^3\right)^{B_L} = \exp \left(- O\left(\frac{\Delta^2}{M}\right) - O\left(\frac{\Delta}{T}\right) \right). \hfill \qedhere \qed
	\end{equation}
\end{proof}

\begin{proof}[Proof of \autoref{lem:st1-runtime}]
	By the initial conditions, there are at most $B_L = M_2 / M < \Delta$ iterations of Stage 1.
	Thus, the claim follows if an iteration runs in expected time $O(\Delta)$.
	The steps of an iteration which require attention are the f-rejection in step $3$, the b-rejections in step $7$ and $10$, choosing an auxiliary switching in steps $8-9$, the f-rejection in step $9c$, and the b-rejection in step $9f$.
	
	
	First, note that as observed by \cite{DBLP:journals/jal/McKayW90}, there is a simple trick to implement an f-rejection step at little additional cost.
	In the case of step $3$, it suffices to choose the two edges and three timestamps required for the $\switch{TL}$ switching uniformly at random and restart if those choices do not satisfy the conditions given in \autoref{def:tl-switching}.
	Then, since there are $\overline{f}_\switch{TL}(\mathbf{W}) = M^2 T^3$ ways to choose two (oriented) edges and three timestamps and $f_\switch{TL}(G)$ choices yield a switching which can be performed on the current graph $G$, we restart with the desired probability of $f_\switch{TL}(G) / \overline{f}_\switch{TL}(\mathbf{W})$.
	
	The b-rejections in step $7$ and $10$ require computing the quantities $b_\switch{TL}(G', v_1 v_2 v_3 v_4 v_5; 2)$, and $b_\switch{TL}(G', v_1 v_2 v_3; 1)$, $b_\switch{TL}(G', v_1; 0)$.
	Computing $b_\switch{TL}(G', v_1 v_2 v_3 v_4 v_5; 2)$ only requires look ups of the multiplicities of $\{v_2, v_4\}$ and $\{v_3, v_5\}$ which  take time $O(1)$ if the implementation maintains a data structure to store the multiplicity of the edges in the graph.
	To compute $b_\switch{TL}(G', v_1 v_2 v_3; 1)$, it suffices to iterate trough the lists of incident edges at nodes $v_1$, $v_2$, $v_3$ and subtract the number of simple edges which collide with those nodes from the total number of simple edges, which takes time $O(\Delta)$.
	Additionally, computing $b_\switch{TL}(G', v_1; 0)$ can be implemented in time $O(1)$ by maintaining the number of simple edges incident at each node.
	
	Computing the type distribution for steps $8-9$ naively would take time $\Theta(\Delta^2)$, but there is a simple trick to speed-up the computation.
	Re-purposing the proof of \autoref{lem:st1-auxiliary}, we see that
	\begin{equation}
		\nonumber
		\hfill \sum_{\substack{0 \leq m,n < \Delta \\ \mu \leq \max\{m, n\} \\ m+n = k}} p_\switch{A}(\switch{A}_{m,n}) < \binom{k+1}{1} p_{\switch{A}}(\switch{A}_{k,0}) < \binom{k+1}{1} \frac{\overline{f}_{\switch{A}_{k,0}}(\mathbf{W})}{\underline{b}_{\switch{A}_{k,0}}(\mathbf{W'})} =: \overline{p}_\switch{A}(\switch{A}_{k}). \hfill
	\end{equation}
	Thus, in time $O(\Delta)$, we can compute
	\begin{equation}
		\nonumber
		\hfill \underline{p}_\switch{A}(\switch{I}) := 1 - \sum_{\mu \leq k < 2 \Delta - 1} \overline{p}_\switch{A}(\switch{A}_{k}) \hfill
	\end{equation}
	as a lower bound on the probability of performing the identity switching, and with this probability step $9$ can be skipped.
	Otherwise, if one of the auxiliary switchings $\switch{A}_{m,n}$ where $m + n = k$ is chosen, then we compute the exact probabilities of these at most $k + 1 = O(\Delta)$ switchings, and either choose a switching in accordance with the exact probabilities or restart the algorithm with probability
	\begin{equation}
		\nonumber
		\hfill 1 - \sum_{\substack{0 \leq m,n < \Delta \\ \mu \leq \max\{m, n\} \\ m+n = k}} \frac{p_\switch{A}(\switch{A}_{m,n})}{\overline{p}_\switch{A}(\switch{A}_{k})} \hfill
	\end{equation}
	to correct for the overestimate.
	
	Finally, by \autoref{lem:st1-auxiliary} the probability that steps $9a-f$ are executed is at most $o(\Delta^{-1})$, so it suffices if these steps can be implemented in time $O(\Delta^2)$.
	For the f-rejection in step $9c$, we first pick the incident edges of $v_2$ and $v_4$ and the timestamps uniformly at random, and then restart if these choices do not satisfy the conditions in step \autoref{def:amn-switching}.
	This results in a restart probability of $f_{\switch{A}_{m,n}}(G') / (d_2^m d_4^n T^{2 (m+n)})$, and to reach the desired probability of $f_{\switch{A}_{m,n}}(G') / \overline{f}_{\switch{A}_{m,n}}(\mathbf{W'})$, it suffices to restart with probability $1 - (d_2^m d_4^n T^{2 (m+n)}) / (\Delta T)^{2 (m+n)}$ afterwards.
	To implement the b-rejection in step $9f$, we need to compute $b_{\switch{A}_{m,n}}(G'', v_1 v_2 v_3 v_4 v_5)$ which is the number of $\switch{A}_{m,n}$ switchings which can produce the graph $G''$.
	As this is equal to the number of ways to reverse an $\switch{A}_{m,n}$ switching on the graph $G''$, we can compute this quantity as the number of choices for $m$ and $n$ edges between $v_2, v_4$ and $v_3, v_5$, respectively, $m + n$ additional edges, and $2 (m+n)$ timestamps which satisfy the conditions.
	Computing the number of choices for the edges between $v_2, v_4$ and $v_3, v_5$ can be done in time $O(1)$ by looking up the number of simple edges between these nodes.
	Choosing an additional edge, and checking if it satisfies the conditions, e.g. if it is simple, and does not share nodes with the other edges, takes time $O(1)$ for the first edges chosen and time $O(\Delta)$ for the last edges, so for all $m + n < 2 \Delta - 1$ edges this takes at most time $O(\Delta^2)$.
	Finally, computing the number of available timestamps again only requires looking up the multiplicities of $2 (m+n) = O(\Delta)$ edges and takes time $O(\Delta)$.
\end{proof}

\begin{proof}[Proof of \autoref{lem:st2-rejection}]
	By \autoref{lem:st2-auxiliary} the probability that Stage 2 restarts in steps $6ei-vi$ or $7ei-vi$ is smaller than $B_D o(\Delta^{-1}) = o(1)$ and by \autoref{lem:edge-bound-probability} we can assume that the highest multiplicity of an edge in $G$ is at most $\mu$.
	Hence, the probability of not restarting is at most the probability of not f- or b-rejecting in steps $4$, $6c$, $6f$, $7c$ or $7f$ under this assumption.
	Furthermore, it is straightforward to check that the probability of f- or b-rejecting in a given iteration is larger if $\theta = \switch{TD}_0$ so we focus on this case.
	
	Using a similar argument as for \autoref{lem:st1-rejection}, the probability of not f-rejecting in step $3$ of a given iteration is at least
	\begin{equation}
		\nonumber
		\hfill \frac{(M - 4 B_D - 5 \Delta)^4 (T - (\mu - 1))^6}{M^4 T^6} = \left(1 - O\left(\frac{\Delta}{M}\right)\right)^4 \left(1 - O\left(\frac{\mu}{T}\right)\right)^6. \hfill
	\end{equation}
	In addition, by \autoref{lem:st2-bound-2} and \autoref{lem:st2-bound-3}, the probability of a b-rejection in step $6c$ or $7c$ is at most $(\mu - 1) / T$ for each timestamp, and the probability of a b-rejection in step $6f$ or $7f$ is at most $(4 B_D + 4 \Delta) / M$.
	Then, the probability of not b-rejecting in a given iteration is at least
	\begin{equation}
		\nonumber
		\hfill \frac{(M - 4 B_D - 4 \Delta)^2 (T - (\mu - 1))^4}{M^2 T^4} = \left(1 - O\left(\frac{\Delta}{M}\right)\right)^2 \left(1 - O\left(\frac{\mu}{T}\right)\right)^4. \hfill
	\end{equation}
	Finally, by the initial conditions there are at most $B_D = M_2^2 / M T < \Delta^2 / T$ iterations of Stage 2, and thus the probability that the algorithm does not restart in Stage 2 is at least
	\begin{equation}
		\nonumber
		\hfill \left(\left(1 - O\left(\frac{\Delta}{M}\right)\right)^4 \left(1 - O\left(\frac{\mu}{T}\right)\right)^6\right)^{B_D} = \exp \left(- O\left(\frac{\Delta^3}{M T}\right) - O\left(\frac{\Delta^2}{T^2}\right) \right). \hfill \qedhere \qed
	\end{equation}
\end{proof}

\begin{proof}[Proof of \autoref{lem:st2-runtime}]
	By the initial conditions, there are at most $B_D = M_2^2 / M T < \Delta^2 / T$ iterations of Stage 2, so the claim follows if an iteration runs in expected time $O(\Delta)$.
	We focus on the f-rejection in step $4$, the b-rejections in steps $6c$, $6f$ and $7c$, $7f$, choosing an auxiliary switching in steps $6d-e$ and $7d-e$, the f-rejections in step $6eiii$ and $7eiii$, and the b-rejections in step $6ev$ and $7ev$.
	
	To implement the f-rejection in step $4$, it again suffices to choose the two (or four) edges and three (or six) timestamps needed for the $\switch{TD}_1$ or $\switch{TD}_0$ switching uniformly at random and restart if those choices do not satisfy the conditions given in \autoref{def:td1-switching}.
	By similar arguments as in the proof of \autoref{lem:st1-runtime}, the quantities needed for the b-rejections in step $6c$, $6f$ and $7c$, $7f$ can be computed in time $O(\Delta)$.
	
	To choose an auxiliary switching in steps $6d-e$ the method described in the proof of \autoref{lem:st1-runtime} can be re-used.
	For steps $7d-e$, define
	\begin{equation}
		\nonumber
		\hfill \sum_{\substack{0 \leq m,n,o,p \leq \Delta \\ \mu < \max\{m, n,o,p\} \\ m+n+o+p = k}} p_\switch{C}(\switch{C}_{m,n,o,p}) < \binom{k+1}{3} \frac{\overline{f}_{\switch{C}_{k,0,0,0}}(\mathbf{W})}{\underline{b}_{\switch{C}_{k,0,0,0}}(\mathbf{W'})} =: \overline{p}_\switch{C}(\switch{C}_{k}). \hfill
	\end{equation}
	Then, we can compute 
	\begin{equation}
		\nonumber
		\hfill \underline{p}_\switch{C}(\switch{I}) := 1 - \sum_{\mu \leq k < 4 \Delta - 3} \overline{p}_\switch{C}(\switch{C}_{k}) \hfill
	\end{equation}
	in time $O(\Delta)$.
	To efficiently choose one of the switchings where $m + n + o + p = k$, observe that $p_\switch{C}(\switch{C}_{m,n,o,p})$ only depends on $k$ and not the individual choices of $m,n,o,p$, thus, it suffices to compute the exact probability for one of the switchings, and the exact number of switchings, and then follow the steps described in the proof of \autoref{lem:st1-runtime}.
	
	By \autoref{lem:st2-auxiliary} the probability that steps $6ev$ or $7ev$ are executed is at most $o(\Delta^{-1})$, so it suffices if these steps can be implemented in time $O(\Delta^2)$.
	This is possible by following the steps described towards the end of the proof of \autoref{lem:st1-runtime} with the necessary modifications.
\end{proof}

\noindent
The following shows the efficiency of \textsc{T-Gen}.

\begin{lemma}
	\label{lem:t-gen-runtime}
	\textsc{T-Gen} runs in expected time $O(M)$ for a tuple $\mathbf{D} = (\mathbf{d}, T)$ which satisfies $\Delta^{2+\epsilon} = O(M)$ for a constant $\epsilon > 0$ and $T - \Delta = \Omega(T)$.
\end{lemma}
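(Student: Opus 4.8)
The plan is to view a single run of \textsc{T-Gen} as a sequence of independent and identically distributed \emph{trials}: each trial samples a fresh initial multigraph via the temporal configuration model, checks the initial conditions, and, if they hold, passes through Stage 1 and Stage 2, terminating either in a restart (which discards everything and begins a new trial) or in a successfully output simple graph. Since restarts return to the very beginning, the trials are i.i.d., so the number $N$ of trials performed is geometric with parameter $p$, the per-trial success probability. Writing $C_i$ for the cost of trial $i$, the event $\{N \ge i\}$ depends only on the outcomes of trials $1,\dots,i-1$ and is therefore independent of $C_i$, so I would invoke Wald's identity to obtain an expected total running time of $\mathbb{E}[N]\,\mathbb{E}[C] = \mathbb{E}[C]/p$. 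It then suffices to establish $\mathbb{E}[C] = O(M)$ and $p = \Omega(1)$.

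First I would bound the expected cost of one trial. Sampling the initial multigraph and building the supporting structures (adjacency lists, an edge-multiplicity table, the counts $L$ and $D$, and the per-node loop and double-edge counts) takes $O(M)$ time, and verifying the initial conditions is then a single $O(M)$ pass. If the conditions hold, the trial runs Stage 1 and Stage 2, whose expected running times are $O(\Delta^2)$ and $O(\Delta^3/T)$ by \autoref{lem:st1-runtime} and \autoref{lem:st2-runtime}. Under the hypotheses, $\Delta^{2+\epsilon} = O(M)$ gives $\Delta^2 = O(M^{2/(2+\epsilon)}) = o(M)$, and $T - \Delta = \Omega(T)$ gives $\Delta = O(T)$, hence $\Delta^3/T = \Delta^2(\Delta/T) = O(\Delta^2) = O(M)$. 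Summing the contributions yields $\mathbb{E}[C] = O(M)$.

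Next I would show $p = \Omega(1)$ by lower-bounding the probability that the initial conditions hold and neither stage restarts, which is a sub-event of success. For the initial conditions, I would use \autoref{lem:tcm-temporal-loops-multi-edges}: the expected values of $L$ and $D/2$ are a constant factor below the thresholds $B_L = M_2/M$ and $B_D = M_2^2/(M^2T)$, so Markov's inequality bounds $\Pr[L > B_L]$ and $\Pr[D/2 > B_D]$ each away from $1$, while the same lemma gives the absence of temporal double-loops and triple-edges with probability $1-o(1)$, and \autoref{lem:tcm-kappa-lambda-eta} gives the per-node bounds $\lambda$ and $\kappa$ with probability $1-o(1)$; intersecting these events keeps the probability $\Omega(1)$. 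Conditioned on the conditions holding, \autoref{lem:st1-rejection} and \autoref{lem:st2-rejection} bound the no-restart probabilities by $\exp(-O(\Delta^2/M)-O(\Delta/T))$ and $\exp(-O(\Delta^3/MT)-O(\Delta^2/T^2))$; since $\Delta^2/M$ and $\Delta^3/(MT)$ are $o(1)$ while $\Delta/T$ and $\Delta^2/T^2$ are $O(1)$, both factors are $\Omega(1)$. Multiplying these three $\Omega(1)$ contributions gives $p = \Omega(1)$, and combining with $\mathbb{E}[C]=O(M)$ through Wald's identity yields the claimed expected time $O(M)$.

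I expect the main obstacle to be the lower bound on $p$. Because $T-\Delta$ is only assumed to be $\Omega(T)$ rather than $T - o(T)$, the ratios $\Delta/T$ and $\Delta^2/T^2$ do not vanish, so the per-stage no-restart probabilities are merely bounded away from $0$ instead of tending to $1$; consequently the argument can only deliver a \emph{constant} success probability (hence $O(1)$ expected trials), not a with-high-probability guarantee, and care is needed to confirm that this weaker conclusion still suffices. A secondary point is that a trial's cost is correlated with whether it succeeds, which is precisely why Wald's identity, rather than a naive product of the expected per-trial cost and the expected number of trials, is the correct tool for combining the two bounds.
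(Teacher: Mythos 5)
Your proposal is correct and follows essentially the same route as the paper's proof: bound the per-trial cost by $O(M)$ (configuration model plus the $O(\Delta^2) + O(\Delta^3/T) = O(M)$ stage costs from \autoref{lem:st1-runtime} and \autoref{lem:st2-runtime}), and show the per-trial success probability is $\Omega(1)$ via \autoref{lem:tcm-temporal-loops-multi-edges}, \autoref{lem:tcm-kappa-lambda-eta}, \autoref{lem:st1-rejection}, and \autoref{lem:st2-rejection}, so that only $O(1)$ restarts are expected. Your additional care (Wald's identity for the restart structure, and the explicit Markov argument for $L \leq B_L$, $D/2 \leq B_D$, which implicitly needs the constants inside the proof of \autoref{lem:tcm-temporal-loops-multi-edges} to be below $1$) is a formalization of steps the paper treats tersely, not a different argument.
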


\begin{proof}
	By \autoref{lem:tcm-temporal-loops-multi-edges} and \autoref{lem:tcm-kappa-lambda-eta} the algorithm restarts at most $O(1)$ times before finding a multigraph which satisfies the initial conditions.
	In addition, if the initial multigraph satisfies the conditions, then by \autoref{lem:st1-rejection} and \autoref{lem:st2-rejection}, the algorithm restarts at most $O(1)$ times during Stage 1 and Stage 2.
	The run time of the temporal configuration model is $O(M)$, and if the initial multigraph satisfies the conditions, then by \autoref{lem:st1-runtime} and \autoref{lem:st2-runtime}, the combined runtime of both Stage 1 and Stage 2 is $O(\Delta^3 / T) + O(\Delta^2) = O(M)$.
\end{proof}

\subsection{Proof of \autoref{thm:t-gen}}

\begin{proof}[Proof of \autoref{thm:t-gen}]
	The claim follows by \autoref{lem:t-gen-uniformity} and \autoref{lem:t-gen-runtime}.
\end{proof}

\section{Future Directions}
It would be interesting to loosen the condition on the maximum degree $\Delta$ in terms of the lifetime $T$.
Additionally, there are other parameters which could be incorporated into the model.
For instance, it could be useful to fix the number of edges assigned to each timestamp, or to specify desired multiplicities for the edges.


\bibliography{tgs-bibliography}

\begin{thebibliography}{10}

\bibitem{DBLP:journals/jal/McKayW90}
B.~D. McKay and N.~C. Wormald.
\newblock Uniform generation of random regular graphs of moderate degree.
\newblock {\em J. Algorithms}, 11(1), 1990.

\bibitem{gotelli1996null}
N.~J. Gotelli and G.~R. Graves.
\newblock {\em Null models in ecology}.
\newblock Smithsonian Institution, 1996.

\bibitem{DBLP:journals/im/BlitzsteinD11}
J.~K. Blitzstein and P.~Diaconis.
\newblock A sequential importance sampling algorithm for generating random
  graphs with prescribed degrees.
\newblock {\em Internet Math.}, 6(4):489--522, 2011.

\bibitem{DBLP:journals/iandc/SinclairJ89}
A.~Sinclair and M.~Jerrum.
\newblock Approximate counting, uniform generation and rapidly mixing markov
  chains.
\newblock {\em Inf. Comput.}, 82(1):93--133, 1989.

\bibitem{holme2012temporal}
P.~Holme and J.~Saram{\"a}ki.
\newblock Temporal networks.
\newblock {\em Physics reports}, 519(3):97--125, 2012.

\bibitem{doi:10.1080/17445760.2012.668546}
A.~Casteigts, P.~Flocchini, W.~Quattrociocchi, and N.~Santoro.
\newblock Time-varying graphs and dynamic networks.
\newblock {\em International Journal of Parallel, Emergent and Distributed
  Systems}, 27(5):387--408, 2012.

\bibitem{barabasi2005origin}
A.-L. Barab\'{a}si.
\newblock The origin of bursts and heavy tails in human dynamics.
\newblock {\em Nature}, 435(7039):207--211, 2005.

\bibitem{HARDER2006329}
U.~Harder and M.~Paczuski.
\newblock Correlated dynamics in human printing behavior.
\newblock {\em Physica A: Statistical Mechanics and its Applications},
  361(1):329--336, 2006.

\bibitem{havel1955remark}
V.~Havel.
\newblock A remark on the existence of finite graphs.
\newblock {\em Casopis Pest., Mat.}, 80:477--480, 1955.

\bibitem{hakimi1962realizability}
S.~L. Hakimi.
\newblock On realizability of a set of integers as degrees of the vertices of a
  linear graph. i.
\newblock {\em Journal of the Society for Industrial and Applied Mathematics},
  10(3):496--506, 1962.

\bibitem{DBLP:journals/jct/BenderC78}
E.~A. Bender and E.~R. Canfield.
\newblock The asymptotic number of labeled graphs with given degree sequences.
\newblock {\em J. Comb. Theory, Ser. {A}}, 24(3):296--307, 1978.

\bibitem{DBLP:journals/ejc/Bollobas80}
B.~Bollob{\'{a}}s.
\newblock A probabilistic proof of an asymptotic formula for the number of
  labelled regular graphs.
\newblock {\em Eur. J. Comb.}, 1(4):311--316, 1980.

\bibitem{DBLP:journals/jal/Wormald84}
N.~C. Wormald.
\newblock Generating random regular graphs.
\newblock {\em J. Algorithms}, 5(2):247--280, 1984.

\bibitem{mckay1985asymptotics}
B.~D. McKay.
\newblock Asymptotics for symmetric 0-1 matrices with prescribed row sums.
\newblock {\em Ars Combinatoria}, 19:15--25, 1985.

\bibitem{DBLP:journals/rsa/ArmanGW21}
A.~Arman, P.~Gao, and N.~C. Wormald.
\newblock Fast uniform generation of random graphs with given degree sequences.
\newblock {\em Random Struct. Algorithms}, 59(3):291--314, 2021.

\bibitem{DBLP:journals/siamcomp/GaoW17}
P.~Gao and N.~C. Wormald.
\newblock Uniform generation of random regular graphs.
\newblock {\em {SIAM} J. Comput.}, 46(4):1395--1427, 2017.

\bibitem{DBLP:conf/soda/GaoW18}
P.~Gao and N.~C. Wormald.
\newblock Uniform generation of random graphs with power-law degree sequences.
\newblock In Artur Czumaj, editor, {\em Proceedings of the Twenty-Ninth Annual
  {ACM-SIAM} Symposium on Discrete Algorithms, {SODA} 2018, New Orleans, LA,
  USA, January 7-10, 2018}, pages 1741--1758. {SIAM}, 2018.

\bibitem{chung2002connected}
F.~Chung and L.~Lu.
\newblock Connected components in random graphs with given expected degree
  sequences.
\newblock {\em Annals of Combinatorics}, 6(2), 2002.

\bibitem{DBLP:conf/waw/MillerH11}
J.~C. Miller and A.~A. Hagberg.
\newblock Efficient generation of networks with given expected degrees.
\newblock In Alan~M. Frieze, Paul Horn, and Pawel Pralat, editors, {\em
  Algorithms and Models for the Web Graph - 8th International Workshop, {WAW}
  2011, Atlanta, GA, USA, May 27-29, 2011. Proceedings}, pages 115--126, 2011.

\bibitem{DBLP:journals/tcs/JerrumS90}
M.~Jerrum and A.~Sinclair.
\newblock Fast uniform generation of regular graphs.
\newblock {\em Theor. Comput. Sci.}, 73(1):91--100, 1990.

\bibitem{rao1996}
A.~R. Rao, R.~Jana, and S.~Bandyopadhyay.
\newblock A {M}arkov {C}hain {M}onte {C}arlo method for generating random $(0,
  1)$-matrices with given marginals.
\newblock {\em Sankhyā: The Indian J. of Statistics A}, 58, 1996.

\bibitem{DBLP:journals/corr/Zhao13b}
J.~Y. Zhao.
\newblock Expand and contract: Sampling graphs with given degrees and other
  combinatorial families.
\newblock {\em CoRR}, abs/1308.6627, 2013.

\bibitem{DBLP:journals/ejc/ErdosGMMSS22}
P.~L. Erd{\"{o}}s, C.~S. Greenhill, T.~R. Mezei, I.~Mikl{\'{o}}s,
  D.~Solt{\'{e}}sz, and L.~Soukup.
\newblock The mixing time of switch markov chains: {A} unified approach.
\newblock {\em Eur. J. Comb.}, 99:103421, 2022.

\bibitem{greenhillsurvey}
C.~Greenhill.
\newblock Generating graphs randomly.
\newblock {\em Surveys in Combinatorics 2021}, London Mathematical Society
  Lecture Notes Series 470:133--186, 2021.

\end{thebibliography}




\end{document}